\numberwithin{equation}{section}
\theoremstyle{plain}
\newtheorem{thm}{\protect\theoremname}[section]
  \theoremstyle{definition}
  \newtheorem{defn}[thm]{\protect\definitionname}
  \theoremstyle{plain}
  \newtheorem{prop}[thm]{\protect\propositionname}
  \theoremstyle{plain}
  \newtheorem{fact}[thm]{\protect\factname}
  \theoremstyle{remark}
  \newtheorem{rem}[thm]{\protect\remarkname}
 \date{}
  \providecommand{\definitionname}{Definition}
  \providecommand{\factname}{Fact}
  \providecommand{\propositionname}{Proposition}
  \providecommand{\remarkname}{Remark}
\providecommand{\theoremname}{Theorem}
\begin{document}

\title{Time-Inconsistent Mean-Utility \\
Portfolio Selection with Moving Target}

\author{Hanqing Jin%
\thanks{Mathematical Institute and Nomura Centre for Mathematical Finance,
University of Oxford, Woodstock Road, Oxford OX2 6GG and Oxford\textendash{}Man
Institute, Eagle House, Walton Well Road, Oxford OX2 6ED. %
} $\mbox{ \ensuremath{}\ensuremath{}}$and Yimin Yang%
\thanks{$ $Mathematical Institute, University of Oxford, Woodstock Road,
Oxford OX2 6GG.%
}}

\maketitle
\begin{spacing}{3}
\begin{center}
\textbf{\large{Abstract}}
\par\end{center}{\large \par}
\end{spacing}

\begin{onehalfspace}
\textcolor{black}{In this paper, we solve the time inconsistent portfolio
selection problem by using different utility functions with a moving
target as our constraint. We solve this problem by finding an equilibrium
control under the definition given as our optimal control. We derive
a sufficient equilibrium condition for $\mbox{C}^{2}$ utility funtions
and use power functions of order two, three and four in our problem,
and find the respective condtions for obtaining an equilibrium for
our different problems. In the last part of the paper, we also consider
use another definition of equilibrium to solve our problem when the
utility function we use in our problem is $x^{-}$ and also find the
condtions for obtaining an equilibrium.}\\
\\
\textbf{Key words: }time inconsistency, stochastic control, equilibrium
control, forward-backward stochastic differential equation, utility
function, portfolio selection
\end{onehalfspace}

\newpage{}\tableofcontents{}\newpage{}

\section{Introduction}

\begin{singlespace}
\noindent Stochastic control is now a mature and well established
subject of study. Here we quote as an introduction a summary of the
studies in time inconsistent control problems from Hu, Jin and Zhou\cite{key-1}
as follows. In the study of stochastic control, though not explicitly
stated at most of the times, a standard assumption is time consistency,
which is a fundamental property of conditional expectation with respect
to a progressive filtration. As a result of that, an optimal control
viewed from today will remain optimal viewed from tomorrow. Time consistency
provides the theoretical foundation of the dynamic programming approach
including the resulting HJB equation, which is in turn a pillar of
the modern stochastic control theory. However, there are overwhelmingly
more time inconsistent problems than their time consistent counterparts.
Hyperbolic discounting Ainslie\cite{key-7} and continuous-time mean\textendash{}variance
portfolio selection model Basak Chabakauri\cite{key-8}, Zhou and
Li\cite{key-9} provide two well-known examples of time inconsistency.
Probability distortion, as in behavioral finance models Jin and Zhou\cite{key-10},
is yet another distinctive source of time inconsistency. Motivated
by practical applications especially in mathematical finance, time
inconsistent control problems have recently attracted considerable
research interest and efforts attempting to seek equilibrium controls
instead of optimal controls. At a conceptual level, the idea is that
a decision the controller makes at every instant of time is considered
as a game against all the decisions the future incarnations of the
controller are going to make. An \textquotedblleft{}equilibrium\textquotedblright{}
control is therefore the one that any deviation from it at any time
instant will be worse off. Taking this game perspective, Ekeland and
Lazrak\cite{key-11} approach the deterministic time inconsistent
optimal control, and Bjork and Murgoci\cite{key-3} extends the idea
to the stochastic setting, derive an albeit very complicated HJB equation,
and apply the theory to a dynamic Markowitz problem. Yong\cite{key-12}
investigate a time inconsistent deterministic linear quadratic control
problem and derive equilibrium controls via some integral equations.
However, the study of time inconsistent control is still in its infancy
in general.\\

\end{singlespace}

\noindent \textcolor{black}{In this paper, we solve the time inconsistent
portfolio selection problem by using different utility functions with
a moving target as our constraint. We solve this problem by finding
an equilibrium control under the given definition as our optimal control.
Firstly, we derive a sufficient equilibrium condition for $\mbox{C}^{2}$
utility funtions. Then we use power functions of order two, three
and four in our problems and find the respective condtions for obtaining
an equilibrium for our different problems. In the last part of the
paper, we consider use another definition of equilibrium to solve
our problem when the utility function that we use in our problem is
$x^{-}$ and we also find the condtions for obtaining an equilibrium
for this problem.}\\

\begin{singlespace}
\noindent The structure of this paper is as follows. In section 2,
we set the problems that we want to study and give the definition
of equilibrium when we use our power utility functions in our problem.
We also prove a sufficient equilibrium condition for $\mbox{C}^{2}$
utility functions in this section. In section 3, we use $h\left(x\right)=\frac{X^{2}}{2}$
as the utility functions in our problem and find the conditions for
obtaining an equilibrium for our problem by setting a deterministic
process taking the position as a Lagrangian multiplier. In sections
4 and 5 we use $h\left(x\right)=-\frac{X^{3}}{3}$ and $\frac{x^{4}}{4}$
in our problem respectively, and we directly transform our problem
into a simpler form and find the conditions for obtaining an equilibrium
for our problem. In section 6 we use $h\left(x\right)=x^{-}$ in our
problem. we give a different definition of equilibrium that we work
on in this section, and as usual we find the conditions for obtaining
an equilibrium. Finally we make some remarks in section 7.
\end{singlespace}

\newpage{}

\section{Problems setting }

\subsection{The notations }

The notations used in this paper are listed as follows
\begin{align*}
\begin{cases}
L_{\mathcal{F}}^{\infty}\left(t,T;\mathbb{R}^{l}\right): & \mbox{the set of essentially bounded \ensuremath{\left\{  \mathcal{F}_{s}\right\} } }_{s\in\left[t,T\right]}\mbox{-adapted processes.}\\
L_{\mathcal{F}}^{2}\left(t,T;\mathbb{R}^{l}\right): & \mbox{the set of \ensuremath{\left\{  \mathcal{F}_{s}\right\} } }_{s\in\left[t,T\right]}\mbox{-adapted processes }\\
 & f=\left\{ f_{s}:t\leq s\leq T\right\} \mbox{ with }\mathbb{E}\left[\int_{t}^{T}\left|f_{s}\right|^{2}ds\right]<\infty\\
L_{\mathcal{G}}^{2}\left(\Omega;\mathbb{R}^{l}\right): & \mbox{the set of random variables \ensuremath{\xi:\left(\Omega,\mathcal{G}\right)\rightarrow\left(\mathbb{R}^{l},\mathcal{B}\left(\mathbb{R}^{l}\right)\right)}}\\
 & \mbox{with }\mathbb{E}\left[\left|\xi\right|^{2}\right]<\infty\\
L_{\mathcal{F}}^{p}\left(\Omega;C\left(t,T;\mathbb{R}^{l}\right)\right): & \mbox{the set of continuous }\mbox{\ensuremath{\left\{  \mathcal{F}_{s}\right\} } }_{s\in\left[t,T\right]}\mbox{-adapted processes }\\
 & f=\left\{ f_{s}:t\leq s\leq T\right\} \mbox{ with }\mathbb{E}\left[\underset{s\in\left[t,T\right]}{\sup}\left|f_{s}\right|^{p}\right]<\infty\\
\left(W_{t}\right)_{t\in\left[0,T\right]}: & \left(W_{t}^{1},\cdots W_{t}^{d}\right)_{t\in\left[0,T\right]}\mbox{a d-dimensional Brownian motion on \ensuremath{\left(\Omega,\mathcal{F},\mathbb{P}\right)}}\\
\mathbb{E}_{t}\left[\cdot\right]: & \mathbb{E}\left[\cdot\left|\mathcal{F}_{t}\right.\right]
\end{cases}
\end{align*}
For any $t\in\left[0,T\right)$ , we consider a pair of portfolio
and wealth process $\left(\pi,X\right)$ satisfying the following
equation
\begin{equation}
\begin{cases}
dX_{s}=\left[r_{s}X_{s}+\pi_{s}^{T}\left(\mu_{s}^{x}-r_{s}\mathbf{1}\right)\right]ds+\pi_{s}^{T}\sigma_{s}dW_{s} & s\in\left[t,T\right]\\
X_{t}=x_{t}
\end{cases}
\end{equation}
where $\mathbf{1}$ is a $d$-dimensional vector of ones, $W$ is
a $d$-dimensional standard Brownian motion, $r$ is the risk free
rate process, $\mu^{x}$ is the drift rate vector process of risky
assets and $\sigma$ is the volatility process of risky assets which
is a $d\times d$ matrix and is assumed to be invertible. Throughout
this paper we assume that $r\in L_{\mathcal{F}}^{\infty}\left(0,T;\mathbb{R}\right)$,
$\mu^{x}\in L_{\mathcal{F}}^{\infty}\left(0,T;\mathbb{R}^{d}\right)$,
$\sigma\in L_{\mathcal{F}}^{\infty}\left(0,T;\mathbb{R}^{d\times d}\right)$
and $\sigma^{-1}\in L_{\mathcal{F}}^{\infty}\left(0,T;\mathbb{R}^{d\times d}\right)$,
which means they are all bounded, in addition, we also assume that
$r$ and $\sigma$ are always deterministic. Actually except in the
case when we use $\frac{x^{2}}{2}$ as our utility function, we would
also assume $\mu^{x}$ to be deterministic. Then the above dynamics
of the wealth process can be written as 
\begin{equation}
\begin{cases}
dX_{s}=\left(r_{s}X_{s}+\pi_{s}^{T}\sigma_{s}\theta_{s}\right)ds+\pi_{s}^{T}\sigma_{s}dW_{s} & s\in\left[t,T\right]\\
X_{t}=x_{t}
\end{cases}
\end{equation}
where $\theta=\sigma^{-1}\left(\mu^{x}-r\mathbf{1}\right)$ is the
market price of risk and it is clear that $\theta\in L_{\mathcal{F}}^{\infty}\left(0,T;\mathbb{R}^{d}\right)$
which means $\theta$ is also bounded.

\subsection{Motivation of our problems}

\paragraph{\textmd{A standard static mean variance portfolio selection problem
with a fixed mean target $l$ is }}

\begin{align}
\underset{\pi}{\min} & \mbox{ \ensuremath{}\ensuremath{}\ensuremath{}\ensuremath{}\mbox{ \ensuremath{}\ensuremath{}\ensuremath{}\ensuremath{}}\mbox{ \ensuremath{}\ensuremath{}\ensuremath{}\ensuremath{}}}\frac{1}{2}Var(X_{T})\nonumber \\
s.t. & \mbox{ \ensuremath{}\ensuremath{}\ensuremath{}\ensuremath{}\mbox{ \ensuremath{}\ensuremath{}\ensuremath{}\ensuremath{}}}\begin{cases}
dX_{s}=\left(r_{s}X_{s}+\pi_{s}^{T}\sigma_{s}\theta_{s}\right)ds+\pi_{s}^{T}\sigma_{s}dW_{s}\\
X_{0}=x_{0}\\
\mathbb{E}\left[X_{T}\right]=l\\
\pi\in U_{ad}^{\pi}=\left\{ \pi\left|\right.\pi\in L_{\mathcal{F}}^{2}\left(0,T;\mathbb{R}^{d}\right)\right\} 
\end{cases}\label{eq:2.3-1}
\end{align}
that is 
\begin{align}
\underset{\pi}{\min} & \mbox{ \ensuremath{}\ensuremath{}\ensuremath{}\ensuremath{}\mbox{ \ensuremath{}\ensuremath{}\ensuremath{}\ensuremath{}}\mbox{ \ensuremath{}\ensuremath{}\ensuremath{}\ensuremath{}}}\frac{1}{2}\mathbb{E}\left[(X_{T}-\mathbb{E}X_{T})^{2}\right]\nonumber \\
s.t. & \mbox{ \ensuremath{}\ensuremath{}\ensuremath{}\ensuremath{}\mbox{ \ensuremath{}\ensuremath{}\ensuremath{}\ensuremath{}}}\begin{cases}
dX_{s}=\left(r_{s}X_{s}+\pi_{s}^{T}\sigma_{s}\theta_{s}\right)ds+\pi_{s}^{T}\sigma_{s}dW_{s}\\
X_{0}=x_{0}\\
\mathbb{E}\left[X_{T}\right]=l\\
\pi\in U_{ad}^{\pi}=\left\{ \pi\left|\right.\pi\in L_{\mathcal{F}}^{2}\left(0,T;\mathbb{R}^{d}\right)\right\} 
\end{cases}\label{eq:2.3-1-2}
\end{align}
which can be written as 
\begin{align}
\underset{\pi}{\min} & \mbox{ \ensuremath{}\ensuremath{}\ensuremath{}\ensuremath{}\mbox{ \ensuremath{}\ensuremath{}\ensuremath{}\ensuremath{}}\mbox{ \ensuremath{}\ensuremath{}\ensuremath{}\ensuremath{}}}E\left[h(X_{T}-l)\right]\nonumber \\
s.t. & \mbox{ \ensuremath{}\ensuremath{}\ensuremath{}\ensuremath{}\mbox{ \ensuremath{}\ensuremath{}\ensuremath{}\ensuremath{}}}\begin{cases}
dX_{s}=\left(r_{s}X_{s}+\pi_{s}^{T}\sigma_{s}\theta_{s}\right)ds+\pi_{s}^{T}\sigma_{s}dW_{s}\\
X_{0}=x_{0}\\
\mathbb{E}\left[X_{T}\right]=l\\
\pi\in U_{ad}^{\pi}=\left\{ \pi\left|\right.\pi\in L_{\mathcal{F}}^{2}\left(0,T;\mathbb{R}^{d}\right)\right\} 
\end{cases}\label{eq:2.3-1-1}
\end{align}
where $h\left(x\right)=\frac{x^{2}}{2}$. \\

\noindent In this paper we want to extend the above problem in the
following two ways. Firstly, we want to extend $h(x)$ to other kinds
of utility functions not just the function of $x^{2}$. Secondly,
we want to extend the static mean target $\mathbb{E}\left[X_{T}\right]=l$
to the moving target $\mathbb{E}_{t}\left[X_{T}\right]=X_{t}e^{\int_{t}^{T}\mu_{s}ds}$
for any $t\in\left[0,T\right)$ where $\mu$ is our required return
process which is assumed to be bounded and deterministic. Based on
these two motivations, for any given utility function $h(x)$, our
aim is to find a portfolio $\pi$ in order to minimize
\begin{equation}
J\left(t,X_{t};\pi\right)=\mathbb{E}_{t}\left[h\left(X_{T}-\mathbb{E}_{t}\left[X_{T}\right]\right)\right]
\end{equation}
at any time $t\in\left[0,T\right)$ with $X_{t}=x_{t}$. And in order
to satisfy the moving target requirement, we want to have a constraint
on our control $\pi$ s.t. 
\begin{equation}
\mathbb{E}_{t}\left[X_{T}\right]=X_{t}e^{\int_{t}^{T}\mu_{s}ds},\forall t\in\left[0,T\right)
\end{equation}
thus our objective at $t\in\left[0,T\right)$ with $X_{t}=x_{t}$
is to find a control $\pi$ under the above constraint so as to minimize
\begin{equation}
J\left(t,X_{t};\pi\right)=\mathbb{E}_{t}\left[h\left(X_{T}-X_{t}e^{\int_{t}^{T}\mu_{s}ds}\right)\right]
\end{equation}
which means we want to solve a family of the following problems for
any $t\in\left[0,T\right)$

\begin{align}
\underset{\pi}{\min} & \mbox{ \ensuremath{}\ensuremath{}\ensuremath{}\ensuremath{}\mbox{ \ensuremath{}\ensuremath{}\ensuremath{}\ensuremath{}}\mbox{ \ensuremath{}\ensuremath{}\ensuremath{}\ensuremath{}}}\mathbb{E}_{t}\left[h\left(X_{T}-X_{t}e^{\int_{t}^{T}\mu_{s}ds}\right)\right]\nonumber \\
s.t. & \mbox{ \ensuremath{}\ensuremath{}\ensuremath{}\ensuremath{}\mbox{ \ensuremath{}\ensuremath{}\ensuremath{}\ensuremath{}}}\begin{cases}
dX_{s}=\left(r_{s}X_{s}+\pi_{s}^{T}\sigma_{s}\theta_{s}\right)ds+\pi_{s}^{T}\sigma_{s}dW_{s},s\in\left[t,T\right]\\
X_{t}=x_{t}\\
\pi\in\overline{U_{ad}^{\pi}}=\left\{ \pi\left|\right.\pi\in L_{\mathcal{F}}^{2}\left(0,T;\mathbb{R}^{d}\right)\mbox{and }\mathbb{E}_{t}\left[X_{T}\right]=X_{t}e^{\int_{t}^{T}\mu_{s}ds},\forall t\in\left[0,T\right)\right\} 
\end{cases}\label{eq:2.102-2-1}
\end{align}
According to Bjork and Murgoci\cite{key-3}, if the terminal evaluation
function $h(x)$ in our above kind family of problems depends on $X_{t}=x_{t}$,
then this $X_{t}$ will cause time inconsistency if it can not be
factorized outside the given utility function $h\left(x\right)$.
Here our utility function $h\left(x\right)$ depends on the term $X_{t}e^{\int_{t}^{T}\mu_{s}ds}$
and thus our family of problems (\ref{eq:2.102-2-1}) is time inconsistent
in most cases.\\

\noindent When we use the utility function $h\left(x\right)=e^{-\alpha x}$
for some $\alpha>0$, the objective of our family of problems (\ref{eq:2.102-2-1})
is 
\begin{eqnarray*}
J\left(t,X_{t};\pi\right) & = & \mathbb{E}_{t}\left[h\left(X_{T}-X_{t}e^{\int_{t}^{T}\mu_{s}ds}\right)\right]\\
 & = & \mathbb{E}_{t}\left[e^{-\alpha\left(X_{T}-X_{t}e^{\int_{t}^{T}\mu_{s}ds}\right)}\right]\\
 & = & e^{\alpha X_{t}e^{\int_{t}^{T}\mu_{s}ds}}\mathbb{E}_{t}\left[e^{-\alpha X_{T}}\right]
\end{eqnarray*}
which implies that our family of problems is equivalent to a standard
control problem with objective 
\[
\hat{J}\left(t,X_{t};\pi\right)=\mathbb{E}_{t}\left[e^{-\alpha X_{T}}\right]
\]
which becomes a time consistent problem and can be solved by dynamic
programming when $X$ is a Markov process. But for power and $x^{-}$
utility functions time inconsistency would be caused and thus this
paper focuses on the following utility functions $h\left(x\right)=\frac{x^{2}}{2},-\frac{x^{3}}{3},\frac{x^{4}}{4},x^{-}$
by using which our family of problems (\ref{eq:2.102-2-1}) becomes
a time inconsistent problem.

\subsection{Definition of equilibrium for our power utility functions}

In terms of time inconsistency, the notion ``optimality'' needs
to be defined in an appropriate way. Here we adopt the concept of
the equilibrium control which is optimal only for spike variation
in an infinitesimal way for any $t\in\left[0,T\right)$.\\

\noindent Given a control $u^{*}$, for any $t\in\left[0,T\right)$,
$\varepsilon>0$ and $v\in L_{\mathcal{F}_{t}}^{2}\left(\Omega;\mathbb{R}^{d}\right)$,
define 

\begin{equation}
u_{s}^{t,\varepsilon,v}=u_{s}^{*}+v\mathbf{1}_{s\in\left[t,t+\varepsilon\right)},\mbox{ }s\in\left[t,T\right]\label{eq:1.1}
\end{equation}

\begin{defn}
\label{2.1}Let $u^{*}\in U_{ad}$ be a given control with $U_{ad}$
being the set of admissible controls. Let $X^{*}$ be the state process
corresponding to $u^{*}$$ $. The control $u^{*}$ is called an equilibrium
if 

\begin{equation}
\underset{\varepsilon\left\downarrow 0\right.}{\lim}\frac{J\left(t,X_{t}^{*};u^{t,\varepsilon,v}\right)-J\left(t,X_{t}^{*};u^{*}\right)}{\varepsilon}\geq0
\end{equation}
for any $t\in\left[0,T\right)$ and $v\in L_{\mathcal{F}_{t}}^{2}\left(\Omega;\mathbb{R}^{d}\right)$
s.t. $u^{t,\varepsilon,v}\in U_{ad}$, where $u^{t,\varepsilon,v}$
is defined by (\ref{eq:1.1}).\\

\end{defn}
\noindent Notice that here we use the definition of an equilibrium
control defined in Hu, Jin and Zhou\cite{key-1}, which is defined
in the class of open-loop controls and is different form the one in
Bjork and Murgoci\cite{key-3} where the definition is based on the
feedback controls. In this definition, the perturbation of the control
in $\left[t,t+\varepsilon\right)$ will not change the control process
in $\left[t+\varepsilon,T\right]$. We use this definition for our
power utility functions $h\left(x\right)=\frac{x^{2}}{2},-\frac{x^{3}}{3},\frac{x^{4}}{4}$
only. For $h(x)=x^{-}$ we will use another definition of equilibrium
which will be defined in that section.

\subsection{A sufficient equilibrium condition for $\mbox{C}^{2}$ utility functions}

In the following 3 sections when we use our power utility functions
in our problem, we will have our problem transformed into a family
of problems of the following form for any $t\in\left[0,T\right)$

\begin{align}
\underset{u}{\min} & \mbox{ \ensuremath{}\ensuremath{}\ensuremath{}\ensuremath{}\mbox{ \ensuremath{}\ensuremath{}\ensuremath{}\ensuremath{}}\mbox{ \ensuremath{}\ensuremath{}\ensuremath{}\ensuremath{}}}\mathbb{E}_{t}\left[h\left(Y_{T}-Y_{t}\right)\right]\nonumber \\
s.t. & \mbox{ \ensuremath{}\ensuremath{}\ensuremath{}\ensuremath{}\mbox{ \ensuremath{}\ensuremath{}\ensuremath{}\ensuremath{}}}\begin{cases}
dY_{s}=\left(\widehat{r_{s}}Y_{s}+u_{s}^{T}\theta_{s}\right)ds+u_{s}^{T}dW_{s}\\
Y_{t}=y_{t}\\
u\in U_{ad}^{u}=\left\{ u\left|\right.u\in L_{\mathcal{F}}^{2}\left(0,T;\mathbb{R}^{d}\right)\right\} 
\end{cases}\label{eq:1.3}
\end{align}
where $\theta$ is bounded, $\widehat{r}$ is a deterministic process,
and $h(x)$ is the given $C^{2}$ utility function. Here we have $J\left(t,Y_{t};u\right)=\mathbb{E}_{t}\left[h\left(Y_{T}-Y_{t}\right)\right]$.
For this family of problems, the definition \ref{2.1} for an equilibrium
control is translated into the following proposition. \\

\begin{prop}
\label{1.1}Let $u^{*}\in L_{\mathcal{F}}^{2}\left(0,T;\mathbb{R}^{d}\right)$\textup{
}be a given control and \textup{$Y^{*}$} be the state process corresponding
to \textup{$u^{*}$. }The control \textup{$u^{*}$ }is an equilibrium
for the family of\textup{ }problems\textup{ (\ref{eq:1.3}) }if

\begin{equation}
\underset{\varepsilon\left\downarrow 0\right.}{\lim}\frac{J\left(t,Y_{t}^{*};u^{t,\varepsilon,v}\right)-J\left(t,Y_{t}^{*};u^{*}\right)}{\varepsilon}\geq0
\end{equation}
 for any $t\in\left[0,T\right)$ and $v\in L_{\mathcal{F}_{t}}^{2}\left(\Omega;\mathbb{R}^{d}\right)$,
where\textup{ $u^{t,\varepsilon,v}$} is defined by (\ref{eq:1.1}).
\\

\end{prop}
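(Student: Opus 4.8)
The plan is to recognize that Proposition \ref{1.1} is nothing more than the specialization of Definition \ref{2.1} to the reduced family of problems (\ref{eq:1.3}), the only point to settle being that the admissibility qualifier ``$u^{t,\varepsilon,v}\in U_{ad}$'' carried by Definition \ref{2.1} becomes vacuous here. In Definition \ref{2.1} the equilibrium inequality is demanded only for those $v$ for which the spike-perturbed control $u^{t,\varepsilon,v}$ stays admissible, whereas Proposition \ref{1.1} asks for it for \emph{every} $v\in L_{\mathcal{F}_{t}}^{2}\left(\Omega;\mathbb{R}^{d}\right)$. The key observation is that for (\ref{eq:1.3}) the admissible set is $U_{ad}^{u}=L_{\mathcal{F}}^{2}\left(0,T;\mathbb{R}^{d}\right)$, which, in contrast to the admissible set $\overline{U_{ad}^{\pi}}$ of the original problem (\ref{eq:2.102-2-1}), carries no moving-target constraint and requires only $\left\{\mathcal{F}_{s}\right\}$-adaptedness together with square integrability.

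First I would fix $u^{*}\in L_{\mathcal{F}}^{2}\left(0,T;\mathbb{R}^{d}\right)$, a time $t\in\left[0,T\right)$, a level $\varepsilon>0$, and a perturbation $v\in L_{\mathcal{F}_{t}}^{2}\left(\Omega;\mathbb{R}^{d}\right)$, and check directly that $u^{t,\varepsilon,v}$ defined by (\ref{eq:1.1}) again lies in $L_{\mathcal{F}}^{2}\left(0,T;\mathbb{R}^{d}\right)$. For adaptedness, $u_{s}^{t,\varepsilon,v}=u_{s}^{*}$ off $\left[t,t+\varepsilon\right)$, while on $\left[t,t+\varepsilon\right)$ we have $u_{s}^{t,\varepsilon,v}=u_{s}^{*}+v$, where $v$ is $\mathcal{F}_{t}$-measurable and hence $\mathcal{F}_{s}$-measurable for every $s\geq t$; so the perturbed process is $\left\{\mathcal{F}_{s}\right\}$-adapted. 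For square integrability, the elementary bound $\left|a+b\right|^{2}\leq2\left|a\right|^{2}+2\left|b\right|^{2}$ gives
\begin{equation}
\mathbb{E}\left[\int_{0}^{T}\left|u_{s}^{t,\varepsilon,v}\right|^{2}ds\right]\leq2\,\mathbb{E}\left[\int_{0}^{T}\left|u_{s}^{*}\right|^{2}ds\right]+2\varepsilon\,\mathbb{E}\left[\left|v\right|^{2}\right]<\infty,
\end{equation}
so that $u^{t,\varepsilon,v}\in U_{ad}^{u}$.

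Consequently the admissibility qualifier of Definition \ref{2.1} is met by every $v\in L_{\mathcal{F}_{t}}^{2}\left(\Omega;\mathbb{R}^{d}\right)$, so the set of perturbations over which the equilibrium inequality must hold is exactly all of $L_{\mathcal{F}_{t}}^{2}\left(\Omega;\mathbb{R}^{d}\right)$. The condition displayed in Proposition \ref{1.1} then coincides with the defining condition of Definition \ref{2.1} for the problem (\ref{eq:1.3}), and the stated implication follows immediately; indeed the two conditions are equivalent in this setting. I expect no real obstacle in this argument: its entire content is that passing to the reduced form (\ref{eq:1.3}) has already absorbed the moving-target constraint into the state dynamics, leaving an unconstrained $L^{2}$ admissible set on which every spike variation is automatically admissible. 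The only point needing a moment's care is the measurability of $v$ on the spike interval, which is immediate because $v$ is $\mathcal{F}_{t}$-measurable and the interval begins precisely at $t$.
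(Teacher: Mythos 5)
Your proposal is correct and matches the paper's own treatment: the paper states Proposition \ref{1.1} without proof, presenting it simply as the translation of Definition \ref{2.1} to the family (\ref{eq:1.3}), where the admissible set is all of $L_{\mathcal{F}}^{2}\left(0,T;\mathbb{R}^{d}\right)$ and hence the qualifier ``$u^{t,\varepsilon,v}\in U_{ad}$'' is vacuous. Your explicit verification that the spike perturbation remains adapted and square integrable is precisely the routine check the paper leaves implicit, so there is nothing to fault.
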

\noindent Here we give a sufficient condition for equilibrium controls
when $Y$ is the process defined in (\ref{eq:1.3}) by using the the
second order Taylor expansion at any $t\in\left[0,T\right)$, which
is inspired by the idea used in Hu, Jin and Zhou\cite{key-1}. Let
$u^{*}$ be a fixed control and $Y^{*}$ be the corresponding state
process. For any $t\in\left[0,T\right)$, we define in the time interval
$\left[t,T\right]$ the processes $p^{t}\in L_{\mathcal{F}}^{2}\left(t,T;\mathbb{R}\right)$,
$q^{t}\in L_{\mathcal{F}}^{2}\left(t,T;\mathbb{R}^{d}\right)$, $P^{t}\in L_{\mathcal{F}}^{2}\left(t,T;\mathbb{R}\right)$
and $Q^{t}\in L_{\mathcal{F}}^{2}\left(t,T;\mathbb{R}^{d}\right)$
which satisfy the following system of BSDEs
\begin{equation}
\begin{cases}
dp_{s}^{t} & =-\widehat{r_{s}}p_{s}^{t}ds+\left(q_{s}^{t}\right)^{T}dW_{s},\mbox{ \ensuremath{}s\ensuremath{\in\left[t,T\right]}}\\
p_{T}^{t} & =\frac{dh\left(Y_{T}^{*}-Y_{t}^{*}\right)}{dx}
\end{cases}\label{eq:1.4}
\end{equation}

\begin{equation}
\begin{cases}
dP_{s}^{t} & =-2\widehat{r_{s}}P_{s}^{t}ds+\left(Q_{s}^{t}\right)^{T}dW_{s},\mbox{ \ensuremath{}s\ensuremath{\in\left[t,T\right]}}\\
P_{T}^{t} & =\frac{d^{2}h\left(Y_{T}^{*}-Y_{t}^{*}\right)}{dx^{2}}
\end{cases}\label{eq:1.5}
\end{equation}
\\

\begin{prop}
\label{prop:1.2}For any $\varepsilon>0$, $t\in\left[0,T\right)$,
$v\in L_{\mathcal{F}_{t}}^{2}\left(\Omega;\mathbb{R}^{d}\right)$
and $u^{t,\varepsilon,v}$ defined by (\ref{eq:1.1}). We have that

\begin{equation}
J\left(t,Y_{t}^{*};u^{t,\varepsilon,v}\right)-J\left(t,Y_{t}^{*};u^{*}\right)=\mathbb{E}_{t}\int_{t}^{t+\varepsilon}v^{T}\Lambda_{s}^{t}+\frac{1}{2}P_{s}^{t}\left|v\right|^{2}ds+o\left(\varepsilon\right)\label{eq:1.6}
\end{equation}
where \textup{$\Lambda_{s}^{t}=p_{s}^{t}\theta_{s}+q_{s}^{t}$ for
any $s\in\left[t,T\right]$.}\end{prop}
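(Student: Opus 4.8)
The plan is to run a second-order spike-variation expansion in the spirit of Hu, Jin and Zhou \cite{key-1}. Write $\xi_s:=Y_s^{t,\varepsilon,v}-Y_s^{*}$ for the perturbation of the state process. Since $u^{t,\varepsilon,v}$ and $u^{*}$ differ only by the increment $v\mathbf{1}_{s\in[t,t+\varepsilon)}$, subtracting the two copies of the dynamics in (\ref{eq:1.3}) shows that $\xi$ solves a linear SDE with $\xi_t=0$: on $[t,t+\varepsilon)$ it reads $d\xi_s=(\widehat{r_s}\xi_s+v^{T}\theta_s)\,ds+v^{T}\,dW_s$, whereas on $[t+\varepsilon,T]$ the control increment vanishes, so $d\xi_s=\widehat{r_s}\xi_s\,ds$ and $\xi$ propagates deterministically as $\xi_s=\xi_{t+\varepsilon}\exp\!\big(\int_{t+\varepsilon}^{s}\widehat{r_u}\,du\big)$. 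A routine $L^{2}$ estimate on the short interval, using the boundedness of $\widehat{r}$ and $\theta$, shows that the It\^o integral $\int_t^{s}v^{T}\,dW_u$ dominates, giving $\mathbb{E}_t[|\xi_s|^{2}]=O(\varepsilon)$ uniformly on $[t,T]$; in particular $\|\xi_T\|_{L^{2}}=O(\sqrt{\varepsilon})$. This single bound drives all the order counting below.

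Because $\xi_t=0$ we have $Y_t^{t,\varepsilon,v}=Y_t^{*}$, hence $Y_T^{t,\varepsilon,v}-Y_t^{*}=(Y_T^{*}-Y_t^{*})+\xi_T$, and a second-order Taylor expansion of $h$ about $Y_T^{*}-Y_t^{*}$ gives
\[
J\!\left(t,Y_t^{*};u^{t,\varepsilon,v}\right)-J\!\left(t,Y_t^{*};u^{*}\right)=\mathbb{E}_t\!\left[h'(Y_T^{*}-Y_t^{*})\xi_T\right]+\tfrac{1}{2}\mathbb{E}_t\!\left[h''(Y_T^{*}-Y_t^{*})\xi_T^{2}\right]+R_\varepsilon,
\]
where $R_\varepsilon$ is the Taylor remainder. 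For the first-order term I would apply It\^o's product rule to $p_s^{t}\xi_s$ using the BSDE (\ref{eq:1.4}) and the SDE for $\xi$: the contributions $\pm\widehat{r_s}p_s^{t}\xi_s$ cancel, and the surviving drift is exactly $(u_s^{t,\varepsilon,v}-u_s^{*})^{T}(p_s^{t}\theta_s+q_s^{t})=(u_s^{t,\varepsilon,v}-u_s^{*})^{T}\Lambda_s^{t}$. Integrating over $[t,T]$, taking $\mathbb{E}_t$ (the martingale part drops out by the $L^{2}$ integrability of $p^{t},q^{t},\xi$), and using $p_T^{t}=h'(Y_T^{*}-Y_t^{*})$ with $\xi_t=0$, I obtain $\mathbb{E}_t[h'(Y_T^{*}-Y_t^{*})\xi_T]=\mathbb{E}_t\int_t^{t+\varepsilon}v^{T}\Lambda_s^{t}\,ds$, the first term of (\ref{eq:1.6}).

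For the second-order term I would repeat the duality argument on $P_s^{t}\xi_s^{2}$. Expanding $d(\xi_s^{2})$ by It\^o produces the quadratic-variation term $|u_s^{t,\varepsilon,v}-u_s^{*}|^{2}\,ds$; the factor $2$ in the drift of the BSDE (\ref{eq:1.5}) is precisely what makes the $\pm 2\widehat{r_s}P_s^{t}\xi_s^{2}$ terms cancel. After integration and conditioning this leaves
\[
\mathbb{E}_t\!\left[P_T^{t}\xi_T^{2}\right]=\mathbb{E}_t\!\int_t^{t+\varepsilon}\!\Big(P_s^{t}|v|^{2}+2P_s^{t}\xi_s\,v^{T}\theta_s+2\xi_s(Q_s^{t})^{T}v\Big)\,ds.
\]
Each of the last two integrands carries a factor $\xi_s$ with $\|\xi_s\|_{L^{2}}=O(\sqrt{\varepsilon})$, so by Cauchy--Schwarz over an interval of length $\varepsilon$ (using the boundedness of $\theta$ and the uniform $L^{2}$-bound on $P^{t}$ for the first, and $\int_t^{t+\varepsilon}\|Q_s^{t}\|_{L^{2}}^{2}\,ds\to0$ for the second) both are $o(\varepsilon)$. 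Hence $\tfrac{1}{2}\mathbb{E}_t[P_T^{t}\xi_T^{2}]=\tfrac{1}{2}\mathbb{E}_t\int_t^{t+\varepsilon}P_s^{t}|v|^{2}\,ds+o(\varepsilon)$, and adding the two contributions reproduces exactly (\ref{eq:1.6}).

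The step I expect to be the main obstacle is controlling the remainder $R_\varepsilon$ uniformly to order $o(\varepsilon)$. Writing it in integral (Lagrange) form, $R_\varepsilon=\tfrac{1}{2}\mathbb{E}_t\big[(h''(Y_T^{*}-Y_t^{*}+\tau\xi_T)-h''(Y_T^{*}-Y_t^{*}))\,\xi_T^{2}\big]$ for some $\tau\in(0,1)$, one must pair the bound $\mathbb{E}_t[\xi_T^{2}]=O(\varepsilon)$ with control on the growth and modulus of continuity of $h''$. For the polynomial utilities used in the later sections this is immediate, since $R_\varepsilon$ then consists of explicit terms of degree $\ge 3$ in $\xi_T$, each of which is $o(\varepsilon)$ by the moment estimates $\mathbb{E}_t[|\xi_T|^{k}]=O(\varepsilon^{k/2})$; for a general $C^{2}$ utility one would instead need a uniform-integrability argument together with a higher-moment estimate such as $\mathbb{E}_t[|\xi_T|^{2+\delta}]=O(\varepsilon^{1+\delta/2})$. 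The remaining care is simply to verify that all the $o(\varepsilon)$ terms are integrable enough to survive $\mathbb{E}_t$, which follows from the boundedness of $\widehat{r}$ and $\theta$ and the $L^{2}$ integrability of the BSDE solutions.
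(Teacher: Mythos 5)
Your proposal is correct and follows essentially the same route as the paper's proof: your $\xi$ is exactly the paper's $I+Z$ (the paper splits the perturbation into a martingale part $I$ and a drift part $Z$, but only ever uses the combined bound $\mathbb{E}_t\left[\sup_{s\in[t,T]}\left|I_s+Z_s\right|^2\right]=O(\varepsilon)$), and your duality computations with the adjoint BSDEs for $p^t$ and $P^t$, together with the Cauchy--Schwarz estimate showing the cross term is $o(\varepsilon)$, are identical to the paper's. The one place you are more careful than the paper is the Taylor remainder: the paper simply writes it as $o\left(\mathbb{E}_t\left[\left(Y_T^{t,\varepsilon,v}-Y_T^{*}\right)^2\right]\right)$ for an arbitrary $C^2$ utility, whereas you correctly flag that this step requires uniform integrability or a higher-moment estimate on $h''$ in general, and is immediate only for the polynomial utilities actually used in the later sections.
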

\begin{proof}
Here we use the standard perturbation approach in Yong and Zhou\cite{key-2}.
Let $Y^{t,\varepsilon,v}$ be the state process corresponding to $u^{t,\varepsilon,v}$,
we have that 

\[
Y_{s}^{t,\varepsilon,v}=Y_{s}^{*}+I_{s}^{t,\varepsilon,v}+Z_{s}^{t,\varepsilon,v},\mbox{\ensuremath{}s\ensuremath{\in\left[t,T\right]}}
\]
where $I\equiv I_{s}^{t,\varepsilon,v}$ and $Z\equiv Z_{s}^{t,\varepsilon,v}$
satisfy that 
\[
\begin{cases}
dI_{s} & =\widehat{r_{s}}I_{s}ds+v^{T}\mathbf{1}_{s\in\left[t,t+\varepsilon\right)}dW_{s}\\
I_{t} & =0
\end{cases}
\]
\[
\begin{cases}
dZ_{s} & =\left[\widehat{r_{s}}Z_{s}+v^{T}\theta_{s}\mathbf{1}_{s\in\left[t,t+\varepsilon\right)}\right]ds\\
Z_{t} & =0
\end{cases}
\]
and we have that
\[
\mathbb{E}_{t}\left[\underset{s\in\left[t,T\right]}{\sup}\left|I_{s}\right|^{2}\right]=O\left(\varepsilon\right),\mbox{ \ensuremath{\mathbb{E}_{t}\left[\underset{s\in\left[t,T\right]}{\sup}\left|Z_{s}\right|^{2}\right]}\ensuremath{=O\ensuremath{\left(\varepsilon^{2}\right)}}}
\]
which implies that 
\begin{eqnarray}
\mathbb{E}_{t}\left[\underset{s\in\left[t,T\right]}{\sup}\left|I_{s}+Z_{s}\right|^{2}\right] & \leq & \mathbb{E}_{t}\left[\underset{s\in\left[t,T\right]}{\sup}\left(\left|I_{s}\right|+\left|Z_{s}\right|\right)^{2}\right]\nonumber \\
 & \leq & 2\mathbb{E}_{t}\left[\underset{s\in\left[t,T\right]}{\sup}\left(\left|I_{s}\right|^{2}+\left|Z_{s}\right|^{2}\right)\right]\nonumber \\
 & \leq & 2\left(\mathbb{E}_{t}\left[\underset{s\in\left[t,T\right]}{\sup}\left|I_{s}\right|^{2}\right]+\mathbb{E}_{t}\left[\underset{s\in\left[t,T\right]}{\sup}\left|Z_{s}\right|^{2}\right]\right)\nonumber \\
 & = & O\left(\varepsilon\right)\label{eq:3.6}
\end{eqnarray}
then we have that
\begin{align}
 & J\left(t,Y_{t}^{*};u^{t,\varepsilon,v}\right)-J\left(t,Y_{t}^{*};u^{*}\right)\nonumber \\
= & \mathbb{E}_{t}\left[h\left(Y_{T}^{t,\varepsilon,v}-Y_{t}^{*}\right)-h\left(Y_{T}^{*}-Y_{t}^{*}\right)\right]\nonumber \\
= & \mathbb{E}_{t}\left[\frac{dh\left(Y_{T}^{*}-Y_{t}^{*}\right)}{dx}\left(Y_{T}^{t,\varepsilon,v}-Y_{T}^{*}\right)+\frac{1}{2}\frac{d^{2}h\left(Y_{T}^{*}-Y_{t}^{*}\right)}{dx^{2}}\left(Y_{T}^{t,\varepsilon,v}-Y_{T}^{*}\right)^{2}\right]+o\left(\mathbb{E}_{t}\left[\left(Y_{T}^{t,\varepsilon,v}-Y_{T}^{*}\right)^{2}\right]\right)\nonumber \\
= & \mathbb{E}_{t}\left[p_{T}^{t}\left(I_{T}+Z_{T}\right)\right]+\frac{1}{2}\mathbb{E}_{t}\left[P_{T}^{t}\left(I_{T}+Z_{T}\right)^{2}\right]+o\left(\mathbb{E}_{t}\left[\left(I_{T}+Z_{T}\right)^{2}\right]\right)\nonumber \\
= & \mathbb{E}_{t}\left[p_{T}^{t}\left(I_{T}+Z_{T}\right)\right]+\frac{1}{2}\mathbb{E}_{t}\left[P_{T}^{t}\left(I_{T}+Z_{T}\right)^{2}\right]+o\left(\varepsilon\right)\label{eq:1.7}
\end{align}
because by (\ref{eq:3.6}) we have 
\[
\mathbb{E}_{t}\left[\left(I_{T}+Z_{T}\right)^{2}\right]\leq\mathbb{E}_{t}\left[\underset{s\in\left[t,T\right]}{\sup}\left|I_{s}+Z_{s}\right|^{2}\right]\leq O\left(\varepsilon\right)
\]

\noindent Since we have that
\[
\begin{cases}
d\left(I_{s}+Z_{s}\right) & =\left[\widehat{r_{s}}\left(I_{s}+Z_{s}\right)+v^{T}\theta_{s}\mathbf{1}_{s\in\left[t,t+\varepsilon\right)}\right]ds+v^{T}\mathbf{1}_{s\in\left[t,t+\varepsilon\right)}dW_{s}\\
I_{t}+Z_{t} & =0
\end{cases}
\]
we could calculate that
\begin{align*}
 & d\left[p_{s}^{t}\left(I_{s}+Z_{s}\right)\right]\\
= & p_{s}^{t}d\left(I_{s}+Z_{s}\right)+\left(I_{s}+Z_{s}\right)dp_{s}^{t}+d\left\langle p^{t},I+Z\right\rangle _{s}\\
= & p_{s}^{t}\left[\widehat{r_{s}}\left(I_{s}+Z_{s}\right)+v^{T}\theta_{s}\mathbf{1}_{s\in\left[t,t+\varepsilon\right)}\right]ds-\widehat{r_{s}}p_{s}^{t}\left(I_{s}+Z_{s}\right)ds+v^{T}q_{s}^{t}\mathbf{1}_{s\in\left[t,t+\varepsilon\right)}ds\\
 & +\left[p_{s}^{t}v^{T}\mathbf{1}_{s\in\left[t,t+\varepsilon\right)}+\left(I_{s}+Z_{s}\right)\left(q_{s}^{t}\right)^{T}\right]dW_{s}\\
= & \left[p_{s}^{t}v^{T}\theta_{s}\mathbf{1}_{s\in\left[t,t+\varepsilon\right)}+v^{T}q_{s}^{t}\mathbf{1}_{s\in\left[t,t+\varepsilon\right)}\right]ds+\left[\cdots\right]dW_{s}
\end{align*}
thus
\begin{align}
\mathbb{E}_{t}\left[p_{T}^{t}\left(I_{T}+Z_{T}\right)\right] & =\mathbb{E}_{t}\int_{t}^{T}p_{s}^{t}v^{T}\theta_{s}\mathbf{1}_{s\in\left[t,t+\varepsilon\right)}+v^{T}q_{s}^{t}\mathbf{1}_{s\in\left[t,t+\varepsilon\right)}ds\nonumber \\
 & =\mathbb{E}_{t}\int_{t}^{t+\varepsilon}v^{T}\left[p_{s}^{t}\theta_{s}+q_{s}^{t}\right]ds\nonumber \\
 & =\mathbb{E}_{t}\int_{t}^{t+\varepsilon}v^{T}\Lambda_{s}^{t}ds\label{eq:1.8}
\end{align}
and that
\begin{align*}
 & d\left[P_{s}^{t}\left(I_{s}+Z_{s}\right)^{2}\right]\\
= & P_{s}^{t}d\left(I_{s}+Z_{s}\right)^{2}+\left(I_{s}+Z_{s}\right)^{2}dP_{s}^{t}+d\left\langle P^{t},\left(I+Z\right)^{2}\right\rangle _{s}\\
= & P_{s}^{t}\left[2\left(I_{s}+Z_{s}\right)d\left(I_{s}+Z_{s}\right)+d\left\langle I+Z\right\rangle _{s}\right]+\left(I_{s}+Z_{s}\right)^{2}\left[-2\widehat{r_{s}}P_{s}^{t}ds+\left(Q_{s}^{t}\right)^{T}dW_{s}\right]\\
 & +2\left(I_{s}+Z_{s}\right)d\left\langle P^{t},I+Z\right\rangle _{s}\\
= & 2P_{s}^{t}\left(I_{s}+Z_{s}\right)\left[\widehat{r_{s}}\left(I_{s}+Z_{s}\right)+v^{T}\theta_{s}\mathbf{1}_{s\in\left[t,t+\varepsilon\right)}\right]ds+P_{s}^{t}v^{T}v\mathbf{1}_{s\in\left[t,t+\varepsilon\right)}ds\\
 & -2\widehat{r_{s}}P_{s}^{t}\left(I_{s}+Z_{s}\right)^{2}ds+2\left(I_{s}+Z_{s}\right)v^{T}Q_{s}^{t}\mathbf{1}_{s\in\left[t,t+\varepsilon\right)}ds+\left[\cdots\right]dW_{s}\\
= & 2\left(I_{s}+Z_{s}\right)v^{T}\left[P_{s}^{t}\theta_{s}+Q_{s}^{t}\right]\mathbf{1}_{s\in\left[t,t+\varepsilon\right)}ds+P_{s}^{t}v^{T}v\mathbf{1}_{s\in\left[t,t+\varepsilon\right)}ds+\left[\cdots\right]dW_{s}
\end{align*}
thus
\begin{eqnarray}
\mathbb{E}_{t}\left[P_{T}^{t}\left(I_{T}+Z_{T}\right)^{2}\right] & = & \mathbb{E}_{t}\int_{t}^{T}P_{s}^{t}v^{T}v\mathbf{1}_{s\in\left[t,t+\varepsilon\right)}ds+\mathbb{E}_{t}\int_{t}^{T}2\left(I_{s}+Z_{s}\right)v^{T}\left[P_{s}^{t}\theta_{s}+Q_{s}^{t}\right]\mathbf{1}_{s\in\left[t,t+\varepsilon\right)}ds\nonumber \\
 & = & \mathbb{E}_{t}\int_{t}^{t+\varepsilon}P_{s}^{t}v^{T}vds+\mathbb{E}_{t}\int_{t}^{t+\varepsilon}2\left(I_{s}+Z_{s}\right)v^{T}\left[P_{s}^{t}\theta_{s}+Q_{s}^{t}\right]ds\nonumber \\
 & \overset{(\ref{eq:1.10})}{=} & \mathbb{E}_{t}\int_{t}^{t+\varepsilon}P_{s}^{t}\left|v\right|^{2}ds+o\left(\varepsilon\right)\label{eq:1.9}
\end{eqnarray}
because we have 
\begin{align}
\mathbb{E}_{t}\int_{t}^{t+\varepsilon}2\left(I_{s}+Z_{s}\right)v^{T}\left[P_{s}^{t}\theta_{s}+Q_{s}^{t}\right]ds & \leq2\mathbb{E}_{t}\int_{t}^{t+\varepsilon}\left(\underset{u\in\left[t,T\right]}{\sup}\left|I_{u}+Z_{u}\right|\right)\left|v^{T}\left(P_{s}^{t}\theta_{s}+Q_{s}^{t}\right)\right|ds\nonumber \\
 & =2\mathbb{E}_{t}\left[\left(\underset{u\in\left[t,T\right]}{\sup}\left|I_{u}+Z_{u}\right|\right)\int_{t}^{t+\varepsilon}\left|v^{T}\left(P_{s}^{t}\theta_{s}+Q_{s}^{t}\right)\right|ds\right]\nonumber \\
 & \leq2\sqrt{\mathbb{E}_{t}\left[\left(\underset{u\in\left[t,T\right]}{\sup}\left|I_{u}+Z_{u}\right|\right)^{2}\right]}\sqrt{\mathbb{E}_{t}\left[\left(\int_{t}^{t+\varepsilon}\left|v^{T}\left(P_{s}^{t}\theta_{s}+Q_{s}^{t}\right)\right|ds\right)^{2}\right]}\nonumber \\
 & \mbox{ \ensuremath{}\ensuremath{}\ensuremath{}\ensuremath{}\ensuremath{}}\mbox{ \ensuremath{}\ensuremath{}\ensuremath{}\ensuremath{}\ensuremath{}}\mbox{ \ensuremath{}\ensuremath{}\ensuremath{}\ensuremath{}\ensuremath{}}\mbox{ \ensuremath{}\ensuremath{}\ensuremath{}\ensuremath{}\ensuremath{}}\mbox{\mbox{by Cauchy-Schwarz}}\nonumber \\
 & =2\sqrt{\mathbb{E}_{t}\left[\underset{u\in\left[t,T\right]}{\sup}\left|I_{u}+Z_{u}\right|^{2}\right]}\sqrt{\mathbb{E}_{t}\left[\left(\int_{t}^{t+\varepsilon}\left|v^{T}\left(P_{s}^{t}\theta_{s}+Q_{s}^{t}\right)\right|ds\right)^{2}\right]}\nonumber \\
 & \leq\sqrt{O\left(\varepsilon\right)}\sqrt{\mathbb{E}_{t}\left[\left(\int_{t}^{t+\varepsilon}\left|v\right|\left|P_{s}^{t}\theta_{s}+Q_{s}^{t}\right|ds\right)^{2}\right]}\nonumber \\
 & \leq\sqrt{O\left(\varepsilon\right)}\sqrt{\varepsilon\mathbb{E}_{t}\left[\left|v\right|^{2}\int_{t}^{t+\varepsilon}\left|P_{s}^{t}\theta_{s}+Q_{s}^{t}\right|^{2}ds\right],}\mbox{by Cauchy-Schwarz}\nonumber \\
 & =O\left(\varepsilon\right)\sqrt{\mathbb{E}_{t}\left[\left|v\right|^{2}\int_{t}^{t+\varepsilon}\left|P_{s}^{t}\theta_{s}+Q_{s}^{t}\right|^{2}ds\right]}\nonumber \\
 & =O\left(\varepsilon\right)\sqrt{\left|v\right|^{2}\int_{t}^{t+\varepsilon}\mathbb{E}_{t}\left[\left|P_{s}^{t}\theta_{s}+Q_{s}^{t}\right|^{2}\right]ds}\nonumber \\
 & =o\left(\varepsilon\right)\label{eq:1.10}
\end{align}
as we have 
\[
\underset{\varepsilon\left\downarrow 0\right.}{\lim}\sqrt{\left|v\right|^{2}\int_{t}^{t+\varepsilon}\mathbb{E}_{t}\left[\left|P_{s}^{t}\theta_{s}+Q_{s}^{t}\right|^{2}\right]ds}=0
\]
which is implied by the above assumptions that $v\in L_{\mathcal{F}_{t}}^{2}\left(\Omega;\mathbb{R}^{d}\right)$,
$P^{t}\in L_{\mathcal{F}}^{2}\left(t,T;\mathbb{R}\right)$, $Q^{t}\in L_{\mathcal{F}}^{2}\left(t,T;\mathbb{R}^{d}\right)$
and $\theta$ is bounded. Then plug (\ref{eq:1.8}) and (\ref{eq:1.9})
into (\ref{eq:1.7}) we get (\ref{eq:1.6}).\\
\end{proof}
\begin{thm}
\label{thm:1.3}If the following system of equations
\begin{equation}
\begin{cases}
dY_{s} & =\left(\widehat{r_{s}}Y+\theta_{s}^{T}u_{s}\right)ds+u_{s}^{T}dW_{s},\mbox{ \ensuremath{s\in\left[0,T\right]}}\\
Y_{0} & =y_{0}\\
dp_{s}^{t} & =-\widehat{r_{s}}p_{s}^{t}ds+\left(q_{s}^{t}\right)^{T}dW_{s},\mbox{ \ensuremath{}s\ensuremath{\in\left[t,T\right]}}\\
p_{T}^{t} & =\frac{dh\left(Y_{T}^{*}-Y_{t}^{*}\right)}{dx}
\end{cases}
\end{equation}
admits a solution $\left(u^{*},Y^{*},p^{t},q^{t}\right)$ for any
$t\in\left[0,T\right)$, s.t.
\begin{equation}
\begin{cases}
u^{*}\in L_{\mathcal{F}}^{2}\left(0,T;\mathbb{R}^{d}\right)\\
\mathbb{E}_{t}\int_{t}^{T}\left|\Lambda_{s}^{t}\right|ds<\infty\mbox{ and \ensuremath{\underset{s\left\downarrow t\right.}{\lim}\mathbb{E}_{t}\left[\Lambda_{s}^{t}\right]=0}} & where\mbox{ }\Lambda_{s}^{t}=p_{s}^{t}\theta_{s}+q_{s}^{t}.\\
\mathbb{E}_{t}\left[\frac{d^{2}h\left(Y_{T}^{*}-Y_{t}^{*}\right)}{dx^{2}}\right]\geq0
\end{cases}\label{eq:1.11}
\end{equation}
then $u^{*}$ is an equilibrium control.\end{thm}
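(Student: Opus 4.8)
The plan is to start directly from the first-order expansion established in Proposition~\ref{prop:1.2}, divide by $\varepsilon$, and compute the limit as $\varepsilon\downarrow 0$ term by term; by Proposition~\ref{1.1} it then suffices to show this limit is nonnegative for every $t\in[0,T)$ and every $v\in L_{\mathcal{F}_t}^{2}(\Omega;\mathbb{R}^{d})$. Dividing (\ref{eq:1.6}) by $\varepsilon$ splits the difference quotient into a first-order piece $\frac{1}{\varepsilon}\mathbb{E}_t\int_t^{t+\varepsilon}v^{T}\Lambda_s^{t}\,ds$, a second-order piece $\frac{1}{2\varepsilon}\mathbb{E}_t\int_t^{t+\varepsilon}P_s^{t}|v|^{2}\,ds$, and an $o(\varepsilon)/\varepsilon$ remainder that vanishes; I would handle the two integral pieces separately.

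For the first piece, since $v$ is $\mathcal{F}_t$-measurable and $\mathbb{E}_t\int_t^{T}|\Lambda_s^{t}|\,ds<\infty$, Fubini lets me rewrite the average as $\frac{1}{\varepsilon}\int_t^{t+\varepsilon}v^{T}\mathbb{E}_t[\Lambda_s^{t}]\,ds$. The hypothesis $\lim_{s\downarrow t}\mathbb{E}_t[\Lambda_s^{t}]=0$ in (\ref{eq:1.11}) then forces this average to $0$, by the elementary fact that the running average over $[t,t+\varepsilon]$ of an integrable function possessing a right-limit at $t$ converges to that right-limit. Thus the first-order (stationarity) contribution drops out.

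For the second piece the difficulty is genuine: the hypotheses control only the \emph{terminal} value $P_T^{t}=\frac{d^{2}h(Y_T^{*}-Y_t^{*})}{dx^{2}}$ through condition (\ref{eq:1.11}), whereas what is needed is the behaviour of $\mathbb{E}_t[P_s^{t}]$ as $s\downarrow t$. I would bridge this gap by exploiting the linearity of the BSDE (\ref{eq:1.5}). Setting $m(s):=\mathbb{E}_t[P_s^{t}]$ and taking $\mathbb{E}_t$ of the integral form of (\ref{eq:1.5})---the stochastic integral drops because $Q^{t}\in L_{\mathcal{F}}^{2}$, and the interchange of $\mathbb{E}_t$ with the time integral is justified by Fubini---yields the linear integral equation $m(s)=\mathbb{E}_t[P_T^{t}]+\int_s^{T}2\widehat{r_u}\,m(u)\,du$, hence the ODE $m'(s)=-2\widehat{r_s}m(s)$ with $m(T)=\mathbb{E}_t[P_T^{t}]$, whose solution is $m(s)=e^{2\int_s^{T}\widehat{r_u}\,du}\,\mathbb{E}_t[P_T^{t}]$ since $\widehat{r}$ is deterministic. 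Because $e^{2\int_s^{T}\widehat{r_u}\,du}>0$ and $\mathbb{E}_t[P_T^{t}]=\mathbb{E}_t[\frac{d^{2}h(Y_T^{*}-Y_t^{*})}{dx^{2}}]\geq 0$ by hypothesis, $m$ is continuous and nonnegative, so the same running-average argument gives $\frac{1}{2\varepsilon}\mathbb{E}_t\int_t^{t+\varepsilon}P_s^{t}|v|^{2}\,ds\to\frac{1}{2}|v|^{2}e^{2\int_t^{T}\widehat{r_u}\,du}\,\mathbb{E}_t[\frac{d^{2}h(Y_T^{*}-Y_t^{*})}{dx^{2}}]\geq 0$.

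Combining the two limits, the full difference quotient converges to a nonnegative quantity for every admissible $v$, which by Proposition~\ref{1.1} shows that $u^{*}$ is an equilibrium control. The main obstacle, as indicated above, is precisely the passage from the terminal-value condition in (\ref{eq:1.11}) to the near-$t$ sign of the second-order adjoint; reducing $\mathbb{E}_t[P_s^{t}]$ to a solvable linear ODE (equivalently, solving the linear BSDE in closed form) is the crucial step, whereas the two averaging arguments and the Fubini interchanges are routine once the stated integrability and boundedness assumptions are invoked.
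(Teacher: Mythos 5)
Your proposal is correct and follows essentially the same route as the paper's proof: both start from the expansion in Proposition \ref{prop:1.2}, kill the first-order term via the running average of $v^{T}\mathbb{E}_t[\Lambda_s^t]$ together with the hypothesis $\lim_{s\downarrow t}\mathbb{E}_t[\Lambda_s^t]=0$, and reduce the second-order term to the identity $\mathbb{E}_t[P_s^t]=e^{2\int_s^T\widehat{r_u}\,du}\,\mathbb{E}_t\bigl[\tfrac{d^2h(Y_T^*-Y_t^*)}{dx^2}\bigr]\geq 0$. The only (cosmetic) difference is that you condition the linear BSDE (\ref{eq:1.5}) first and solve the resulting deterministic ODE for $m(s)=\mathbb{E}_t[P_s^t]$, whereas the paper solves the BSDE by the integrating-factor/martingale argument and then takes $\mathbb{E}_t$; both steps rest identically on linearity and on $\widehat{r}$ being deterministic.
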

\begin{proof}
Suppose there exits solution $\left(u^{*},Y^{*},p^{t},q^{t}\right)$
which satisfies the above condition (\ref{eq:1.11}). Then it is given
by (\ref{eq:1.5}) that 
\[
\begin{cases}
dP_{s}^{t} & =-2\widehat{r_{s}}P_{s}^{t}ds+\left(Q_{s}^{t}\right)^{T}dW_{s},\mbox{ \ensuremath{}s\ensuremath{\in\left[t,T\right]}}\\
P_{T}^{t} & =\frac{d^{2}h\left(Y_{T}^{*}-Y_{t}^{*}\right)}{dx^{2}}
\end{cases}
\]
which means
\[
\begin{cases}
d\left(e^{\int_{0}^{s}2\widehat{r_{u}}du}P_{s}^{t}\right)=e^{\int_{0}^{s}2\widehat{r_{u}}du}\left(Q_{s}^{t}\right)^{T}dW_{s},\mbox{ \ensuremath{}s\ensuremath{\in\left[t,T\right]}}\\
e^{\int_{0}^{T}2\widehat{r_{u}}du}P_{T}^{t}=e^{\int_{0}^{T}2\widehat{r_{u}}du}\frac{d^{2}h\left(Y_{T}^{*}-Y_{t}^{*}\right)}{dx^{2}}
\end{cases}
\]
thus
\begin{align*}
 & e^{\int_{0}^{s}2\widehat{r_{u}}du}P_{s}^{t}=\mathbb{E}_{s}\left[e^{\int_{0}^{T}2\widehat{r_{u}}du}\frac{d^{2}h\left(Y_{T}^{*}-Y_{t}^{*}\right)}{dx^{2}}\right]\\
\Rightarrow & P_{s}^{t}\mathbb{=E}_{s}\left[e^{\int_{s}^{T}2\widehat{r_{u}}du}\frac{d^{2}h\left(Y_{T}^{*}-Y_{t}^{*}\right)}{dx^{2}}\right]
\end{align*}
so we get
\begin{eqnarray}
\mathbb{E}_{t}\left[P_{s}^{t}\right] & = & \mathbb{E}_{t}\left\{ \mathbb{E}_{s}\left[e^{\int_{s}^{T}2\widehat{r_{u}}du}\frac{d^{2}h\left(Y_{T}^{*}-Y_{t}^{*}\right)}{dx^{2}}\right]\right\} \nonumber \\
 & = & \mathbb{E}_{t}\left[e^{\int_{s}^{T}2\widehat{r_{u}}du}\frac{d^{2}h\left(Y_{T}^{*}-Y_{t}^{*}\right)}{dx^{2}}\right]\nonumber \\
 & = & e^{\int_{s}^{T}2\widehat{r_{u}}du}\mathbb{E}_{t}\left[\frac{d^{2}h\left(Y_{T}^{*}-Y_{t}^{*}\right)}{dx^{2}}\right]\nonumber \\
 & \overset{(\ref{eq:1.11})}{\geqslant} & 0\label{eq:1.12}
\end{eqnarray}
as $\widehat{r}$ is a deterministic and $\mathbb{E}_{t}\left[\frac{d^{2}h\left(Y_{T}^{*}-Y_{t}^{*}\right)}{dx^{2}}\right]\geq0$
by our assumption. Then for any $t\in\left[0,T\right)$ and $v\in L_{\mathcal{F}_{t}}^{2}\left(\Omega;\mathbb{R}^{d}\right)$
we have by proposition \ref{prop:1.2} that 
\begin{eqnarray*}
\underset{\varepsilon\left\downarrow 0\right.}{\lim}\frac{J\left(t,Y_{t}^{*};u^{t,\varepsilon,v}\right)-J\left(t,Y_{t}^{*};u^{*}\right)}{\varepsilon} & \overset{(\ref{eq:1.6})}{=} & \underset{\varepsilon\left\downarrow 0\right.}{\lim}\frac{\mathbb{E}_{t}\int_{t}^{t+\varepsilon}v^{T}\Lambda_{s}^{t}+\frac{1}{2}P_{s}^{t}\left|v\right|^{2}ds+o\left(\varepsilon\right)}{\varepsilon}\\
 & = & \underset{\varepsilon\left\downarrow 0\right.}{\lim}\frac{\int_{t}^{t+\varepsilon}v^{T}\mathbb{E}_{t}\left[\Lambda_{s}^{t}\right]+\frac{1}{2}\mathbb{E}_{t}\left[P_{s}^{t}\right]\left|v\right|^{2}ds}{\varepsilon}\\
 & \overset{(\ref{eq:1.12})}{\geq} & \underset{\varepsilon\left\downarrow 0\right.}{\lim}\frac{\int_{t}^{t+\varepsilon}v^{T}\mathbb{E}_{t}\left[\Lambda_{s}^{t}\right]ds}{\varepsilon}\\
 & \geq & v^{T}\left(\underset{\varepsilon\left\downarrow 0\right.}{\lim}\underset{s\in\left[t,t+\varepsilon\right]}{\inf}\mathbb{E}_{t}\left[\Lambda_{s}^{t}\right]\right)\\
\\
 & \overset{(\ref{eq:1.11})}{=} & v^{T}\left(\underset{\varepsilon\left\downarrow 0\right.}{\lim}\mathbb{E}_{t}\left[\Lambda_{t+\varepsilon}^{t}\right]\right)\\
 & = & 0
\end{eqnarray*}
as $\underset{s\left\downarrow t\right.}{\lim}\mathbb{E}_{t}\left[\Lambda_{s}^{t}\right]=0$
by our assumption, which proves that $u^{*}$ is an equilibrium control
by proposition \ref{1.1}.
\end{proof}
\newpage{}

\section{Utility function $h\left(x\right)=\frac{X^{2}}{2}$ for mean variance
portfolio selection}

\paragraph{\textmd{As being given in the section 2, a standard static mean variance
portfolio selection problem with a fixed mean target $l$ is}}

\begin{align}
\underset{\pi}{\min} & \mbox{ \ensuremath{}\ensuremath{}\ensuremath{}\ensuremath{}\mbox{ \ensuremath{}\ensuremath{}\ensuremath{}\ensuremath{}}\mbox{ \ensuremath{}\ensuremath{}\ensuremath{}\ensuremath{}}}Var(X_{T})\nonumber \\
s.t. & \mbox{ \ensuremath{}\ensuremath{}\ensuremath{}\ensuremath{}\mbox{ \ensuremath{}\ensuremath{}\ensuremath{}\ensuremath{}}}\begin{cases}
dX_{s}=\left(r_{s}X_{s}+\pi_{s}^{T}\sigma_{s}\theta_{s}\right)ds+\pi_{s}^{T}\sigma_{s}dW_{s}\\
X_{0}=x_{0}\\
\mathbb{E}\left[X_{T}\right]=l\\
\pi\in U_{ad}^{\pi}=\left\{ \pi\left|\right.\pi\in L_{\mathcal{F}}^{2}\left(0,T;\mathbb{R}^{d}\right)\right\} 
\end{cases}\label{eq:2.3}
\end{align}
for the constraint $\mathbb{E}\left[X_{T}\right]=l$, we could introduce
a Lagrangian multiplier $2\lambda$ and consider the following problem
\begin{align}
\underset{\pi}{\min} & \mbox{ \ensuremath{}\ensuremath{}\ensuremath{}\ensuremath{}\mbox{ \ensuremath{}\ensuremath{}\ensuremath{}\ensuremath{}}\mbox{ \ensuremath{}\ensuremath{}\ensuremath{}\ensuremath{}}}\mathbb{E}\left[\left(X_{T}-\lambda\right)^{2}\right]\nonumber \\
s.t. & \mbox{ \ensuremath{}\ensuremath{}\ensuremath{}\ensuremath{}\mbox{ \ensuremath{}\ensuremath{}\ensuremath{}\ensuremath{}}}\begin{cases}
dX_{s}=\left(r_{s}X_{s}+\pi_{s}^{T}\sigma_{s}\theta_{s}\right)ds+\pi_{s}^{T}\sigma_{s}dW_{s}\\
X_{0}=x_{0}\\
\pi\in U_{ad}^{\pi}=\left\{ \pi\left|\right.\pi\in L_{\mathcal{F}}^{2}\left(0,T;\mathbb{R}^{d}\right)\right\} 
\end{cases}\label{eq:2.4}
\end{align}
and if there exits $\lambda^{*}$ s.t. the optimal solution $\left(u^{*},X^{*}\right)$
to the above problem (\ref{eq:2.4}) satisfies $\mathbb{E}\left[X_{T}^{*}\right]=l$,
then $\left(u^{*},X^{*}\right)$ is also optimal for (\ref{eq:2.3}).\\

\noindent Here, as having been described in section 2, we consider
a moving target $\mathbb{E}_{t}\left[X_{T}\right]=X_{t}e^{\int_{t}^{T}\mu_{s}ds}$
for any $t\in\left[0,T\right)$, where $\mu$ is our required return
process which is assumed to be deterministic and bounded. Then the
mean variance portfolio selection with moving target that we want
to solve is a family of following problems

\begin{align}
\underset{\pi}{\min} & \mbox{ \ensuremath{}\ensuremath{}\ensuremath{}\ensuremath{}\mbox{ \ensuremath{}\ensuremath{}\ensuremath{}\ensuremath{}}\mbox{ \ensuremath{}\ensuremath{}\ensuremath{}\ensuremath{}}}\frac{1}{2}Var_{t}(X_{T})\nonumber \\
s.t. & \mbox{ \ensuremath{}\ensuremath{}\ensuremath{}\ensuremath{}\mbox{ \ensuremath{}\ensuremath{}\ensuremath{}\ensuremath{}}}\begin{cases}
dX_{s}=\left(r_{s}X_{s}+\pi_{s}^{T}\sigma_{s}\theta_{s}\right)ds+\pi_{s}^{T}\sigma_{s}dW_{s}\\
X_{t}=x_{t}\\
\pi\in\overline{U_{ad}^{\pi}}=\left\{ \pi\left|\right.\pi\in L_{\mathcal{F}}^{2}\left(0,T;\mathbb{R}^{d}\right)\mbox{and }\mathbb{E}_{t}\left[X_{T}\right]=X_{t}e^{\int_{t}^{T}\mu_{s}ds},\forall t\in\left[0,T\right)\right\} 
\end{cases}\label{eq:2.45}
\end{align}
which is the family of following problems

\begin{align}
\underset{\pi}{\min} & \mbox{ \ensuremath{}\ensuremath{}\ensuremath{}\ensuremath{}\mbox{ \ensuremath{}\ensuremath{}\ensuremath{}\ensuremath{}}\mbox{ \ensuremath{}\ensuremath{}\ensuremath{}\ensuremath{}}}\mathbb{E}_{t}\left[\frac{\left(X_{T}-X_{t}e^{\int_{t}^{T}\mu_{s}ds}\right)^{2}}{2}\right]\nonumber \\
s.t. & \mbox{ \ensuremath{}\ensuremath{}\ensuremath{}\ensuremath{}\mbox{ \ensuremath{}\ensuremath{}\ensuremath{}\ensuremath{}}}\begin{cases}
dX_{s}=\left(r_{s}X_{s}+\pi_{s}^{T}\sigma_{s}\theta_{s}\right)ds+\pi_{s}^{T}\sigma_{s}dW_{s}\\
X_{t}=x_{t}\\
\pi\in\overline{U_{ad}^{\pi}}=\left\{ \pi\left|\right.\pi\in L_{\mathcal{F}}^{2}\left(0,T;\mathbb{R}^{d}\right)\mbox{and }\mathbb{E}_{t}\left[X_{T}\right]=X_{t}e^{\int_{t}^{T}\mu_{s}ds},\forall t\in\left[0,T\right)\right\} 
\end{cases}\label{eq:2.5}
\end{align}
or we can write the family of problems as the following form 

\begin{align}
\underset{\pi}{\min} & \mbox{ \ensuremath{}\ensuremath{}\ensuremath{}\ensuremath{}\mbox{ \ensuremath{}\ensuremath{}\ensuremath{}\ensuremath{}}\mbox{ \ensuremath{}\ensuremath{}\ensuremath{}\ensuremath{}}}\mathbb{E}_{t}\left[h\left(X_{T}-X_{t}e^{\int_{t}^{T}\mu_{s}ds}\right)\right]\nonumber \\
s.t. & \mbox{ \ensuremath{}\ensuremath{}\ensuremath{}\ensuremath{}\mbox{ \ensuremath{}\ensuremath{}\ensuremath{}\ensuremath{}}}\begin{cases}
dX_{s}=\left(r_{s}X_{s}+\pi_{s}^{T}\sigma_{s}\theta_{s}\right)ds+\pi_{s}^{T}\sigma_{s}dW_{s}\\
X_{t}=x_{t}\\
\pi\in\overline{U_{ad}^{\pi}}=\left\{ \pi\left|\right.\pi\in L_{\mathcal{F}}^{2}\left(0,T;\mathbb{R}^{d}\right)\mbox{and }\mathbb{E}_{t}\left[X_{T}\right]=X_{t}e^{\int_{t}^{T}\mu_{s}ds},\forall t\in\left[0,T\right)\right\} 
\end{cases}\label{eq:2.102-2-1-1}
\end{align}
which is one of our problems being set in section 2 with $h(x)=\frac{X^{2}}{2}$

\subsection{Introducing a Lagrangian multiplier for our problem}

\noindent Inspired by the static Lagrangian multiplier method we introduce
a deterministic process $\lambda$ with $\int_{0}^{T}\left|\lambda_{s}\right|ds<\infty$
and consider a family of problems for any $t\in\left[0,T\right)$
as follows

\begin{align}
\underset{\pi}{\min} & \mbox{ \ensuremath{}\ensuremath{}\ensuremath{}\ensuremath{}\mbox{ \ensuremath{}\ensuremath{}\ensuremath{}\ensuremath{}}\mbox{ \ensuremath{}\ensuremath{}\ensuremath{}\ensuremath{}}}\mathbb{E}_{t}\left[\frac{\left(X_{T}-X_{t}e^{\int_{t}^{T}\lambda_{s}ds}\right)^{2}}{2}\right]\nonumber \\
s.t. & \mbox{ \ensuremath{}\ensuremath{}\ensuremath{}\ensuremath{}\mbox{ \ensuremath{}\ensuremath{}\ensuremath{}\ensuremath{}}}\begin{cases}
dX_{s}=\left(r_{s}X_{s}+\pi_{s}^{T}\sigma_{s}\theta_{s}\right)ds+\pi_{s}^{T}\sigma_{s}dW_{s}\\
X_{t}=x_{t}\\
\pi\in U_{ad}^{\pi}=\left\{ \pi\left|\right.\pi\in L_{\mathcal{F}}^{2}\left(0,T;\mathbb{R}^{d}\right)\right\} 
\end{cases}\label{eq:2.6}
\end{align}
\\

\begin{thm}
\label{thm:2.1}If there exits a deterministic process $\lambda$$^{*}$
with\textup{ $\int_{0}^{T}\left|\lambda_{s}^{*}\right|ds<\infty$}
s.t. an equilibrium solution $\left(\pi^{*},X^{*}\right)$ to the
above family of problems (\ref{eq:2.6}) having $\lambda^{*}$ as
parameter satisfies the condition that \textup{$\mathbb{E}_{t}\left[X_{T}^{*}\right]=X_{t}^{*}e^{\int_{t}^{T}\mu_{s}ds}$
}for any $t\in\left[0,T\right)$, then $\left(\pi^{*},X^{*}\right)$
is also equilibrium for the family of problems (\ref{eq:2.5}).\\
\end{thm}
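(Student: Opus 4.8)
The plan is to run a Lagrangian-relaxation argument at the level of equilibria, exploiting that the moving-target constraint only shrinks the set of admissible perturbations while leaving the relevant objective increments unchanged on that set. Throughout, write $J_{\mu}(t,X_t;\pi)=\mathbb{E}_t\!\left[\frac{1}{2}\big(X_T-X_t e^{\int_t^T\mu_s\,ds}\big)^2\right]$ for the objective of (\ref{eq:2.5}) and $J_{\lambda^*}(t,X_t;\pi)=\mathbb{E}_t\!\left[\frac{1}{2}\big(X_T-X_t e^{\int_t^T\lambda_s^*\,ds}\big)^2\right]$ for that of (\ref{eq:2.6}). By hypothesis $\pi^*$ is an equilibrium for (\ref{eq:2.6}), so for every $t\in[0,T)$ and every $v\in L^2_{\mathcal{F}_t}(\Omega;\mathbb{R}^d)$ we have $\lim_{\varepsilon\downarrow0}\varepsilon^{-1}\big[J_{\lambda^*}(t,X_t^*;\pi^{t,\varepsilon,v})-J_{\lambda^*}(t,X_t^*;\pi^*)\big]\ge0$, with \emph{no} restriction on $v$ because the admissible set of (\ref{eq:2.6}) is all of $L^2_{\mathcal{F}}$. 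To establish that $\pi^*$ is an equilibrium for (\ref{eq:2.5}) I must verify the same inequality for $J_\mu$, but now only for those $v$ that keep $\pi^{t,\varepsilon,v}$ inside $\overline{U_{ad}^\pi}$.

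The computational core is the conditional variance--bias decomposition: for any control and any $\mathcal{F}_t$-measurable target $g$,
\[
\mathbb{E}_t\!\left[\tfrac{1}{2}(X_T-g)^2\right]=\tfrac{1}{2}Var_t(X_T)+\tfrac{1}{2}\big(\mathbb{E}_t[X_T]-g\big)^2,
\]
since the cross term vanishes under $\mathbb{E}_t$. I would apply this with the two $\mathcal{F}_t$-measurable targets $g_\mu=X_t^* e^{\int_t^T\mu_s\,ds}$ and $g_{\lambda^*}=X_t^* e^{\int_t^T\lambda_s^*\,ds}$, noting that both depend only on $X_t^*$, which is common to $\pi^*$ and to $\pi^{t,\varepsilon,v}$ because the spike variation starts at $t$. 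For an admissible perturbation the target constraint in $\overline{U_{ad}^\pi}$ read at $s=t$ forces $\mathbb{E}_t[X_T^{t,\varepsilon,v}]=g_\mu$, and the hypothesis $\mathbb{E}_t[X_T^*]=g_\mu$ gives the same for $\pi^*$. Substituting, the bias term of $J_\mu$ is zero for both controls, while the bias term of $J_{\lambda^*}$ equals the single deterministic constant $\tfrac{1}{2}(g_\mu-g_{\lambda^*})^2$ for both controls. The bias terms therefore cancel in every difference, and I obtain
\[
J_{\lambda^*}(t,X_t^*;\pi^{t,\varepsilon,v})-J_{\lambda^*}(t,X_t^*;\pi^*)=J_\mu(t,X_t^*;\pi^{t,\varepsilon,v})-J_\mu(t,X_t^*;\pi^*)=\tfrac{1}{2}\big[Var_t(X_T^{t,\varepsilon,v})-Var_t(X_T^*)\big]
\]
for every $v$ with $\pi^{t,\varepsilon,v}\in\overline{U_{ad}^\pi}$.

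Combining the two steps finishes the argument. For any admissible $v$,
\[
\lim_{\varepsilon\downarrow0}\frac{J_\mu(t,X_t^*;\pi^{t,\varepsilon,v})-J_\mu(t,X_t^*;\pi^*)}{\varepsilon}=\lim_{\varepsilon\downarrow0}\frac{J_{\lambda^*}(t,X_t^*;\pi^{t,\varepsilon,v})-J_{\lambda^*}(t,X_t^*;\pi^*)}{\varepsilon}\ge0,
\]
the inequality being exactly the equilibrium property of $\pi^*$ for (\ref{eq:2.6}), which holds \emph{a fortiori} on the smaller class of admissible perturbations. By Definition \ref{2.1} this is precisely the assertion that $\pi^*$ is an equilibrium for (\ref{eq:2.5}).

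The delicate point --- and the only place deserving care --- is the logical direction of the relaxation: I must deduce equilibrium for the \emph{constrained} problem (\ref{eq:2.5}) from equilibrium for the \emph{relaxed} problem (\ref{eq:2.6}), whereas the two carry different objectives. This works only because validity of the $J_{\lambda^*}$-inequality for every $v$ certainly entails it for the admissible $v$, and because the two objectives differ on admissible controls by a quantity independent of which admissible control is chosen. I would therefore be most careful in justifying that $\mathbb{E}_t[X_T^{t,\varepsilon,v}]=g_\mu$ genuinely holds for admissible spike variations --- this follows from evaluating the constraint defining $\overline{U_{ad}^\pi}$ at the single time $s=t$ together with $X_t^{t,\varepsilon,v}=X_t^*$ --- since the entire cancellation of the bias terms rests on that identity.
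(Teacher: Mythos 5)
Your proof is correct and follows essentially the same route as the paper: both arguments use the hypothesis to place $\pi^*$ in $\overline{U_{ad}^{\pi}}$, then show that on the constrained admissible set the $\lambda^*$-objective and the $\mu$-objective have identical increments $J\left(t,X_{t}^{*};\pi^{t,\varepsilon,v}\right)-J\left(t,X_{t}^{*};\pi^{*}\right)$, so the equilibrium inequality for the relaxed problem (\ref{eq:2.6}) transfers a fortiori to (\ref{eq:2.5}). Your conditional variance--bias decomposition is simply a cleaner packaging of the paper's explicit expansion of the squares in the chain (\ref{eq:2.8})--(\ref{eq:2.10-1}); the underlying cancellation is the same.
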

\begin{proof}
Suppose $\left(\pi^{*},X^{*}\right)$ is an equilibrium solution to
the family of problems (\ref{eq:2.6}) having $\lambda^{*}$ as parameter
with $\int_{0}^{T}\left|\lambda_{s}^{*}\right|ds<\infty$ s.t. $\mathbb{E}_{t}\left[X_{T}^{*}\right]=X_{t}^{*}e^{\int_{t}^{T}\mu_{s}ds}$,
which says $\left(\pi^{*},X^{*}\right)$ is an equilibrium solution
to the family of following problems for any $t\in\left[0,T\right)$
\begin{align}
\underset{\pi}{\min} & \mbox{ \ensuremath{}\ensuremath{}\ensuremath{}\ensuremath{}\mbox{ \ensuremath{}\ensuremath{}\ensuremath{}\ensuremath{}}\mbox{ \ensuremath{}\ensuremath{}\ensuremath{}\ensuremath{}}}\mathbb{E}_{t}\left[\frac{\left(X_{T}-X_{t}e^{\int_{t}^{T}\lambda_{s}^{*}ds}\right)^{2}}{2}\right]\nonumber \\
s.t. & \mbox{ \ensuremath{}\ensuremath{}\ensuremath{}\ensuremath{}\mbox{ \ensuremath{}\ensuremath{}\ensuremath{}\ensuremath{}}}\begin{cases}
dX_{s}=\left(r_{s}X_{s}+\pi_{s}^{T}\sigma_{s}\theta_{s}\right)ds+\pi_{s}^{T}\sigma_{s}dW_{s}\\
X_{t}=x_{t}\\
\pi\in U_{ad}^{\pi}=\left\{ \pi\left|\right.\pi\in L_{\mathcal{F}}^{2}\left(0,T;\mathbb{R}^{d}\right)\right\} 
\end{cases}\label{eq:2.6-2}
\end{align}
Firstly, $\mathbb{E}_{t}\left[X_{T}^{*}\right]=X_{t}^{*}e^{\int_{t}^{T}\mu_{s}ds}$
implies that $ $$\pi^{*}\in\overline{U_{ad}^{\pi}}$ which can be
used with definition \ref{2.1} to prove that $\left(\pi^{*},X^{*}\right)$
is also equilibrium for the family of following problems for any $t\in\left[0,T\right)$
\begin{align}
\underset{\pi}{\min} & \mbox{ \ensuremath{}\ensuremath{}\ensuremath{}\ensuremath{}\mbox{ \ensuremath{}\ensuremath{}\ensuremath{}\ensuremath{}}\mbox{ \ensuremath{}\ensuremath{}\ensuremath{}\ensuremath{}}}\mathbb{E}_{t}\left[\frac{\left(X_{T}-X_{t}e^{\int_{t}^{T}\lambda_{s}^{*}ds}\right)^{2}}{2}\right]\nonumber \\
s.t. & \mbox{ \ensuremath{}\ensuremath{}\ensuremath{}\ensuremath{}\mbox{ \ensuremath{}\ensuremath{}\ensuremath{}\ensuremath{}}}\begin{cases}
dX_{s}=\left(r_{s}X_{s}+\pi_{s}^{T}\sigma_{s}\theta_{s}\right)ds+\pi_{s}^{T}\sigma_{s}dW_{s}\\
X_{t}=x_{t}\\
\pi\in\overline{U_{ad}^{\pi}}=\left\{ \pi\left|\right.\pi\in L_{\mathcal{F}}^{2}\left(0,T;\mathbb{R}^{d}\right)\mbox{and }\mathbb{E}_{t}\left[X_{T}\right]=X_{t}e^{\int_{t}^{T}\mu_{s}ds},\forall t\in\left[0,T\right)\right\} 
\end{cases}\label{eq:2.8}
\end{align}
Here we define $J\left(t,X_{t};\pi\right)$ for all problems in the
same way as before. Since the $J\left(t,X_{t}^{*};\pi^{t,\varepsilon,v}\right)-J\left(t,X_{t}^{*};\pi^{*}\right)$
term used in definition \ref{2.1} for problems (\ref{eq:2.8}) above
and the one used for problems (\ref{eq:2.9}) below are the same after
calculation, we deduce that the above family of problems is equivalent
to
\begin{align}
\underset{\pi}{\min} & \mbox{ \ensuremath{}\ensuremath{}\ensuremath{}\ensuremath{}\mbox{ \ensuremath{}\ensuremath{}\ensuremath{}\ensuremath{}}\mbox{ \ensuremath{}\ensuremath{}\ensuremath{}\ensuremath{}}}\frac{1}{2}\left\{ \mathbb{E}_{t}\left[X_{T}^{2}\right]-2X_{t}e^{\int_{t}^{T}\lambda_{s}^{*}ds}\mathbb{E}_{t}\left[X_{T}\right]\right\} +\frac{1}{2}X_{t}^{2}\left[2e^{\int_{t}^{T}\lambda_{s}^{*}ds}e^{\int_{t}^{T}\mu_{s}ds}-e^{2\int_{t}^{T}\mu_{s}ds}\right]\nonumber \\
s.t. & \mbox{ \ensuremath{}\ensuremath{}\ensuremath{}\ensuremath{}\mbox{ \ensuremath{}\ensuremath{}\ensuremath{}\ensuremath{}}}\begin{cases}
dX_{s}=\left(r_{s}X_{s}+\pi_{s}^{T}\sigma_{s}\theta_{s}\right)ds+\pi_{s}^{T}\sigma_{s}dW_{s}\\
X_{t}=x_{t}\\
\pi\in\overline{U_{ad}^{\pi}}=\left\{ \pi\left|\right.\pi\in L_{\mathcal{F}}^{2}\left(0,T;\mathbb{R}^{d}\right)\mbox{and }\mathbb{E}_{t}\left[X_{T}\right]=X_{t}e^{\int_{t}^{T}\mu_{s}ds},\forall t\in\left[0,T\right)\right\} 
\end{cases}\label{eq:2.9}
\end{align}
which after calculation is
\begin{align}
\underset{\pi}{\min} & \mbox{ \ensuremath{}\ensuremath{}\ensuremath{}\ensuremath{}\mbox{ \ensuremath{}\ensuremath{}\ensuremath{}\ensuremath{}}\mbox{ \ensuremath{}\ensuremath{}\ensuremath{}\ensuremath{}}}\frac{1}{2}\left\{ \mathbb{E}_{t}\left[X_{T}^{2}\right]-\left(X_{t}e^{\int_{t}^{T}\mu_{s}ds}\right)^{2}\right\} \nonumber \\
s.t. & \mbox{ \ensuremath{}\ensuremath{}\ensuremath{}\ensuremath{}\mbox{ \ensuremath{}\ensuremath{}\ensuremath{}\ensuremath{}}}\begin{cases}
dX_{s}=\left(r_{s}X_{s}+\pi_{s}^{T}\sigma_{s}\theta_{s}\right)ds+\pi_{s}^{T}\sigma_{s}dW_{s}\\
X_{t}=x_{t}\\
\pi\in\overline{U_{ad}^{\pi}}=\left\{ \pi\left|\right.\pi\in L_{\mathcal{F}}^{2}\left(0,T;\mathbb{R}^{d}\right)\mbox{and }\mathbb{E}_{t}\left[X_{T}\right]=X_{t}e^{\int_{t}^{T}\mu_{s}ds},\forall t\in\left[0,T\right)\right\} 
\end{cases}\label{eq:2.10}
\end{align}
which again by using the constraint $\mathbb{E}_{t}\left[X_{T}\right]=X_{t}e^{\int_{t}^{T}\mu_{s}ds}$
can be written as 
\begin{align}
\underset{\pi}{\min} & \mbox{ \ensuremath{}\ensuremath{}\ensuremath{}\ensuremath{}\mbox{ \ensuremath{}\ensuremath{}\ensuremath{}\ensuremath{}}\mbox{ \ensuremath{}\ensuremath{}\ensuremath{}\ensuremath{}}}\frac{1}{2}\mathbb{E}_{t}\left[\left(X_{T}-X_{t}e^{\int_{t}^{T}\mu_{s}ds}\right)^{2}\right]\nonumber \\
s.t. & \mbox{ \ensuremath{}\ensuremath{}\ensuremath{}\ensuremath{}\mbox{ \ensuremath{}\ensuremath{}\ensuremath{}\ensuremath{}}}\begin{cases}
dX_{s}=\left(r_{s}X_{s}+\pi_{s}^{T}\sigma_{s}\theta_{s}\right)ds+\pi_{s}^{T}\sigma_{s}dW_{s}\\
X_{t}=x_{t}\\
\pi\in\overline{U_{ad}^{\pi}}=\left\{ \pi\left|\right.\pi\in L_{\mathcal{F}}^{2}\left(0,T;\mathbb{R}^{d}\right)\mbox{and }\mathbb{E}_{t}\left[X_{T}\right]=X_{t}e^{\int_{t}^{T}\mu_{s}ds},\forall t\in\left[0,T\right)\right\} 
\end{cases}\label{eq:2.10-1}
\end{align}
thus the family of problems (\ref{eq:2.8}) is equivalent to the family
of following problems
\begin{align}
\underset{\pi}{\min} & \mbox{ \ensuremath{}\ensuremath{}\ensuremath{}\ensuremath{}\mbox{ \ensuremath{}\ensuremath{}\ensuremath{}\ensuremath{}}\mbox{ \ensuremath{}\ensuremath{}\ensuremath{}\ensuremath{}}}\mathbb{E}_{t}\left[\frac{\left(X_{T}-X_{t}e^{\int_{t}^{T}\mu_{s}ds}\right)^{2}}{2}\right]\nonumber \\
s.t. & \mbox{ \ensuremath{}\ensuremath{}\ensuremath{}\ensuremath{}\mbox{ \ensuremath{}\ensuremath{}\ensuremath{}\ensuremath{}}}\begin{cases}
dX_{s}=\left(r_{s}X_{s}+\pi_{s}^{T}\sigma_{s}\theta_{s}\right)ds+\pi_{s}^{T}\sigma_{s}dW_{s}\\
X_{t}=x_{t}\\
\pi\in\overline{U_{ad}^{\pi}}=\left\{ \pi\left|\right.\pi\in L_{\mathcal{F}}^{2}\left(0,T;\mathbb{R}^{d}\right)\mbox{and }\mathbb{E}_{t}\left[X_{T}\right]=X_{t}e^{\int_{t}^{T}\mu_{s}ds},\forall t\in\left[0,T\right)\right\} 
\end{cases}\label{eq:2.102}
\end{align}
which is exactly the family of problems (\ref{eq:2.5}), thus we deduce
that $\left(u^{*},X^{*}\right)$ is also equilibrium for (\ref{eq:2.5}).
\end{proof}

\subsection{Transformation of our problem}

By the above theorem, we try to solve the family of problems (\ref{eq:2.6})
instead, that is

\begin{align}
\underset{\pi}{\min} & \mbox{ \ensuremath{}\ensuremath{}\ensuremath{}\ensuremath{}\mbox{ \ensuremath{}\ensuremath{}\ensuremath{}\ensuremath{}}\mbox{ \ensuremath{}\ensuremath{}\ensuremath{}\ensuremath{}}}\mathbb{E}_{t}\left[\frac{\left(X_{T}-X_{t}e^{\int_{t}^{T}\lambda_{s}ds}\right)^{2}}{2}\right]\nonumber \\
s.t. & \mbox{ \ensuremath{}\ensuremath{}\ensuremath{}\ensuremath{}\mbox{ \ensuremath{}\ensuremath{}\ensuremath{}\ensuremath{}}}\begin{cases}
dX_{s}=\left(r_{s}X_{s}+\pi_{s}^{T}\sigma_{s}\theta_{s}\right)ds+\pi_{s}^{T}\sigma_{s}dW_{s}\\
X_{t}=x_{t}\\
\pi\in U_{ad}^{\pi}=\left\{ \pi\left|\right.\pi\in L_{\mathcal{F}}^{2}\left(0,T;\mathbb{R}^{d}\right)\right\} 
\end{cases}\label{eq:2.6-1}
\end{align}
by letting for any $s\in\left[0,T\right]$
\begin{equation}
Y_{s}=X_{s}e^{\int_{s}^{T}\lambda_{u}du}\label{eq:2.13}
\end{equation}
the above family of problems is equivalent to 
\begin{align}
\underset{u}{\min} & \mbox{ \ensuremath{}\ensuremath{}\ensuremath{}\ensuremath{}\mbox{ \ensuremath{}\ensuremath{}\ensuremath{}\ensuremath{}}\mbox{ \ensuremath{}\ensuremath{}\ensuremath{}\ensuremath{}}}\mathbb{E}_{t}\left[\frac{\left(Y_{T}-Y_{t}\right)^{2}}{2}\right]\nonumber \\
s.t. & \mbox{ \ensuremath{}\ensuremath{}\ensuremath{}\ensuremath{}\mbox{ \ensuremath{}\ensuremath{}\ensuremath{}\ensuremath{}}}\begin{cases}
dY_{s}=\left(\widehat{r_{s}}Y_{s}+u_{s}^{T}\theta_{s}\right)ds+u_{s}^{T}dW_{s}\\
Y_{t}=y_{t}\\
u\in U_{ad}^{u}=\left\{ u\left|\right.u\in L_{\mathcal{F}}^{2}\left(0,T;\mathbb{R}^{d}\right)\right\} 
\end{cases}\label{eq:2.14}
\end{align}
where $\widehat{r}_{s}=r_{s}-\lambda_{s}$, $u_{s}=e^{\int_{s}^{T}\lambda_{u}du}\sigma_{s}^{T}\pi_{s}$,
$y_{t}=x_{t}e^{\int_{t}^{T}\lambda_{u}du}$ and $ $$\pi\in L_{\mathcal{F}}^{2}\left(0,T;\mathbb{R}^{d}\right)$
if and only if $u\in L_{\mathcal{F}}^{2}\left(0,T;\mathbb{R}^{d}\right)$
which is implied by our previous assumption $\sigma$ is bounded and
that $e^{\int_{s}^{T}\lambda_{u}du}$ is bounded due to $\int_{0}^{T}\left|\lambda_{s}\right|ds<\infty$.\\

\noindent The above family of problems can be written as 

\noindent 
\begin{align}
\underset{u}{\min} & \mbox{ \ensuremath{}\ensuremath{}\ensuremath{}\ensuremath{}\mbox{ \ensuremath{}\ensuremath{}\ensuremath{}\ensuremath{}}\mbox{ \ensuremath{}\ensuremath{}\ensuremath{}\ensuremath{}}}\mathbb{E}_{t}\left[h\left(Y_{T}-Y_{t}\right)\right]\nonumber \\
s.t. & \mbox{ \ensuremath{}\ensuremath{}\ensuremath{}\ensuremath{}\mbox{ \ensuremath{}\ensuremath{}\ensuremath{}\ensuremath{}}}\begin{cases}
dY_{s}=\left(\widehat{r_{s}}Y_{s}+u_{s}^{T}\theta_{s}\right)ds+u_{s}^{T}dW_{s}\\
Y_{t}=y_{t}\\
u\in U_{ad}^{u}=\left\{ u\left|\right.u\in L_{\mathcal{F}}^{2}\left(0,T;\mathbb{R}^{d}\right)\right\} 
\end{cases}\label{eq:2.15}
\end{align}
where $h\left(x\right)=\frac{x^{2}}{2}$, and this is exactly a family
of problems of the form in (\ref{eq:1.3}), so we have the following
system of BSDEs by (\ref{eq:1.4}) and (\ref{eq:1.5}) 
\begin{equation}
\begin{cases}
dp_{s}^{t} & =-\widehat{r_{s}}p_{s}^{t}ds+\left(q_{s}^{t}\right)^{T}dW_{s},\mbox{ \ensuremath{}s\ensuremath{\in\left[t,T\right]}}\\
p_{T}^{t} & =\frac{dh\left(Y_{T}^{*}-Y_{t}^{*}\right)}{dx}=Y_{T}^{*}-Y_{t}^{*}
\end{cases}\label{eq:2.16}
\end{equation}
\begin{equation}
\begin{cases}
dP_{s}^{t} & =-2\widehat{r_{s}}P_{s}^{t}ds+\left(Q_{s}^{t}\right)^{T}dW_{s},\mbox{ \ensuremath{}s\ensuremath{\in\left[t,T\right]}}\\
P_{T}^{t} & =\frac{d^{2}h\left(Y_{T}^{*}-Y_{t}^{*}\right)}{dx^{2}}=1
\end{cases}\label{eq:1.5-1}
\end{equation}

\begin{prop}
\label{prop:2.2}If (\ref{eq:2.15}) and (\ref{eq:2.16}) admit a
solution $\left(u^{*},Y^{*},p^{t},q^{t}\right)$ for any $t\in\left[0,T\right)$
s.t.
\begin{equation}
\begin{cases}
u^{*}\in L_{\mathcal{F}}^{2}\left(0,T;\mathbb{R}^{d}\right)\\
\mathbb{E}_{t}\int_{t}^{T}\left|\Lambda_{s}^{t}\right|ds<\infty\mbox{ and \ensuremath{\underset{s\left\downarrow t\right.}{\lim}\mathbb{E}_{t}\left[\Lambda_{s}^{t}\right]=0}} & where\mbox{ }\Lambda_{s}^{t}=p_{s}^{t}\theta+q_{s}^{t}.
\end{cases}\label{eq:2.18}
\end{equation}
 then $u^{*}$ is an equilibrium control for the family of problems
in (\ref{eq:2.14}). \end{prop}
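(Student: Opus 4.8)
The plan is to recognize Proposition~\ref{prop:2.2} as a direct specialization of Theorem~\ref{thm:1.3}, so that essentially no new work is required. First I would observe that the family of problems (\ref{eq:2.14})/(\ref{eq:2.15}) is exactly an instance of the generic family (\ref{eq:1.3}): the controlled dynamics coincide, the concrete utility is $h(x)=\frac{x^{2}}{2}$, and the drift coefficient is $\widehat{r}_{s}=r_{s}-\lambda_{s}$, which is indeed deterministic because both $r$ and $\lambda$ are assumed deterministic, precisely as the structural hypotheses of (\ref{eq:1.3}) demand. Correspondingly, the adjoint BSDE (\ref{eq:2.16}) is nothing but the specialization of (\ref{eq:1.4}) to this $h$, since $\frac{dh}{dx}(Y_{T}^{*}-Y_{t}^{*})=Y_{T}^{*}-Y_{t}^{*}$.

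Next I would match the hypotheses one by one. Condition (\ref{eq:2.18}) reproduces verbatim the first two requirements of (\ref{eq:1.11}), namely $u^{*}\in L_{\mathcal{F}}^{2}(0,T;\mathbb{R}^{d})$ together with the integrability $\mathbb{E}_{t}\int_{t}^{T}|\Lambda_{s}^{t}|\,ds<\infty$ and the limit $\lim_{s\downarrow t}\mathbb{E}_{t}[\Lambda_{s}^{t}]=0$. The only remaining hypothesis of Theorem~\ref{thm:1.3} is the curvature condition $\mathbb{E}_{t}[\frac{d^{2}h}{dx^{2}}(Y_{T}^{*}-Y_{t}^{*})]\geq0$. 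For $h(x)=\frac{x^{2}}{2}$ one has $\frac{d^{2}h}{dx^{2}}\equiv1$, so this quantity equals $\mathbb{E}_{t}[1]=1\geq0$ and holds automatically, with no sign restriction needed. Equivalently, the auxiliary BSDE (\ref{eq:1.5-1}) carries the constant terminal value $P_{T}^{t}=1$, whence $P_{s}^{t}=\mathbb{E}_{s}[e^{\int_{s}^{T}2\widehat{r}_{u}\,du}]>0$, so the nonnegativity of $\mathbb{E}_{t}[P_{s}^{t}]$ used inside the proof of Theorem~\ref{thm:1.3} is guaranteed.

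With all three requirements in (\ref{eq:1.11}) verified, Theorem~\ref{thm:1.3} applies directly and yields that $u^{*}$ is an equilibrium control for (\ref{eq:2.14}). Because the entire argument is a checking of hypotheses against an already-proved theorem, there is no genuine analytic obstacle in this proposition; the only point that must be confirmed is that the structural assumptions underlying (\ref{eq:1.3})---in particular the determinism of $\widehat{r}$ and the boundedness of $\theta$---are inherited here, which they are by the definition $\widehat{r}_{s}=r_{s}-\lambda_{s}$ and the standing boundedness assumptions. The substantive content therefore resides entirely in Theorem~\ref{thm:1.3} and in the change of variables (\ref{eq:2.13}) that reduced the original mean-variance problem to the form (\ref{eq:2.14}).
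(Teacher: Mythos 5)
Your proof is correct and follows essentially the same route as the paper: the paper's own proof also just notes that for $h(x)=\frac{x^{2}}{2}$ the curvature condition $\mathbb{E}_{t}\left[\frac{d^{2}h\left(Y_{T}^{*}-Y_{t}^{*}\right)}{dx^{2}}\right]=1\geq0$ in (\ref{eq:1.11}) holds automatically, so that Theorem \ref{thm:1.3} applies directly under hypothesis (\ref{eq:2.18}). Your additional checks (determinism of $\widehat{r}$, positivity of $P_{s}^{t}$) are consistent with, and merely elaborate on, the paper's one-line argument.
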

\begin{proof}
Since one of the sufficient conditions in (\ref{eq:1.11}) $\mathbb{E}_{t}\left[\frac{d^{2}h\left(Y_{T}^{*}-Y_{t}^{*}\right)}{dx^{2}}\right]=1\geq0$
has been met, thus we have the required result by theorem \ref{thm:1.3}.
\end{proof}

\subsection{Details of finding a potential equilibrium}

\noindent By the assumptions we have made so far we have that $\theta$
is bounded and $\widehat{r}$ is deterministic with$\int_{0}^{T}\left|\widehat{r}_{s}\right|ds<\infty$.
Firstly, we allow $\mu^{x}$ which is our drift rate vector process
of risky assets to be random, i.e. $\theta$ could be random.\\

\noindent Then for any $t\in\left[0,T\right)$ we make the following
Ansatz
\begin{equation}
p_{s}^{t}=M_{s}Y_{s}^{*}-\Gamma_{s}Y_{t}^{*}
\end{equation}
where $(M,K),(\Gamma,\phi)$ are the solutions of the following BSDEs
\begin{align}
 & \begin{cases}
dM_{s} & =-f_{M,K}\left(s,M_{s},K_{s}\right)ds+K_{s}^{T}dW_{s}\\
M_{T} & =1
\end{cases}\label{eq:2.20-1}\\
 & \begin{cases}
d\Gamma_{s} & =-f_{\Gamma,\phi}\left(s,\Gamma{}_{s},\phi_{s}\right)ds+\phi_{s}^{T}dW_{s}\\
\Gamma_{T} & =1
\end{cases}\label{eq:2.21}
\end{align}
by It$\hat{\mbox{o}}$ formula we get 
\begin{eqnarray}
d\left(M_{s}Y_{s}^{*}\right) & = & M_{s}dY_{s}^{*}+Y_{s}^{*}dM_{s}+d\left\langle Y^{*},M\right\rangle _{s}\nonumber \\
 & = & M_{s}\left[\left(\widehat{r_{s}}Y_{s}^{*}+\left(u_{s}^{*}\right)^{T}\theta_{s}\right)ds+\left(u_{s}^{*}\right)^{T}dW_{s}\right]\nonumber \\
 &  & +Y_{s}^{*}\left[-f_{M,K}\left(s,M_{s},K_{s}\right)ds+K_{s}^{T}dW_{s}\right]+K_{s}^{T}u_{s}^{*}ds\nonumber \\
 & = & \left[M_{s}\widehat{r_{s}}Y_{s}^{*}+M_{s}\left(u_{s}^{*}\right)^{T}\theta_{s}+K_{s}^{T}u_{s}^{*}-f_{M,K}\left(s,M_{s},K_{s}\right)Y_{s}^{*}\right]ds\nonumber \\
 &  & +\left[M_{s}\left(u_{s}^{*}\right)^{T}+K_{s}^{T}Y_{s}^{*}\right]dW_{s}
\end{eqnarray}
then
\begin{eqnarray}
dp_{s}^{t} & = & dM_{s}Y_{s}^{*}-Y_{t}^{*}d\Gamma_{s}\nonumber \\
 & = & \left[M_{s}\widehat{r_{s}}Y_{s}^{*}+M_{s}\left(u_{s}^{*}\right)^{T}\theta_{s}+K_{s}^{T}u_{s}^{*}-f_{M,K}\left(s,M_{s},K_{s}\right)Y_{s}^{*}+Y_{t}^{*}f_{\Gamma,\phi}\left(s,\Gamma_{s},\phi_{s}\right)\right]ds\nonumber \\
 &  & +\left[M_{s}\left(u_{s}^{*}\right)^{T}+K_{s}^{T}Y_{s}^{*}-Y_{t}^{*}\phi_{s}^{T}\right]dW_{s}\label{eq:2.23}
\end{eqnarray}
\\
by comparing the $dW$ term of (\ref{eq:2.23}) and (\ref{eq:2.16}),
we get
\begin{equation}
q_{s}^{t}=M_{s}u_{s}^{*}+K_{s}Y_{s}^{*}-Y_{t}^{*}\phi_{s}\label{eq:2.24}
\end{equation}
\\
Suppose $\Lambda^{t}$ is continuous and bounded, then $\underset{s\left\downarrow t\right.}{\lim}\mathbb{E}_{t}\left[\Lambda_{s}^{t}\right]=0,\forall t\in\left[0,T\right)$
is ensured by $\Lambda_{t}^{t}=0,\forall t\in\left[0,T\right)$. To
get a possible linear feedback $u^{*}$, we try by setting 
\[
0=\Lambda_{s}^{s}=p_{s}^{s}\theta_{s}+q_{s}^{s},\mbox{ \ensuremath{}s\ensuremath{\in\left[0,T\right]}}
\]
that is
\[
0=\left(M_{s}-\Gamma_{s}\right)Y_{s}^{*}\theta_{s}+M_{s}u_{s}^{*}+K_{s}Y_{s}^{*}-Y_{s}^{*}\phi_{s}
\]
from which we get
\begin{eqnarray}
u_{s}^{*} & = & \left[\left(\frac{\Gamma_{s}}{M_{s}}-1\right)\theta_{s}-\frac{K_{s}}{M_{s}}+\frac{\phi_{s}}{M_{s}}\right]Y_{s}^{*}\nonumber \\
 & = & \alpha_{s}Y_{s}^{*}\label{eq:2.25-1}
\end{eqnarray}
where $\alpha_{s}=\left(\frac{\Gamma_{s}}{M_{s}}-1\right)\theta_{s}-\frac{K_{s}}{M_{s}}+\frac{\phi_{s}}{M_{s}}$
based on the assumption that $M_{s}\neq0,\forall s\in\left[0,T\right)$\\

\noindent also by comparing the $ds$ term of (\ref{eq:2.23}) and
(\ref{eq:2.16}), we get
\begin{eqnarray*}
 & -\widehat{r_{s}}\left(M_{s}Y_{s}^{*}-\Gamma_{s}Y_{t}^{*}\right)= & M_{s}\widehat{r_{s}}Y_{s}^{*}+M_{s}\left(u_{s}^{*}\right)^{T}\theta_{s}+K_{s}^{T}u_{s}^{*}\\
 &  & -f_{M,K}\left(s,M_{s},K_{s}\right)Y_{s}^{*}+Y_{t}^{*}f_{\Gamma,\phi}\left(s,\Gamma_{s},\phi_{s}\right)\\
\Leftrightarrow & -\widehat{r_{s}}M_{s}Y_{s}^{*}+\widehat{r_{s}}\Gamma_{s}Y_{t}^{*}= & \left[M_{s}\widehat{r_{s}}-f_{M,K}\left(s,M_{s},K_{s}\right)\right]Y_{s}^{*}+M_{s}\alpha_{s}^{T}\theta_{s}Y_{s}^{*}\\
 &  & +K_{s}^{T}\alpha_{s}Y_{s}^{*}+Y_{t}^{*}f_{\Gamma,\phi}\left(s,\Gamma_{s},\phi_{s}\right)\\
\Leftrightarrow & \left[\widehat{r_{s}}\Gamma_{s}-f_{\Gamma,\phi}\left(s,\Gamma_{s},\phi_{s}\right)\right]Y_{t}^{*}= & \left[M_{s}\widehat{r_{s}}-f_{M,K}\left(s,M_{s},K_{s}\right)+M_{s}\alpha_{s}^{T}\theta_{s}+K_{s}^{T}\alpha_{s}+\widehat{r_{s}}M_{s}\right]Y_{s}^{*}
\end{eqnarray*}
which leads to the following equations
\begin{eqnarray}
f_{M,K}\left(s,M_{s},K_{s}\right) & = & 2\widehat{r_{s}}M_{s}+\left(M_{s}\theta_{s}^{T}+K_{s}^{T}\right)\alpha_{s}\label{eq:2.25}\\
f_{\Gamma,\phi}\left(s,\Gamma_{s},\phi_{s}\right) & = & \widehat{r_{s}}\Gamma_{s}\label{eq:2.26}
\end{eqnarray}
\\
plug (\ref{eq:2.26}) back into (\ref{eq:2.21}) we get
\begin{equation}
\begin{cases}
d\Gamma_{s} & =-\widehat{r_{s}}\Gamma_{s}ds+\phi_{s}^{T}dW_{s}\\
\Gamma_{T} & =1
\end{cases}
\end{equation}
that is
\begin{equation}
\begin{cases}
d\Gamma_{s}e^{\int_{0}^{s}\widehat{r_{u}}du} & =e^{\int_{0}^{s}\widehat{r_{u}}du}\phi_{s}^{T}dW_{s}\\
\Gamma_{T}e^{\int_{0}^{T}\widehat{r_{u}}du} & =e^{\int_{0}^{T}\widehat{r_{u}}du}
\end{cases}
\end{equation}
which could be solved and we get

\noindent 
\begin{eqnarray}
\Gamma_{s} & = & \mathbb{E}_{s}\left[e^{\int_{s}^{T}\widehat{r_{u}}du}\right]\nonumber \\
 & = & e^{\int_{s}^{T}\widehat{r_{u}}du}
\end{eqnarray}
which implies
\begin{equation}
\phi=0\label{eq:2.32}
\end{equation}
and plug (\ref{eq:2.32}) back into (\ref{eq:2.24}) and (\ref{eq:2.25-1})
we get
\begin{equation}
q_{s}^{t}=M_{s}u_{s}^{*}+K_{s}Y_{s}^{*}
\end{equation}
and

\begin{eqnarray}
u_{s}^{*} & = & \left[\left(\frac{\Gamma_{s}}{M_{s}}-1\right)\theta_{s}-\frac{K_{s}}{M_{s}}\right]Y_{s}^{*}\nonumber \\
 & = & \alpha_{s}Y_{s}^{*}\label{eq:2.20}
\end{eqnarray}
where $\alpha_{s}=\left(\frac{\Gamma_{s}}{M_{s}}-1\right)\theta_{s}-\frac{K_{s}}{M_{s}}$\\

\noindent plug (\ref{eq:2.25}) into (\ref{eq:2.20-1}) we get that
\begin{equation}
\begin{cases}
dM_{s} & =-\left[2\widehat{r_{s}}M_{s}+\left(M_{s}\theta_{s}^{T}+K_{s}^{T}\right)\alpha_{s}\right]ds+K_{s}^{T}dW_{s}\\
M_{T} & =1
\end{cases}\label{eq:2.34}
\end{equation}
and plug $\alpha$ into (\ref{eq:2.34}) we have
\begin{equation}
\begin{cases}
dM_{s} & =-\left\{ 2\widehat{r_{s}}M_{s}+\left(M_{s}\theta_{s}^{T}+K_{s}^{T}\right)\left[\left(\frac{\Gamma_{s}}{M_{s}}-1\right)\theta_{s}-\frac{K_{s}}{M_{s}}\right]\right\} ds+K_{s}^{T}dW_{s}\\
M_{T} & =1
\end{cases}
\end{equation}
that is
\begin{equation}
\begin{cases}
dM_{s} & =-\left[\left(2\widehat{r_{s}}-\left|\theta_{s}\right|^{2}\right)M_{s}+\Gamma_{s}\left|\theta_{s}\right|^{2}-2K_{s}^{T}\theta_{s}+\frac{\Gamma_{s}}{M_{s}}K_{s}^{T}\theta_{s}-\frac{\left|K_{s}\right|^{2}}{M_{s}}\right]ds+K_{s}^{T}dW_{s}\\
M_{T} & =1
\end{cases}\label{eq:2.36}
\end{equation}
\\
Here we list the fact about the BMO martingale in Kazamaki\cite{key-4}
which will be used to prove the existence of a solution to the above
BSDE (\ref{eq:2.36}). 
\begin{fact}
\label{2.3}A process $\int_{0}^{\cdot}Z_{s}^{T}dW_{s}$ is a BMO
martingale if and only if there exists a constant $C>0$ s.t.
\begin{equation}
\mathbb{E}\left[\left.\int_{\tau}^{T}\left|Z_{s}\right|^{2}ds\right|\mathcal{F}_{\tau}\right]\leq C
\end{equation}
for any stopping time $\tau\leq T$. A BMO martingale $\int_{0}^{\cdot}Z_{s}^{T}dW_{s}$
has the property that the stochastic exponential $\mathcal{E}\left(\int_{0}^{\cdot}Z_{s}^{T}dW_{s}\right)$
is a martingale. Thus by defining $\left.\frac{d\mathbb{Q}}{d\mathbb{P}}\right|_{\mathcal{F}_{t}}=\mathcal{E}\left(\int_{0}^{t}Z_{s}^{T}dW_{s}\right)$,
we have that $W_{t}^{\mathbb{Q}}=W_{t}-$\textup{$\int_{0}^{t}Z_{s}ds$
is a $\mathbb{Q}$-Brownian motion}\\
\end{fact}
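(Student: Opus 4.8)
The plan is to establish this fact in three stages corresponding to its three assertions. The first equivalence I would treat as essentially a matter of unwinding the definition. Recall that a continuous local martingale $M$ with $M_0=0$ lies in the space BMO precisely when
\[
\|M\|_{BMO}^2:=\underset{\tau\leq T}{\sup}\left\|\mathbb{E}\left[\left.\langle M\rangle_T-\langle M\rangle_\tau\,\right|\mathcal{F}_\tau\right]\right\|_\infty<\infty,
\]
the supremum being taken over all stopping times $\tau\leq T$. For $M=\int_0^\cdot Z_s^T dW_s$ the quadratic variation is $\langle M\rangle_t=\int_0^t|Z_s|^2ds$, so that $\langle M\rangle_T-\langle M\rangle_\tau=\int_\tau^T|Z_s|^2ds$, and the displayed bound becomes exactly the stated condition $\mathbb{E}\left[\left.\int_\tau^T|Z_s|^2ds\,\right|\mathcal{F}_\tau\right]\leq C$. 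Hence the equivalence follows at once from identifying $\langle M\rangle$.

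The substantive assertion is the second one. Since $\mathcal{E}(M)$ is always a nonnegative local martingale and therefore a supermartingale, the only point at issue is to rule out loss of mass, that is, to upgrade $\mathbb{E}[\mathcal{E}(M)_T]\leq1$ to an equality. The route I would take rests on the John--Nirenberg inequality for BMO martingales: when $\|M\|_{BMO}$ lies below an explicit threshold, the conditional exponential moments
\[
\underset{\tau\leq T}{\sup}\,\mathbb{E}\left[\left.\exp\left(\int_\tau^T|Z_s|^2ds\right)\,\right|\mathcal{F}_\tau\right]<\infty
\]
are finite. A localisation argument then removes the smallness restriction: one partitions $[0,T]$ into finitely many deterministic subintervals on each of which the BMO norm of $M$ is small, and chains the conditional estimates. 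This exponential integrability suffices to verify a reverse H\"older inequality $(R_p)$ for $\mathcal{E}(M)$ with some $p>1$, equivalently Kazamaki's criterion $\sup_{\tau}\mathbb{E}[\exp(\tfrac12 M_\tau)]<\infty$, and this in turn forces $\mathcal{E}(M)$ to be a uniformly integrable true martingale. I expect this to be the main obstacle, since both the John--Nirenberg estimate and the passage to the reverse H\"older inequality carry the genuine analytic content; the supermartingale property and $\mathbb{E}[\mathcal{E}(M)_T]\leq1$ are automatic, so everything hinges on excluding mass loss.

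Finally, with $\mathcal{E}(M)$ a true martingale satisfying $\mathbb{E}[\mathcal{E}(M)_T]=1$, the prescription $\left.\frac{d\mathbb{Q}}{d\mathbb{P}}\right|_{\mathcal{F}_t}=\mathcal{E}(M)_t$ defines a probability measure equivalent to $\mathbb{P}$, and Girsanov's theorem applies directly. Writing $M=\sum_j\int_0^\cdot Z_s^j dW_s^j$ and using $\langle W^i,W^j\rangle_t=\delta_{ij}t$ gives $\langle W^i,M\rangle_t=\int_0^t Z_s^i ds$, so each component of $W_t^{\mathbb{Q}}=W_t-\int_0^t Z_s ds$ is a continuous $\mathbb{Q}$-local martingale, while the covariations $\langle W^{\mathbb{Q},i},W^{\mathbb{Q},j}\rangle_t=\delta_{ij}t$ are unaffected by the equivalent change of measure. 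By L\'evy's characterisation $W^{\mathbb{Q}}$ is therefore a $d$-dimensional $\mathbb{Q}$-Brownian motion, which completes the argument.
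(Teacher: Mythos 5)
The paper itself gives no proof of this Fact: it is quoted directly from Kazamaki \cite{key-4}, so your proposal can only be compared with the standard literature argument. Your first step (identifying $\langle M\rangle_{t}=\int_{0}^{t}\left|Z_{s}\right|^{2}ds$ so that the stated condition is exactly the definition of the BMO norm) and your last step (Girsanov plus L\'evy's characterisation, once $\mathcal{E}\left(M\right)$ is known to be a uniformly integrable martingale) are correct and routine.

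The gap sits precisely in the step you single out as carrying the analytic content. You propose to remove the smallness restriction in the John--Nirenberg estimate by partitioning $\left[0,T\right]$ into finitely many \emph{deterministic} subintervals on each of which the BMO norm of $M$ is small, and then chaining conditional estimates. No such partition exists in general: the BMO norm of $M$ restricted to a subinterval need not shrink with the mesh. Concretely (take $T=1$, $d=1$), choose disjoint events $A_{n}\in\mathcal{F}_{1/2}$ with $\mathbb{P}\left(A_{n}\right)>0$ and set $Z_{s}=\sum_{n\geq2}\sqrt{c\,4^{n}}\,\mathbf{1}_{A_{n}}\mathbf{1}_{s\in\left[1-2^{-n},\,1-2^{-n}+4^{-n}\right]}$; the spike supports are disjoint, $Z$ is adapted, and $\int_{0}^{1}\left|Z_{s}\right|^{2}ds\leq c$, so $\left\Vert M\right\Vert _{BMO}^{2}\leq c$. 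Yet for any deterministic partition, all but finitely many spike intervals lie wholly inside some subinterval $\left[t_{i},t_{i+1}\right]$ (the finitely many partition points can cut at most finitely many of the disjoint spikes), and the stopping time equal to $1-2^{-n}$ on $A_{n}$ and to $t_{i+1}$ off $A_{n}$ shows that the restricted BMO norm on $\left[t_{i},t_{i+1}\right]$ is still at least $\sqrt{c}$. Taking $c$ above the John--Nirenberg threshold, your chaining never gets started, so the proposal only establishes the second assertion under a smallness condition on $\left\Vert M\right\Vert _{BMO}$ --- which would not even cover the paper's application, where the integrand involves the solution $K^{c}$ of a quadratic BSDE and comes with no smallness. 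This obstruction is not incidental: it is related to the known fact that a BMO martingale need not be ``sliceable'' into finitely many pieces of small BMO norm, and it is exactly why Kazamaki's Theorem 2.3 requires a genuinely different argument, namely slicing along stopping times such as $T_{k}=\inf\left\{ t>T_{k-1}:\langle M\rangle_{t}-\langle M\rangle_{T_{k-1}}\geq\delta\right\} $ (each slice has bounded quadratic variation, so Novikov applies and $\mathbb{E}\left[\mathcal{E}\left(M\right)_{T_{n}}\right]=1$), followed by an estimate showing $\mathbb{E}\left[\mathcal{E}\left(M\right)_{T_{n}}\mathbf{1}_{\left\{ T_{n}<\infty\right\} }\right]\rightarrow0$, which is where the real work lies. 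Two smaller remarks: once you do have bounded conditional exponential moments, Novikov's criterion concludes directly, so the detour through the reverse H\"older inequality is unnecessary; and to fix your write-up you should either restrict to small BMO norm, reproduce Kazamaki's argument, or simply cite \cite{key-4} as the paper does.
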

\begin{prop}
\label{prop;2.4}The BSDE (\ref{eq:2.36}) admits a solution ($M,K)\in L_{\mathcal{F}}^{\infty}\left(0,T;\mathbb{R}\right)\times L_{\mathcal{F}}^{2}\left(0,T;\mathbb{R}^{d}\right)$
s.t. $M\geq\eta$ for some constant $\eta>0$\end{prop}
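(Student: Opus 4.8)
The equation (\ref{eq:2.36}) is a backward SDE whose driver is quadratic in the martingale integrand $K$ (through the term $-|K_s|^2/M_s$) and, worse, singular where $M_s=0$; apart from this the coefficients are benign, since $\widehat{r}$ is deterministic with $\int_0^T|\widehat{r}_s|\,ds<\infty$ and $\theta$ is bounded. The plan is a two-stage argument. First I would truncate the singular coefficient $1/M$ to obtain an honest quadratic BSDE with bounded terminal datum, for which existence is classical. Then I would show \emph{a priori} that any such solution is bounded below by a positive constant, so that the truncation is never active and the solution in fact solves (\ref{eq:2.36}).

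For the first stage, fix the constant $\eta:=\exp\!\big(-2\int_0^T|\widehat{r}_u|\,du-\|\theta\|_\infty^2\,T\big)>0$, which depends only on the coefficient bounds, and set $\rho(m):=1/(m\vee\eta)$. Replacing $1/M_s$ by $\rho(M_s)$ in (\ref{eq:2.36}) yields a driver that is affine in $M$ with bounded coefficient $2\widehat{r}_s-|\theta_s|^2$, has bounded inhomogeneous part $\Gamma_s|\theta_s|^2$ (recall $\Gamma_s=e^{\int_s^T\widehat{r}_u\,du}$ is bounded), is linear in $K$ with bounded coefficients, and is quadratic in $K$ with bounded coefficient $\rho(M_s)\le 1/\eta$. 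Since the terminal value $M_T=1$ is bounded, the standard existence theory for quadratic backward SDEs with bounded terminal data furnishes a solution $(M,K)$ with $M\in L_{\mathcal{F}}^{\infty}(0,T;\mathbb{R})$; moreover, as is standard for such equations, the martingale $\int_0^\cdot K_s^T\,dW_s$ is a BMO martingale.

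The heart of the argument is to recover positivity by linearising. I would group all $K$-dependent drift terms of the truncated equation into a single inner product: writing $\zeta_s:=\big(\Gamma_s\rho(M_s)-2\big)\theta_s-\rho(M_s)K_s$, the truncated BSDE becomes
\[
-dM_s=\big[(2\widehat{r}_s-|\theta_s|^2)M_s+\Gamma_s|\theta_s|^2\big]\,ds-K_s^T\big(dW_s-\zeta_s\,ds\big).
\]
Because $\theta$ and $\Gamma$ are bounded, $\rho(M)\le 1/\eta$, and $\int_0^\cdot K_s^T\,dW_s$ is BMO, the process $\int_0^\cdot\zeta_s^T\,dW_s$ is again a BMO martingale. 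Fact \ref{2.3} then legitimises the change of measure: under $\mathbb{Q}$ with $\left.\frac{d\mathbb{Q}}{d\mathbb{P}}\right|_{\mathcal{F}_t}=\mathcal{E}\big(\int_0^t\zeta_s^T\,dW_s\big)$ the process $W^{\mathbb{Q}}_s=W_s-\int_0^s\zeta_u\,du$ is a $\mathbb{Q}$-Brownian motion, and the displayed equation is a \emph{linear} BSDE driven by $W^{\mathbb{Q}}$. Solving it with the integrating factor $e^{\int_0^s(2\widehat{r}_u-|\theta_u|^2)\,du}$ gives the representation
\[
M_s=\mathbb{E}^{\mathbb{Q}}_s\Big[e^{\int_s^T(2\widehat{r}_u-|\theta_u|^2)\,du}+\int_s^T e^{\int_s^v(2\widehat{r}_u-|\theta_u|^2)\,du}\,\Gamma_v|\theta_v|^2\,dv\Big].
\]
Both terms inside the conditional expectation are nonnegative, and the first is bounded below by $e^{-2\int_0^T|\widehat{r}_u|\,du-\|\theta\|_\infty^2\,T}=\eta$; hence $M_s\ge\eta$ for every $s$, while the boundedness of $\widehat{r},\theta,\Gamma$ yields a uniform upper bound.

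Since $M_s\ge\eta$ we have $\rho(M_s)=1/M_s$, so the truncation is inactive and $(M,K)$ solves the original BSDE (\ref{eq:2.36}) with $M\in L_{\mathcal{F}}^{\infty}(0,T;\mathbb{R})$, $K\in L_{\mathcal{F}}^{2}(0,T;\mathbb{R}^{d})$ and $M\ge\eta>0$, as claimed. The main obstacle is precisely the justification of this linearisation: one must confirm that the truncated quadratic BSDE produces a BMO integrand $K$, and hence a BMO $\zeta$, so that Girsanov applies — this is where Fact \ref{2.3} is essential, and it is the reason the truncation threshold $\eta$ must be pinned down a priori from the coefficient bounds rather than from the not-yet-known solution. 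Everything else (the quadratic-BSDE existence, the integrating-factor computation, and the consistency of the truncation) is routine.
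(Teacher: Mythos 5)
Your proposal is correct and follows essentially the same route as the paper: truncate the singular term $1/M$, invoke classical quadratic-BSDE existence together with the BMO property of $\int_0^\cdot K_s^T dW_s$, apply Girsanov via Fact \ref{2.3}, and read off the positive lower bound from the resulting linear representation of $M$ under $\mathbb{Q}$. The only difference is cosmetic but welcome: you pin down the truncation level $\eta$ a priori from the coefficient bounds, whereas the paper truncates at an arbitrary $c>0$ and afterwards sets $c=\eta$, implicitly using that the lower bound obtained from the representation does not depend on $c$.
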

\begin{proof}
Here we use the truncation method. Choose any constant $c>0$ and
bound the $\frac{1}{M}$ in the BSDE (\ref{eq:2.36}) by the chosen
$c$, which leads to the following BSDE 
\begin{equation}
\begin{cases}
dM_{s} & =-\left[\left(2\widehat{r_{s}}-\left|\theta_{s}\right|^{2}\right)M_{s}+\Gamma_{s}\left|\theta_{s}\right|^{2}-2K_{s}^{T}\theta_{s}+\frac{\Gamma_{s}}{M_{s}\vee c}K_{s}^{T}\theta_{s}-\frac{\left|K_{s}\right|^{2}}{M_{s}\vee c}\right]ds+K_{s}^{T}dW_{s}\\
M_{T} & =1
\end{cases}\label{eq:2.38}
\end{equation}
which can be written as 
\begin{equation}
\begin{cases}
dM_{s} & =-\left[\left(2\widehat{r_{s}}-\left|\theta_{s}\right|^{2}\right)M_{s}+\Gamma_{s}\left|\theta_{s}\right|^{2}\right]ds+K_{s}^{T}\left[dW_{s}-\left(-2\theta_{s}+\frac{\Gamma_{s}}{M_{s}\vee c}\theta_{s}-\frac{K_{s}}{M_{s}\vee c}\right)ds\right]\\
M_{T} & =1
\end{cases}\label{eq:2.39}
\end{equation}
(\ref{eq:2.38}) is a standard quadratic BSDE. Hence there exists
a solution $\left(M^{c},K^{c}\right)\in L_{\mathcal{F}}^{\infty}\left(0,T;\mathbb{R}\right)\times L_{\mathcal{F}}^{2}\left(0,T;\mathbb{R}^{d}\right)$
depending on the chosen constant $c$ and $\int_{0}^{\cdot}\left(K_{s}^{c}\right)^{T}dW_{s}$
is a BMO martingale according to the results in Kobylanski\cite{key-5}
and Morlais\cite{key-6}.\\

\noindent since $\int_{0}^{\cdot}\left(K_{s}^{c}\right)^{T}dW_{s}$
is a BMO martingale and $\frac{1}{M^{c}\vee c},\theta,\Gamma$ are
all bounded, we have that 
\begin{equation}
\int_{0}^{\cdot}\left(-2\theta_{s}+\frac{\Gamma_{s}}{M_{s}^{c}\vee c}\theta_{s}-\frac{K_{s}^{c}}{M_{s}^{c}\vee c}\right)^{T}dW_{s}
\end{equation}
is also a BMO martingale by definition according to fact \ref{2.3}.
Thus by defining 
\[
\left.\frac{d\mathbb{Q}}{d\mathbb{P}}\right|_{\mathcal{F}_{t}}=\mathcal{E}\left[\int_{0}^{t}\left(-2\theta_{s}+\frac{\Gamma_{s}}{M_{s}^{c}\vee c}\theta_{s}-\frac{K_{s}^{c}}{M_{s}^{c}\vee c}\right){}^{T}dW_{s}\right]
\]
we have that 
\[
W_{t}^{\mathbb{Q}}=W_{t}-\int_{0}^{t}\left(-2\theta_{s}+\frac{\Gamma_{s}}{M_{s}^{c}\vee c}\theta_{s}-\frac{K_{s}^{c}}{M_{s}^{c}\vee c}\right)ds
\]
is a $\mathbb{Q}$-Brownian motion, and thus$ $by using (\ref{eq:2.39})
we have
\begin{equation}
\begin{cases}
dM_{s}^{c}e^{\int_{0}^{s}2\widehat{r_{u}}-\left|\theta_{u}\right|^{2}du} & =-e^{\int_{0}^{s}2\widehat{r_{u}}-\left|\theta_{u}\right|^{2}du}\Gamma_{s}\left|\theta_{s}\right|^{2}ds+e^{\int_{0}^{s}2\widehat{r_{u}}-\left|\theta_{u}\right|^{2}du}\left(K_{s}^{c}\right)^{T}dW_{s}^{\mathbb{Q}}\\
M_{T}^{c}e^{\int_{0}^{T}2\widehat{r_{u}}-\left|\theta_{u}\right|^{2}du} & =e^{\int_{0}^{T}2\widehat{r_{u}}-\left|\theta_{u}\right|^{2}du}
\end{cases}
\end{equation}
which is
\begin{equation}
\begin{cases}
d\left(M_{s}^{c}e^{\int_{0}^{s}2\widehat{r_{u}}-\left|\theta_{u}\right|^{2}du}+\int_{0}^{s}e^{\int_{0}^{u}2\widehat{r_{v}}-\left|\theta_{v}\right|^{2}dv}\Gamma_{u}\left|\theta_{u}\right|^{2}du\right) & =e^{\int_{0}^{s}2\widehat{r_{u}}-\left|\theta_{u}\right|^{2}du}\left(K_{s}^{c}\right)^{T}dW_{s}^{\mathbb{Q}}\\
M_{T}^{c}e^{\int_{0}^{T}2\widehat{r_{u}}-\left|\theta_{u}\right|^{2}du}+\int_{0}^{T}e^{\int_{0}^{u}2\widehat{r_{v}}-\left|\theta_{v}\right|^{2}dv}\Gamma_{u}\left|\theta_{u}\right|^{2}du & =e^{\int_{0}^{T}2\widehat{r_{u}}-\left|\theta_{u}\right|^{2}du}+\int_{0}^{T}e^{\int_{0}^{u}2\widehat{r_{v}}-\left|\theta_{v}\right|^{2}dv}\Gamma_{u}\left|\theta_{u}\right|^{2}du
\end{cases}
\end{equation}
thus
\begin{equation}
M_{s}^{c}=\mathbb{E}_{s}^{\mathbb{Q}}\left[e^{\int_{s}^{T}2\widehat{r_{u}}-\left|\theta_{u}\right|^{2}du}+\int_{s}^{T}e^{\int_{s}^{u}2\widehat{r_{v}}-\left|\theta_{v}\right|^{2}dv}\Gamma_{u}\left|\theta_{u}\right|^{2}du\right]
\end{equation}
it implies that there exists a constant $\eta>0$ s.t. we have $M^{c}\geq\eta$
for any chosen constant $c>0$, and we could choose $c=\eta$ in the
BSDE (\ref{eq:2.38}) and deduce that the BSDE (\ref{eq:2.36}) admits
a solution ($M,K)\in L_{\mathcal{F}}^{\infty}\left(0,T;\mathbb{R}\right)\times L_{\mathcal{F}}^{2}\left(0,T;\mathbb{R}^{d}\right)$
s.t. $M\geq\eta$.
\end{proof}
\noindent Since we have proved proposition (\ref{prop;2.4}), we could
have our linear feedback as
\begin{eqnarray}
u_{s}^{*} & = & \left[\left(\frac{\Gamma_{s}}{M_{s}}-1\right)\theta_{s}-\frac{K_{s}}{M_{s}}\right]Y_{s}^{*}\nonumber \\
 & = & \alpha_{s}Y_{s}^{*}\label{eq:2.44}
\end{eqnarray}
where $\alpha_{s}=\left(\frac{\Gamma_{s}}{M_{s}}-1\right)\theta_{s}-\frac{K_{s}}{M_{s}}$
has the process $\lambda$ as its parameter, i.e. $\alpha_{s}=f_{s}(\lambda)$
with $f_{s}=\left(\frac{\Gamma_{s}}{M_{s}}-1\right)\theta_{s}-\frac{K_{s}}{M_{s}}$
for any $s\in\left[0,T\right]$\\

\begin{thm}
The control $u^{*}$ in (\ref{eq:2.44}) is an equilibrium control
for the family of problems in (\ref{eq:2.14})\end{thm}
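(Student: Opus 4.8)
The plan is to verify the two sufficient conditions collected in Proposition \ref{prop:2.2}. For $h(x)=\frac{x^{2}}{2}$ the convexity requirement $\mathbb{E}_t\left[\frac{d^{2}h(Y_T^{*}-Y_t^{*})}{dx^{2}}\right]=1\geq 0$ is automatic and already recorded there, so everything reduces to checking (\ref{eq:2.18}): that $u^{*}\in L_{\mathcal F}^{2}(0,T;\mathbb R^{d})$, and that $\mathbb{E}_t\int_t^{T}|\Lambda_s^{t}|\,ds<\infty$ together with $\lim_{s\downarrow t}\mathbb{E}_t[\Lambda_s^{t}]=0$, after which Theorem \ref{thm:1.3} closes the argument.

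First I would confirm that the Ansatz genuinely solves the BSDE (\ref{eq:2.16}). By Proposition \ref{prop;2.4} the pair $(M,K)$ exists with $M\geq\eta>0$ and $\int_0^{\cdot}K_s^{T}dW_s$ a BMO martingale, while $\Gamma_s=e^{\int_s^{T}\widehat r_u du}$ is bounded, deterministic and $\phi=0$. Inserting these into $p_s^{t}=M_sY_s^{*}-\Gamma_sY_t^{*}$ and $q_s^{t}=M_su_s^{*}+K_sY_s^{*}$, the matching of the $dW$ and $ds$ coefficients performed in (\ref{eq:2.23})--(\ref{eq:2.26}) shows $(p^{t},q^{t})$ solves the dynamics, and the terminal condition holds since $M_T=\Gamma_T=1$ gives $p_T^{t}=Y_T^{*}-Y_t^{*}$.

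The decisive simplification is to compute $\Lambda_s^{t}$ in closed form. Substituting $M_s\alpha_s=(\Gamma_s-M_s)\theta_s-K_s$ into $\Lambda_s^{t}=p_s^{t}\theta_s+q_s^{t}$ cancels every $M_s$ and $K_s$ term and leaves
\begin{equation}
\Lambda_s^{t}=\Gamma_s\theta_s\left(Y_s^{*}-Y_t^{*}\right).
\end{equation}
From this the two analytic conditions of (\ref{eq:2.18}) are immediate: $\Lambda_t^{t}=0$, and $s\mapsto\Lambda_s^{t}$ is continuous because $Y^{*}$ has continuous paths and $\Gamma,\theta$ are bounded, so $\lim_{s\downarrow t}\mathbb{E}_t[\Lambda_s^{t}]=\mathbb{E}_t[\Lambda_t^{t}]=0$; moreover $\mathbb{E}_t\int_t^{T}|\Lambda_s^{t}|\,ds\leq\|\Gamma\theta\|_{\infty}\,\mathbb{E}_t\int_t^{T}|Y_s^{*}-Y_t^{*}|\,ds$, which is finite once $Y^{*}$ is known to be integrable enough.

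The main obstacle is therefore the remaining integrability, namely $u^{*}\in L_{\mathcal F}^{2}$ and finiteness of the $\Lambda^{t}$ integral, both resting on moment bounds for $Y^{*}$. Closing the loop $u_s^{*}=\alpha_sY_s^{*}$ turns the state equation into the linear (geometric) SDE $dY_s^{*}=Y_s^{*}[(\widehat r_s+\alpha_s^{T}\theta_s)\,ds+\alpha_s^{T}dW_s]$, whose solution is a stochastic exponential driven by $\int_0^{\cdot}\alpha_s^{T}dW_s$. Since $M\geq\eta$ and $\theta,\Gamma$ are bounded, the BMO property of $\int_0^{\cdot}K_s^{T}dW_s$ transfers to $\int_0^{\cdot}\alpha_s^{T}dW_s$; I would then invoke the reverse H\"older (energy) inequality for BMO martingales (Kazamaki \cite{key-4}) to secure moments of $Y^{*}$ of some order $p>2$. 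By H\"older this controls $\mathbb{E}\int_0^{T}|u_s^{*}|^{2}ds=\mathbb{E}\int_0^{T}|\alpha_s|^{2}|Y_s^{*}|^{2}\,ds$ against the $L^{2}$/BMO bound on $\alpha$, giving $u^{*}\in L_{\mathcal F}^{2}$, and simultaneously bounds $\mathbb{E}_t\int_t^{T}|Y_s^{*}-Y_t^{*}|\,ds$. With both conditions of (\ref{eq:2.18}) verified, Proposition \ref{prop:2.2}, and hence Theorem \ref{thm:1.3}, yields that $u^{*}$ is an equilibrium control.
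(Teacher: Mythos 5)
Your skeleton agrees with the paper's proof: both verify the conditions of Proposition \ref{prop:2.2}, both rest on the cancellation $\Lambda_s^{t}=\Gamma_s\theta_s\left(Y_s^{*}-Y_t^{*}\right)$, and both obtain $\lim_{s\downarrow t}\mathbb{E}_t\left[\Lambda_s^{t}\right]=0$ by dominated convergence once $\sup_{t\in\left[0,T\right]}\left|Y_t^{*}\right|$ is known to be square integrable. The divergence, and the gap, lies in how that square integrability is obtained.

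Your step ``invoke the reverse H\"older inequality for BMO martingales to secure moments of $Y^{*}$ of some order $p>2$'' does not work as stated. Since $\alpha_s=\left(\frac{\Gamma_s}{M_s}-1\right)\theta_s-\frac{K_s}{M_s}$ contains the unbounded BMO component $K/M$, the exponential formula for the state reads $Y_t^{*}=y_0e^{\int_0^t(\widehat{r}_s+\alpha_s^{T}\theta_s)ds}\,\mathcal{E}\left(\int_0^{\cdot}\alpha_s^{T}dW_s\right)_t$: a stochastic exponential of a BMO martingale times an unbounded drift factor. The reverse H\"older inequality for $\mathcal{E}(N)$ with $N$ BMO holds only for exponents $p\in\left(1,p^{*}\right)$, where $p^{*}$ is determined by $\left\Vert N\right\Vert _{BMO}$ and tends to $1$ as that norm grows; nothing in the problem guarantees $p^{*}>2$, and the drift factor $e^{\int_0^t\alpha_s^{T}\theta_s ds}$ requires separate exponential-moment control (John--Nirenberg), which is likewise norm-restricted. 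Consequently neither $Y^{*}\in L_{\mathcal{F}}^{2}\left(\Omega;C\left(0,T;\mathbb{R}\right)\right)$, nor $u^{*}\in L_{\mathcal{F}}^{2}\left(0,T;\mathbb{R}^{d}\right)$, nor $\mathbb{E}_t\int_t^{T}\left|\Lambda_s^{t}\right|ds<\infty$ is actually established by your argument. The paper closes precisely this hole with an algebraic cancellation you are missing: applying It\^o's formula to $\log M$ via (\ref{eq:2.36}) and combining with the exponential formula for $Y^{*}$ gives (\ref{eq:2.46})--(\ref{eq:2.48}), namely $Y_t^{*}=y_0e^{-\int_0^t\widehat{r}_sds}\frac{M_0}{M_t}\mathcal{E}\left(\int_0^{t}\left(\frac{\Gamma_s}{M_s}-1\right)\theta_s^{T}dW_s\right)$, in which the troublesome $K/M$ part of $\alpha$ is absorbed into the ratio $M_0/M_t$, bounded because $M\in L_{\mathcal{F}}^{\infty}$ and $M\geq\eta>0$. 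The remaining stochastic exponential has a bounded integrand, so Novikov and Doob's $L^{2}$ maximal inequality yield (\ref{eq:2.52}); and for $u^{*}\in L_{\mathcal{F}}^{2}$ the paper avoids higher moments of $Y^{*}$ altogether by regarding the discounted pair as the solution of the Lipschitz BSDE (\ref{eq:4.55-1}) with $L^{2}$ terminal data. You would need to incorporate this cancellation, or some substitute that produces genuine $L^{2}$ bounds independent of the size of the BMO norm, for your proof to be complete.
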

\begin{proof}
Let $Y^{*}$ be the corresponding state process with respect to the
above control $u^{*}$ in (\ref{eq:2.44}). We have for $s\in\left[0,T\right]$
\[
u_{s}^{*}=\alpha_{s}Y_{s}^{*}
\]
Plug the $u^{*}$ into the dynamics of $Y$ in (\ref{eq:2.15}) we
get
\begin{eqnarray}
dY_{s}^{*} & = & \left(\widehat{r_{s}}Y_{s}^{*}+\left(u_{s}^{*}\right)^{T}\theta_{s}\right)ds+\left(u_{s}^{*}\right)^{T}dW_{s}\nonumber \\
 & = & Y_{s}^{*}\left[\left(\widehat{r_{s}}+\alpha_{s}^{T}\theta_{s}\right)ds+\alpha_{s}^{T}dW_{s}\right]\label{eq:4.47}
\end{eqnarray}
thus we have that
\begin{align}
Y_{t}^{*} & =y_{0}e^{\int_{0}^{t}\widehat{r_{s}}+\alpha_{s}^{T}\theta_{s}-\frac{\left|\alpha_{s}\right|^{2}}{2}ds+\int_{0}^{t}\alpha_{s}^{T}dW_{s}}\nonumber \\
 & =y_{0}e^{\int_{0}^{t}\widehat{r_{s}}ds}e^{\int_{0}^{t}-\frac{\left|\alpha_{s}\right|^{2}}{2}ds+\int_{0}^{t}\alpha_{s}^{T}\left(dW_{s}+\theta_{s}ds\right)}\nonumber \\
 & =y_{0}e^{\int_{0}^{t}\widehat{r_{s}}ds}e^{\int_{0}^{t}-\frac{\left|\alpha_{s}\right|^{2}}{2}ds+\int_{0}^{t}\left(\frac{\Gamma_{s}}{M_{s}}-1\right)\theta_{s}^{T}\left(dW_{s}+\theta_{s}ds\right)-\int_{0}^{t}\frac{K_{s}^{T}}{M_{s}}\left(dW_{s}+\theta_{s}ds\right)}\nonumber \\
 & =y_{0}e^{\int_{0}^{t}\widehat{r_{s}}ds}e^{\int_{0}^{t}-\frac{\left|\alpha_{s}\right|^{2}}{2}+\left(\frac{\Gamma_{s}}{M_{s}}-1\right)\left|\theta_{s}\right|^{2}ds-\int_{0}^{t}\frac{K_{s}^{T}}{M_{s}}\left(dW_{s}+\theta_{s}ds\right)}e^{\int_{0}^{t}\left(\frac{\Gamma_{s}}{M_{s}}-1\right)\theta_{s}^{T}dW_{s}}\nonumber \\
 & =y_{0}e^{\int_{0}^{t}\widehat{r_{s}}ds}e^{-\int_{0}^{t}\left[\frac{\left|\alpha_{s}\right|^{2}}{2}-\left(\frac{\Gamma_{s}}{M_{s}}-1\right)\left|\theta_{s}\right|^{2}-\frac{1}{2}\left(\frac{\Gamma_{s}}{M_{s}}-1\right)^{2}\left|\theta_{s}\right|^{2}\right]ds-\int_{0}^{t}\frac{K_{s}^{T}}{M_{s}}\left(dW_{s}+\theta_{s}ds\right)}\mathcal{E}\left(\int_{0}^{t}\left(\frac{\Gamma_{s}}{M_{s}}-1\right)\theta_{s}^{T}dW_{s}\right)\nonumber \\
 & =y_{0}e^{\int_{0}^{t}\widehat{r_{s}}ds}e^{-\int_{0}^{t}\left\{ \left[\frac{\left|\alpha_{s}\right|^{2}}{2}-\left(\frac{\Gamma_{s}}{M_{s}}-1\right)\left|\theta_{s}\right|^{2}-\frac{1}{2}\left(\frac{\Gamma_{s}}{M_{s}}-1\right)^{2}\left|\theta_{s}\right|^{2}+\frac{K_{s}^{T}}{M_{s}}\theta_{s}\right]ds+\frac{K_{s}^{T}}{M_{s}}dW_{s}\right\} }\mathcal{E}\left(\int_{0}^{t}\left(\frac{\Gamma_{s}}{M_{s}}-1\right)\theta_{s}^{T}dW_{s}\right)\label{eq:2.46}
\end{align}
by applying It$\hat{\mbox{o}}$ formula to $\log\left(M\right)$ we
get 
\begin{align}
d\log\left(M_{s}\right) & =\frac{1}{M_{s}}dM_{s}-\frac{1}{2M_{s}^{2}}d\left\langle M\right\rangle _{s}\nonumber \\
 & =-\left[\left(2\widehat{r_{s}}-\left|\theta_{s}\right|^{2}\right)+\frac{\Gamma_{s}\left|\theta_{s}\right|^{2}}{M_{s}}-\frac{2K_{s}^{T}\theta_{s}}{M_{s}}+\frac{\Gamma_{s}K_{s}^{T}\theta_{s}}{M_{s}^{2}}-\frac{\left|K_{s}\right|^{2}}{M_{s}^{2}}\right]ds+\frac{K_{s}^{T}}{M_{s}}dW_{s}-\frac{\left|K_{s}\right|^{2}}{2M_{s}^{2}}ds\nonumber \\
 & =-\left[2\widehat{r_{s}}+\left(\frac{\Gamma_{s}}{M_{s}}-1\right)\left|\theta_{s}\right|^{2}+\left(\frac{\Gamma_{s}}{M_{s}}-1\right)\frac{K_{s}^{T}\theta_{s}}{M_{s}}-\frac{\left|K_{s}\right|^{2}}{2M_{s}^{2}}\right]ds+\frac{K_{s}^{T}}{M_{s}}\left(dW_{s}+\theta_{s}ds\right)\nonumber \\
 & =\left[\frac{\left|\alpha_{s}\right|^{2}}{2}-\frac{1}{2}\left(\frac{\Gamma_{s}}{M_{s}}-1\right)^{2}\left|\theta_{s}\right|^{2}-2\widehat{r_{s}}-\left(\frac{\Gamma_{s}}{M_{s}}-1\right)\left|\theta_{s}\right|^{2}\right]ds+\frac{K_{s}^{T}}{M_{s}}\left(dW_{s}+\theta_{s}ds\right)\nonumber \\
 & =-2\widehat{r_{s}}+\left\{ \left[\frac{\left|\alpha_{s}\right|^{2}}{2}-\left(\frac{\Gamma_{s}}{M_{s}}-1\right)\left|\theta_{s}\right|^{2}-\frac{1}{2}\left(\frac{\Gamma_{s}}{M_{s}}-1\right)^{2}\left|\theta_{s}\right|^{2}+\frac{K_{s}^{T}}{M_{s}}\theta_{s}\right]ds+\frac{K_{s}^{T}}{M_{s}}dW_{s}\right\} 
\end{align}
from which we get
\begin{align}
 & \log\left(\frac{M_{t}}{M_{0}}\right)=\int_{0}^{t}-2\widehat{r_{s}}ds+\int_{0}^{t}\left\{ \left[\frac{\left|\alpha_{s}\right|^{2}}{2}-\left(\frac{\Gamma_{s}}{M_{s}}-1\right)\left|\theta_{s}\right|^{2}-\frac{1}{2}\left(\frac{\Gamma_{s}}{M_{s}}-1\right)^{2}\left|\theta_{s}\right|^{2}+\frac{K_{s}^{T}}{M_{s}}\theta_{s}\right]ds+\frac{K_{s}^{T}}{M_{s}}dW_{s}\right\} \nonumber \\
\Rightarrow & \frac{M_{t}}{M_{0}}=e^{\int_{0}^{t}-2\widehat{r_{s}}ds}e^{\int_{0}^{t}\left\{ \left[\frac{\left|\alpha_{s}\right|^{2}}{2}-\left(\frac{\Gamma_{s}}{M_{s}}-1\right)\left|\theta_{s}\right|^{2}-\frac{1}{2}\left(\frac{\Gamma_{s}}{M_{s}}-1\right)^{2}\left|\theta_{s}\right|^{2}+\frac{K_{s}^{T}}{M_{s}}\theta_{s}\right]ds+\frac{K_{s}^{T}}{M_{s}}dW_{s}\right\} }\label{eq:2.48}
\end{align}
by combining (\ref{eq:2.46}) and (\ref{eq:2.48}) we get 
\begin{equation}
Y_{t}^{*}=y_{0}e^{\int_{0}^{t}-\widehat{r_{s}}ds}\frac{M_{0}}{M_{t}}\mathcal{E}\left(\int_{0}^{t}\left(\frac{\Gamma_{s}}{M_{s}}-1\right)\theta_{s}^{T}dW_{s}\right)
\end{equation}
\\
So we have as $\int_{0}^{T}\left|\widehat{r}_{s}\right|ds<\infty$
and $\Gamma,M,\frac{1}{M},\theta$ are all bounded that
\begin{eqnarray}
\mathbb{E}\left[\underset{t\in\left[0,T\right]}{\sup}\left(Y_{t}^{*}\right)^{2}\right] & = & \mathbb{E}\left\{ \underset{t\in\left[0,T\right]}{\sup}\left[\left(y_{0}e^{\int_{0}^{t}-\widehat{r_{s}}ds}\frac{M_{0}}{M_{t}}\right)^{2}\mathcal{E}\left(\int_{0}^{t}\left(\frac{\Gamma_{s}}{M_{s}}-1\right)\theta_{s}^{T}dW_{s}\right)^{2}\right]\right\} \nonumber \\
 & \leq & \mathbb{E}\left\{ \underset{t\in\left[0,T\right]}{\sup}\left(y_{0}e^{\int_{0}^{t}-\widehat{r_{s}}ds}\frac{M_{0}}{M_{t}}\right)^{2}\underset{t\in\left[0,T\right]}{\sup}\mathcal{E}\left(\int_{0}^{t}\left(\frac{\Gamma_{s}}{M_{s}}-1\right)\theta_{s}^{T}dW_{s}\right)^{2}\right\} \nonumber \\
 & \leq & y_{0}^{2}\underset{t\in\left[0,T\right]}{\sup}e^{\int_{0}^{t}-2\widehat{r_{s}}ds}\mathbb{E}\left[\underset{t\in\left[0,T\right]}{\sup}\left(\frac{M_{0}}{M_{t}}\right)^{2}\underset{t\in\left[0,T\right]}{\sup}\mathcal{E}\left(\int_{0}^{t}\left(\frac{\Gamma_{s}}{M_{s}}-1\right)\theta_{s}^{T}dW_{s}\right)^{2}\right]\nonumber \\
 & \leq & C\mathbb{E}\left[\underset{t\in\left[0,T\right]}{\sup}\mathcal{E}\left(\int_{0}^{t}\left(\frac{\Gamma_{s}}{M_{s}}-1\right)\theta_{s}^{T}dW_{s}\right)^{2}\right]\mbox{, for some constant \ensuremath{C}}\nonumber \\
 &  & \mbox{and }\mathcal{E}\left(\int_{0}^{t}\left(\frac{\Gamma_{s}}{M_{s}}-1\right)\theta_{s}^{T}dW_{s}\right)\mbox{is a martingale by Novikov}\nonumber \\
 & \leq & 4C\mathbb{E}\left[\mathcal{E}\left(\int_{0}^{T}\left(\frac{\Gamma_{s}}{M_{s}}-1\right)\theta_{s}^{T}dW_{s}\right)^{2}\right],\mbox{by \ensuremath{L^{p}}Maximal Inequality}\nonumber \\
 & = & 4C\mathbb{E}\left[e^{\int_{0}^{T}\left(\frac{\Gamma_{s}}{M_{s}}-1\right)^{2}\left|\theta_{s}\right|^{2}ds}\mathcal{E}\left(\int_{0}^{T}2\left(\frac{\Gamma_{s}}{M_{s}}-1\right)\theta_{s}^{T}dW_{s}\right)\right]\nonumber \\
 & < & \infty\label{eq:4.52}
\end{eqnarray}
as we also have $\mathbb{E}\left[\mathcal{E}\left(\int_{0}^{T}2\left(\frac{\Gamma_{s}}{M_{s}}-1\right)\theta_{s}^{T}dW_{s}\right)\right]=1$
, thus we deduce that 

\begin{equation}
Y^{*}\in L_{\mathcal{F}}^{2}\left(\Omega;C\left(0,T;\mathbb{R}\right)\right)\label{eq:2.52}
\end{equation}
\\
Since $\left(u^{*},Y^{*}\right)$ can be regraded as the solution
to the following BSDE
\begin{equation}
\begin{cases}
dY_{s} & =\left(\widehat{r_{s}}Y_{s}+\left(u_{s}\right)^{T}\theta_{s}\right)ds+\left(u_{s}\right)^{T}dW_{s}\\
Y_{T} & =Y_{T}^{*}
\end{cases}
\end{equation}
and we have 
\begin{equation}
\begin{cases}
d\left(e^{-\int_{0}^{s}\widehat{r_{u}}du}Y_{s}^{*}\right) & =\left[\left(e^{-\int_{0}^{s}\widehat{r_{u}}du}u_{s}^{*}\right)^{T}\right]\theta_{s}ds+\left[\left(e^{-\int_{0}^{s}\widehat{r_{u}}du}u_{s}^{*}\right)^{T}\right]dW_{s}\\
e^{-\int_{0}^{T}\widehat{r_{u}}du}Y_{T}^{*} & =e^{-\int_{0}^{T}\widehat{r_{u}}du}Y_{T}^{*}
\end{cases}\label{eq:4.55-1-1}
\end{equation}
which implies $\left(e^{-\int_{0}^{\cdot}\widehat{r_{u}}du}u^{*},e^{-\int_{0}^{\cdot}\widehat{r_{u}}du}Y^{*}\right)$
can be regraded as the solution to the following BSDE$ $
\begin{equation}
\begin{cases}
d\tilde{Y_{s}} & =\left(\tilde{u_{s}}\right)^{T}\theta_{s}ds+\left(\tilde{u_{s}}\right)^{T}dW_{s}\\
\tilde{Y_{T}} & =e^{-\int_{0}^{T}\widehat{r_{u}}du}Y_{T}^{*}
\end{cases}\label{eq:4.55-1}
\end{equation}
since the above BSDE (\ref{eq:4.55-1}) is Lipschitz as $\theta$
is bounded and $e^{-\int_{0}^{T}\widehat{r_{u}}du}Y_{T}^{*}\in L_{\mathcal{F}_{T}}^{2}\left(\Omega;\mathbb{R}\right)$,
thus we deduce that $e^{-\int_{0}^{\cdot}\widehat{r_{u}}du}u^{*}\in L_{\mathcal{F}}^{2}\left(0,T;\mathbb{R}^{d}\right)$
which implies that 
\begin{equation}
u^{*}\in L_{\mathcal{F}}^{2}\left(0,T;\mathbb{R}^{d}\right)\label{eq:2.53}
\end{equation}
\\
Also we have that for any $t\in\left[0,T\right)$ 
\begin{eqnarray}
\mbox{ }\Lambda_{s}^{t} & = & p_{s}^{t}\theta_{s}+q_{s}^{t}\nonumber \\
 & = & \left[M_{s}Y_{s}^{*}-\Gamma_{s}Y_{t}^{*}\right]\theta_{s}+M_{s}u_{s}^{*}+K_{s}Y_{s}^{*}\nonumber \\
 & = & \left\{ M_{s}\theta_{s}+M_{s}\left[\left(\frac{\Gamma_{s}}{M_{s}}-1\right)\theta_{s}-\frac{K_{s}}{M_{s}}\right]+K_{s}\right\} Y_{s}^{*}-\Gamma_{s}Y_{t}^{*}\theta_{s}\nonumber \\
 & = & \Gamma_{s}\theta_{s}Y_{s}^{*}-\Gamma_{s}Y_{t}^{*}\theta_{s}\nonumber \\
 & = & \Gamma_{s}\theta_{s}\left(Y_{s}^{*}-Y_{t}^{*}\right)
\end{eqnarray}
Since $\Gamma,\theta$ are bounded, it is clearly by (\ref{eq:2.52})
that 
\begin{equation}
\mathbb{E}_{t}\int_{t}^{T}\left|\Lambda_{s}^{t}\right|ds<\infty\label{eq:2.58}
\end{equation}
and we have 
\begin{eqnarray}
\underset{s\left\downarrow t\right.}{\lim}\mathbb{E}_{t}\left[\Lambda_{s}^{t}\right] & = & \underset{s\left\downarrow t\right.}{\lim}\mathbb{E}_{t}\left[\Gamma_{s}\theta_{s}\left(Y_{s}^{*}-Y_{t}^{*}\right)\right]\nonumber \\
 & \underset{=}{(\ref{eq:2.52})} & \mathbb{E}_{t}\left[\underset{s\left\downarrow t\right.}{\lim}\Gamma_{s}\theta_{s}\left(Y_{s}^{*}-Y_{t}^{*}\right)\right],\mbox{by Dominated Convergence}\nonumber \\
 & = & 0\label{eq:2.59}
\end{eqnarray}
Then (\ref{eq:2.53}), (\ref{eq:2.58}), (\ref{eq:2.59}) are exactly
the required conditions in (\ref{eq:2.18}) and we deduce that $u^{*}$
is an equilibrium control for the family of problems in (\ref{eq:2.14})
by proposition \ref{prop:2.2}.\\

\end{proof}
\noindent Since $u_{s}^{*}=e^{\int_{s}^{T}\lambda_{u}du}\sigma_{s}^{T}\pi_{s}^{*}$
by (\ref{eq:2.14}) and $u_{s}^{*}=\alpha_{s}Y_{s}^{*}=\alpha_{s}e^{\int_{s}^{T}\lambda_{u}du}X_{s}^{*}$
by (\ref{eq:2.20}) and (\ref{eq:2.13}). Thus we have that
\begin{equation}
\pi_{s}^{*}=\left(\sigma_{s}^{T}\right)^{-1}\alpha_{s}X_{s}^{*}\label{eq:2.60}
\end{equation}
which says $\pi_{s}^{*}$ is equilibrium to the family of problems
in (\ref{eq:2.6})\\

\subsection{Conditions for obtaining an equilibrium for our problem}
\begin{thm}
\label{thm:2.6}If there exists a deterministic process \textup{$\lambda^{*}$
with} \textup{$\int_{0}^{T}\left|\lambda_{s}^{*}\right|ds<\infty$}
s.t.\textup{ $\mathbb{E}_{t}^{\mathbb{Q}}\left[e^{\int_{t}^{T}\alpha_{s}^{T}\theta_{s}ds}\right]=e^{\int_{t}^{T}\mu_{s}-r_{s}ds}$}
for any $t\in\left[0,T\right)$ with \textup{$\left.\frac{d\mathbb{Q}}{d\mathbb{P}}\right|_{\mathcal{F}_{t}}=\mathcal{E}\left(\int_{0}^{t}\alpha_{s}^{T}dW_{s}\right)$,
}then the above $\pi^{*}$ in (\ref{eq:2.60}) is an equilibrium control
for our original family of problems (\ref{eq:2.5})\end{thm}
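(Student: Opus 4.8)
The plan is to deduce the statement from Theorem~\ref{thm:2.1}. In the preceding subsection we have already shown that the control $u^{*}$ of (\ref{eq:2.44}) is an equilibrium for the transformed family (\ref{eq:2.14}), and hence, via the identifications in (\ref{eq:2.13}) and (\ref{eq:2.20}), that the associated $\pi^{*}$ in (\ref{eq:2.60}) is an equilibrium for the family (\ref{eq:2.6}) carrying $\lambda^{*}$ as its parameter. Theorem~\ref{thm:2.1} then says that it suffices to verify the single remaining condition, namely the moving--target constraint $\mathbb{E}_{t}[X_{T}^{*}]=X_{t}^{*}e^{\int_{t}^{T}\mu_{s}ds}$ for every $t\in[0,T)$; once this holds, the same $(\pi^{*},X^{*})$ is automatically an equilibrium for the original family (\ref{eq:2.5}).

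First I would rewrite that constraint in terms of $Y^{*}$. Using $Y_{s}=X_{s}e^{\int_{s}^{T}\lambda_{u}du}$ from (\ref{eq:2.13}) we have $X_{T}^{*}=Y_{T}^{*}$ and $X_{t}^{*}=Y_{t}^{*}e^{-\int_{t}^{T}\lambda_{s}ds}$, so the moving--target constraint is equivalent to
\[
\mathbb{E}_{t}[Y_{T}^{*}]=Y_{t}^{*}e^{\int_{t}^{T}(\mu_{s}-\lambda_{s})ds}.
\]
Next I would compute $\mathbb{E}_{t}[Y_{T}^{*}]$ explicitly. From the closed--loop dynamics (\ref{eq:4.47}) the process $Y^{*}$ solves a geometric SDE, so it is strictly positive and
\[
Y_{T}^{*}=Y_{t}^{*}\exp\left(\int_{t}^{T}\left(\widehat{r}_{s}+\alpha_{s}^{T}\theta_{s}-\tfrac{1}{2}|\alpha_{s}|^{2}\right)ds+\int_{t}^{T}\alpha_{s}^{T}dW_{s}\right).
\]
Since $\int_{t}^{T}\widehat{r}_{s}ds$ is deterministic and $Y_{t}^{*}$ is $\mathcal{F}_{t}$-measurable, pulling these out gives
\[
\mathbb{E}_{t}[Y_{T}^{*}]=Y_{t}^{*}e^{\int_{t}^{T}\widehat{r}_{s}ds}\,\mathbb{E}_{t}\!\left[\frac{\mathcal{E}(\int_{0}^{T}\alpha_{s}^{T}dW_{s})}{\mathcal{E}(\int_{0}^{t}\alpha_{s}^{T}dW_{s})}\,e^{\int_{t}^{T}\alpha_{s}^{T}\theta_{s}ds}\right].
\]

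The key step is to recognise the stochastic--exponential ratio as the Radon--Nikodym density of the measure $\mathbb{Q}$ from the statement. Because $\alpha_{s}=\left(\tfrac{\Gamma_{s}}{M_{s}}-1\right)\theta_{s}-\tfrac{K_{s}}{M_{s}}$ with $\theta,\Gamma,M,\tfrac{1}{M}$ all bounded and $\int_{0}^{\cdot}K_{s}^{T}dW_{s}$ a BMO martingale by Proposition~\ref{prop;2.4}, the process $\int_{0}^{\cdot}\alpha_{s}^{T}dW_{s}$ is itself BMO, so Fact~\ref{2.3} guarantees that $\mathcal{E}(\int_{0}^{\cdot}\alpha_{s}^{T}dW_{s})$ is a true martingale and that $\mathbb{Q}$ is a genuine probability measure. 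The conditional Bayes rule then turns the display into $\mathbb{E}_{t}[Y_{T}^{*}]=Y_{t}^{*}e^{\int_{t}^{T}\widehat{r}_{s}ds}\,\mathbb{E}_{t}^{\mathbb{Q}}[e^{\int_{t}^{T}\alpha_{s}^{T}\theta_{s}ds}]$. Substituting $\widehat{r}_{s}=r_{s}-\lambda_{s}$, comparing with the reformulated constraint and cancelling the common factor $e^{-\int_{t}^{T}\lambda_{s}ds}$, the whole condition collapses to
\[
\mathbb{E}_{t}^{\mathbb{Q}}\left[e^{\int_{t}^{T}\alpha_{s}^{T}\theta_{s}ds}\right]=e^{\int_{t}^{T}(\mu_{s}-r_{s})ds},
\]
which is exactly the hypothesis. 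Invoking Theorem~\ref{thm:2.1} then finishes the argument.

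The main obstacle I anticipate is the rigorous bookkeeping of the change of measure rather than any genuinely hard estimate: one must confirm the BMO property of $\int_{0}^{\cdot}\alpha_{s}^{T}dW_{s}$ so that $\mathbb{Q}$ is well defined, and justify the interchange of $\mathbb{E}_{t}$ with the density ratio, i.e. the exponential $\mathbb{Q}$-integrability of $e^{\int_{t}^{T}\alpha_{s}^{T}\theta_{s}ds}$. The finiteness of $\mathbb{E}_{t}[Y_{T}^{*}]$ itself is already secured by the estimate (\ref{eq:4.52}), which gives $Y^{*}\in L_{\mathcal{F}}^{2}(\Omega;C(0,T;\mathbb{R}))$, so once the BMO/Bayes machinery is in place every other step is a direct computation.
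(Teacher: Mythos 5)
Your proposal is correct and follows essentially the same route as the paper's own proof: verify that $\mathbb{Q}$ is well defined via the BMO property of $\int_{0}^{\cdot}\alpha_{s}^{T}dW_{s}$ (from Proposition \ref{prop;2.4} and Fact \ref{2.3}), write the closed-loop wealth process explicitly, convert $\mathbb{E}_{t}$ into $\mathbb{E}_{t}^{\mathbb{Q}}$ by the change of measure so that the moving-target constraint reduces exactly to the stated hypothesis, and then invoke Theorem \ref{thm:2.1}. The only cosmetic difference is that you carry out the computation in the $Y^{*}$-coordinates and translate the constraint, whereas the paper first converts back to $X^{*}$ via (\ref{eq:2.13}) and verifies $\mathbb{E}_{t}[X_{T}^{*}]=X_{t}^{*}e^{\int_{t}^{T}\mu_{s}ds}$ directly.
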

\begin{proof}
Suppose there exists a deterministic process $\lambda^{*}$ with $\int_{0}^{T}\left|\lambda_{s}^{*}\right|ds<\infty$
s.t. $\mathbb{E}_{t}^{\mathbb{Q}}\left[e^{\int_{t}^{T}r_{s}+\alpha_{s}^{T}\theta_{s}-\mu_{s}ds}\right]=1$
for any $t\in\left[0,T\right)$ with $\left.\frac{d\mathbb{Q}}{d\mathbb{P}}\right|_{\mathcal{F}_{t}}=\mathcal{E}\left(\int_{0}^{t}\alpha_{s}^{T}dW_{s}\right)$.
Firstly, we verify that $\left.\frac{d\mathbb{Q}}{d\mathbb{P}}\right|_{\mathcal{F}_{t}}=\mathcal{E}\left(\int_{0}^{t}\alpha_{s}^{T}dW_{s}\right)$
is well defined. Since $\int_{0}^{\cdot}\left(K_{s}\right)^{T}dW_{s}$
is a BMO martingale and $\frac{1}{M},\theta,\Gamma$ are all bounded,
we deduce that $\int_{0}^{\cdot}\left(\alpha_{s}\right)^{T}dW_{s}$
is a also BMO martingale by definition according to fact \ref{2.3}.
Thus we could define
\[
\left.\frac{d\mathbb{Q}}{d\mathbb{P}}\right|_{\mathcal{F}_{t}}=\mathcal{E}\left(\int_{0}^{t}\alpha_{s}^{T}dW_{s}\right)
\]
By combining (\ref{eq:2.13}) and (\ref{eq:4.47}) we could get
\begin{align}
 & X_{t}^{*}e^{\int_{t}^{T}\lambda_{s}^{*}ds}=X_{0}^{*}e^{\int_{0}^{T}\lambda_{s}^{*}ds}e^{\int_{0}^{t}r_{s}-\lambda_{s}^{*}+\alpha_{s}^{T}\theta_{s}-\frac{1}{2}\left|\alpha_{s}\right|^{2}ds+\int_{0}^{t}\alpha_{s}^{T}dW_{s}}\nonumber \\
\Rightarrow & X_{t}^{*}=x_{0}e^{\int_{0}^{t}r_{s}+\alpha_{s}^{T}\theta_{s}-\frac{1}{2}\left|\alpha_{s}\right|^{2}ds+\int_{0}^{t}\alpha_{s}^{T}dW_{s}}
\end{align}
where $\alpha_{s}=\left(\frac{\Gamma_{s}}{M_{s}}-1\right)\theta_{s}-\frac{K_{s}}{M_{s}}$,
and we have that
\begin{eqnarray}
\mathbb{E}_{t}\left[X_{T}^{*}\right] & = & X_{t}^{*}e^{\int_{t}^{T}\mu_{s}ds}\mathbb{E}_{t}\left[e^{\int_{t}^{T}r_{s}+\alpha_{s}^{T}\theta_{s}-\mu_{s}ds}e^{\int_{t}^{T}-\frac{1}{2}\left|\alpha_{s}\right|^{2}ds+\int_{t}^{T}\alpha_{s}^{T}dW_{s}}\right]\nonumber \\
 & = & X_{t}^{*}e^{\int_{t}^{T}\mu_{s}ds}\mathbb{E}_{t}^{\mathbb{Q}}\left[e^{\int_{t}^{T}r_{s}+\alpha_{s}^{T}\theta_{s}-\mu_{s}ds}\right]\nonumber \\
 & = & X_{t}^{*}e^{\int_{t}^{T}\mu_{s}ds}
\end{eqnarray}
for any $t\in\left[0,T\right)$. Since $\pi_{s}^{*}$ is equilibrium
to the family of problems in (\ref{eq:2.6}), thus we could deduce
that having $\lambda^{*}$ as parameter $\pi_{s}^{*}$$ $ is an equilibrium
control for our original family of problems (\ref{eq:2.5}) by theorem
\ref{thm:2.1}. 
\end{proof}
\noindent If we also assume that $\mu^{x}$ is deterministic, i.e.
$\theta$ is deterministic, then we have as $K=0$ that
\begin{eqnarray}
u_{s}^{*} & = & \left(\frac{\Gamma_{s}}{M_{s}}-1\right)\theta_{s}Y_{s}^{*}\nonumber \\
 & = & \alpha_{s}Y_{s}^{*}\label{eq:2.20-2}
\end{eqnarray}
where $\alpha_{s}=\left(\frac{\Gamma_{s}}{M_{s}}-1\right)\theta_{s}$
for any $s\in\left[0,T\right]$ and we could solve (\ref{eq:2.36})
and get

\noindent 
\begin{equation}
\begin{cases}
\Gamma_{s}=e^{\int_{s}^{T}\widehat{r_{u}}du}\\
M_{s}=e^{\int_{s}^{T}2\widehat{r_{u}}-\left|\theta_{u}\right|^{2}du}+\int_{s}^{T}\Gamma_{u}\left|\theta_{u}\right|^{2}e^{\int_{s}^{u}2\widehat{r_{x}}-\left|\theta_{x}\right|^{2}dx}du
\end{cases}
\end{equation}
In this case, theorem \ref{thm:2.6} becomes that if there exists
a deterministic process $\lambda^{*}$ with $\int_{0}^{T}\left|\lambda_{s}^{*}\right|ds<\infty$
s.t. $\int_{t}^{T}r_{s}+\alpha_{s}^{T}\theta_{s}-\mu_{s}ds=0$ for
any $t\in\left[0,T\right)$, then the above $\pi^{*}$ in (\ref{eq:2.60})
is an equilibrium control for our original family of problems in (\ref{eq:2.5}).\newpage{}

\section{Utility function\textmd{ $h\left(x\right)=-\frac{x^{3}}{3}$ }for
mean-cubic portfolio selection}

Some investors are risk seekers and they may choose a utility function
that looks quite risky as the one we will use here. $\frac{}{}$In
this section, we want to solve our moving target portfolio selection
problem when we choose to use $h(x)=-\frac{x^{3}}{3}$ as the utility
function. That means we want to solve the family of following problems
for any $t\in\left[0,T\right)$
\begin{align}
\underset{\pi}{\min} & \mbox{ \ensuremath{}\ensuremath{}\ensuremath{}\ensuremath{}\mbox{ \ensuremath{}\ensuremath{}\ensuremath{}\ensuremath{}}\mbox{ \ensuremath{}\ensuremath{}\ensuremath{}\ensuremath{}}}\mathbb{E}_{t}\left[-\frac{\left(X_{T}-X_{t}e^{\int_{t}^{T}\mu_{s}ds}\right)^{3}}{3}\right]\nonumber \\
s.t. & \mbox{ \ensuremath{}\ensuremath{}\ensuremath{}\ensuremath{}\mbox{ \ensuremath{}\ensuremath{}\ensuremath{}\ensuremath{}}}\begin{cases}
dX_{s}=\left(r_{s}X_{s}+\pi_{s}^{T}\sigma_{s}\theta_{s}\right)ds+\pi_{s}^{T}\sigma_{s}dW_{s}\\
X_{t}=x_{t}\\
\pi\in\overline{U_{ad}^{\pi}}=\left\{ \pi\left|\right.\pi\in L_{\mathcal{F}}^{2}\left(0,T;\mathbb{R}^{d}\right)\mbox{and }\mathbb{E}_{t}\left[X_{T}\right]=X_{t}e^{\int_{t}^{T}\mu_{s}ds},\forall t\in\left[0,T\right)\right\} 
\end{cases}\label{eq:3.1}
\end{align}
where $\mu$ is our required return process which is bounded and deterministic.

\subsection{Transformation of our problem}

Here we use another approach to solve our problem rather than the
Lagrangian multiplier method used in the previous section for $h(x)=\frac{x^{2}}{2}$.
By letting for any $s\in\left[0,T\right]$
\begin{equation}
Y_{s}=X_{s}e^{\int_{s}^{T}\mu_{u}du}
\end{equation}
we have the family of following problems for any $t\in\left[0,T\right)$,
which is equivalent to the above family of problems (\ref{eq:3.1})
\begin{align}
\underset{u}{\min} & \mbox{ \ensuremath{}\ensuremath{}\ensuremath{}\ensuremath{}\mbox{ \ensuremath{}\ensuremath{}\ensuremath{}\ensuremath{}}\mbox{ \ensuremath{}\ensuremath{}\ensuremath{}\ensuremath{}}}\mathbb{E}_{t}\left[-\frac{\left(Y_{T}-Y_{t}\right)^{3}}{3}\right]\nonumber \\
s.t. & \mbox{ \ensuremath{}\ensuremath{}\ensuremath{}\ensuremath{}\mbox{ \ensuremath{}\ensuremath{}\ensuremath{}\ensuremath{}}}\begin{cases}
dY_{s}=\left(\widehat{r_{s}}Y_{s}+u_{s}^{T}\theta_{s}\right)ds+u_{s}^{T}dW_{s}\\
Y_{t}=y_{t}\\
u\in\overline{U_{ad}^{u}}=\left\{ u\left|\right.u\in L_{\mathcal{F}}^{2}\left(0,T;\mathbb{R}^{d}\right)\mbox{and }\mathbb{E}_{t}\left[Y_{T}\right]=Y_{t},\forall t\in\left[0,T\right)\right\} 
\end{cases}\label{eq:3.2}
\end{align}
where $\widehat{r_{s}}=r_{s}-\mu_{s}$, $u_{s}=e^{\int_{s}^{T}\mu_{u}du}\sigma_{s}^{T}\pi_{s}$,
$y_{t}=x_{t}e^{\int_{t}^{T}\mu_{u}du}$. \\

\noindent $\mathbb{E}_{t}\left[Y_{T}\right]=Y_{t},\forall t\in\left[0,T\right)$
implies that $Y$must be a martingale as 
\begin{equation}
\forall\hat{t}\in\left(t,T\right),\mathbb{E}_{t}\left[Y_{\hat{t}}\right]=\mathbb{E}_{t}\left[\mathbb{E}_{\hat{t}}\left(Y_{T}\right)\right]=\mathbb{E}_{t}\left[Y_{T}\right]=Y_{t}
\end{equation}
which is an admissible constraint on $u$ and thus we could firstly
consider the family of following problems

\noindent 
\begin{align}
\underset{u}{\min} & \mbox{ \ensuremath{}\ensuremath{}\ensuremath{}\ensuremath{}\mbox{ \ensuremath{}\ensuremath{}\ensuremath{}\ensuremath{}}\mbox{ \ensuremath{}\ensuremath{}\ensuremath{}\ensuremath{}}}\mathbb{E}_{t}\left[-\frac{\left(Y_{T}-Y_{t}\right)^{3}}{3}\right]\nonumber \\
s.t. & \mbox{ \ensuremath{}\ensuremath{}\ensuremath{}\ensuremath{}\mbox{ \ensuremath{}\ensuremath{}\ensuremath{}\ensuremath{}}}\begin{cases}
dY_{s}=\left(\widehat{r_{s}}Y_{s}+u_{s}^{T}\theta_{s}\right)ds+u_{s}^{T}dW_{s}\\
Y_{t}=y_{t}\\
u\in U_{ad}^{u}=\left\{ u\left|\right.u\in L_{\mathcal{F}}^{2}\left(0,T;\mathbb{R}^{d}\right)\right\} 
\end{cases}\label{eq:3.3}
\end{align}

\begin{prop}
\label{thm:3.1}If an equilibrium solution $\left(u^{*},Y^{*}\right)$
to the above family of problems (\ref{eq:3.3}) satisfies\textup{
$\widehat{r_{s}}Y_{s}^{*}+\left(u_{s}^{*}\right)^{T}\theta_{s}=0$
}for any $s\in\left[0,T\right)$, then $\left(u^{*},Y^{*}\right)$
is also equilibrium for the family of problems in (\ref{eq:3.2})\end{prop}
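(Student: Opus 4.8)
The plan is to exploit the fact that the two families of problems (\ref{eq:3.2}) and (\ref{eq:3.3}) share \emph{exactly} the same objective functional $J\left(t,Y_t;u\right)=\mathbb{E}_t\left[-\frac{\left(Y_T-Y_t\right)^3}{3}\right]$ and the same controlled dynamics; they differ only in their admissible sets, where $\overline{U_{ad}^u}\subseteq U_{ad}^u$. Accordingly the argument splits into two independent checks. First, I would show that $u^*$ is itself admissible for the constrained family, i.e. $u^*\in\overline{U_{ad}^u}$. Second, I would argue that the equilibrium inequality, already known to hold over all spike perturbations keeping the control in the larger set $U_{ad}^u$, continues to hold a fortiori over the smaller collection of perturbations that keep it in $\overline{U_{ad}^u}$.

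For admissibility, the key is the hypothesis $\widehat{r_s}Y_s^*+\left(u_s^*\right)^T\theta_s=0$ for all $s\in\left[0,T\right)$, which annihilates the drift of $Y^*$, so that $dY_s^*=\left(u_s^*\right)^T dW_s$. Since $u^*\in L_{\mathcal{F}}^2\left(0,T;\mathbb{R}^d\right)$ (being an equilibrium solution of (\ref{eq:3.3})), the stochastic integral $\int_0^\cdot\left(u_s^*\right)^T dW_s$ is a genuine square-integrable martingale rather than merely a local one, so $Y^*$ is a true martingale. Hence $\mathbb{E}_t\left[Y_T^*\right]=Y_t^*$ for every $t\in\left[0,T\right)$, which is precisely the moving-target constraint defining $\overline{U_{ad}^u}$; thus $u^*\in\overline{U_{ad}^u}$.

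For the equilibrium inequality I would fix $t\in\left[0,T\right)$ and an arbitrary $v\in L_{\mathcal{F}_t}^2\left(\Omega;\mathbb{R}^d\right)$ such that the spike perturbation $u^{t,\varepsilon,v}$ lies in $\overline{U_{ad}^u}$. Because $\overline{U_{ad}^u}\subseteq U_{ad}^u$, this same $u^{t,\varepsilon,v}$ also lies in $U_{ad}^u$, so the defining property of $u^*$ as an equilibrium for (\ref{eq:3.3}) gives $\lim_{\varepsilon\downarrow 0}\frac{J\left(t,Y_t^*;u^{t,\varepsilon,v}\right)-J\left(t,Y_t^*;u^*\right)}{\varepsilon}\geq 0$. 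Since $J$ and the reference state process $Y^*$ are literally identical in both families, this is exactly the inequality required by Definition \ref{2.1} for the constrained family (\ref{eq:3.2}); as $t$ and $v$ range over all admissible perturbations, $u^*$ is an equilibrium for (\ref{eq:3.2}).

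I expect this argument to be essentially routine, in sharp contrast with the $h(x)=\frac{x^2}{2}$ case of Theorem \ref{thm:2.1}, where the two families carried genuinely different objectives (one using $\lambda^*$, the other $\mu$) and one had to expand the square and invoke the constraint to show the perturbation differences coincided. Here the objectives are already identical, so no such reconciliation is needed, and the subset relation $\overline{U_{ad}^u}\subseteq U_{ad}^u$ does all the work. The only point that demands a moment's care — and hence the nearest thing to an obstacle — is upgrading $Y^*$ from a local to a true martingale, which the $L^2$-integrability of $u^*$ supplies.
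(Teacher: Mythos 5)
Your proof is correct and takes essentially the same approach as the paper: the paper likewise uses the drift-killing hypothesis to write $dY_{s}^{*}=\left(u_{s}^{*}\right)^{T}dW_{s}$, concludes $Y^{*}$ is a martingale so that $u^{*}\in\overline{U_{ad}^{u}}$, and then invokes Definition \ref{2.1} together with the inclusion $\overline{U_{ad}^{u}}\subseteq U_{ad}^{u}$ to transfer the equilibrium property. Your explicit remark that the $L^{2}$-integrability of $u^{*}$ upgrades the stochastic integral from a local to a true martingale is a detail the paper leaves implicit, and is a worthwhile clarification.
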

\begin{proof}
Suppose $\left(u^{*},Y^{*}\right)$ is an equilibrium solution to
(\ref{eq:3.3}) which satisfies $\widehat{r_{s}}Y_{s}^{*}+\left(u_{s}^{*}\right)^{T}\theta_{s}=0$
for any $s\in\left[0,T\right)$, then we have that $dY_{s}^{*}=u_{s}^{*T}dW_{s},\forall s\in\left[0,T\right)$$ $
which implies that $Y^{*}$ is a martingale. So we have $\mathbb{E}_{t}\left[Y_{T}^{*}\right]=Y_{t}^{*},\forall t\in\left[0,T\right)$
and thus $u^{*}\in\overline{U_{ad}^{u}}$, which can be used together
with definition \ref{2.1} to deduce that $\left(u^{*},Y^{*}\right)$
is also equilibrium for the family of problems (\ref{eq:3.2})
\end{proof}
\noindent By the above proposition, we try to solve the family of
problems (\ref{eq:3.3}) instead, that is
\begin{align}
\underset{u}{\min} & \mbox{ \ensuremath{}\ensuremath{}\ensuremath{}\ensuremath{}\mbox{ \ensuremath{}\ensuremath{}\ensuremath{}\ensuremath{}}\mbox{ \ensuremath{}\ensuremath{}\ensuremath{}\ensuremath{}}}\mathbb{E}_{t}\left[h\left(Y_{T}-Y_{t}\right)\right]\nonumber \\
s.t. & \mbox{ \ensuremath{}\ensuremath{}\ensuremath{}\ensuremath{}\mbox{ \ensuremath{}\ensuremath{}\ensuremath{}\ensuremath{}}}\begin{cases}
dY_{s}=\left(\widehat{r_{s}}Y_{s}+u_{s}^{T}\theta_{s}\right)ds+u_{s}^{T}dW_{s}\\
Y_{t}=y_{t}\\
u\in U_{ad}^{u}=\left\{ u\left|\right.u\in L_{\mathcal{F}}^{2}\left(0,T;\mathbb{R}^{d}\right)\right\} 
\end{cases}\label{eq:3.4}
\end{align}

\noindent where $h\left(x\right)=-\frac{x^{3}}{3}$, and this is again
exactly a family of problems of the form in (\ref{eq:1.3}) , so we
have the following system of BSDEs by (\ref{eq:1.4}) and (\ref{eq:1.5})
\begin{equation}
\begin{cases}
dp_{s}^{t} & =-\widehat{r_{s}}p_{s}^{t}ds+\left(q_{s}^{t}\right)^{T}dW_{s},\mbox{ \ensuremath{}s\ensuremath{\in\left[t,T\right]}}\\
p_{T}^{t} & =\frac{dh\left(Y_{T}^{*}-Y_{t}^{*}\right)}{dx}=-\left(Y_{T}^{*}-Y_{t}^{*}\right)^{2}
\end{cases}\label{eq:3.5}
\end{equation}
\begin{equation}
\begin{cases}
dP_{s}^{t} & =-2\widehat{r_{s}}P_{s}^{t}ds+\left(Q_{s}^{t}\right)^{T}dW_{s},\mbox{ \ensuremath{}s\ensuremath{\in\left[t,T\right]}}\\
P_{T}^{t} & =\frac{d^{2}h\left(Y_{T}^{*}-Y_{t}^{*}\right)}{dx^{2}}=-2\left(Y_{T}^{*}-Y_{t}^{*}\right)
\end{cases}\label{eq:3.5-2}
\end{equation}

\begin{prop}
\label{prop:3.2}If (\textup{\ref{eq:3.4}}) and (\ref{eq:3.5}) admit
a solution $\left(u^{*},Y^{*},p^{t},q^{t}\right)$ for any $t\in\left[0,T\right)$
s.t.
\begin{equation}
\begin{cases}
u^{*}\in L_{\mathcal{F}}^{2}\left(0,T;\mathbb{R}^{d}\right)\\
\mathbb{E}_{t}\int_{t}^{T}\left|\Lambda_{s}^{t}\right|ds<\infty\mbox{ and \ensuremath{\underset{s\left\downarrow t\right.}{\lim}\mathbb{E}_{t}\left[\Lambda_{s}^{t}\right]=0}} & where\mbox{ }\Lambda_{s}^{t}=p_{s}^{t}\theta_{s}+q_{s}^{t}.\\
\widehat{r_{t}}Y_{t}^{*}+\left(u_{t}^{*}\right)^{T}\theta_{t}=0
\end{cases}\label{eq:3.7}
\end{equation}
then $u^{*}$ is an equilibrium control for the family of problems
(\ref{eq:3.2}) \end{prop}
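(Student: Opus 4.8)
The plan is to obtain the result by combining the general sufficient condition of Theorem \ref{thm:1.3} with Proposition \ref{thm:3.1}. First I would note that the family (\ref{eq:3.4}) is precisely of the form (\ref{eq:1.3}) with $h(x)=-\frac{x^{3}}{3}$ (which is $C^{2}$, with $\widehat{r}=r-\mu$ deterministic and $\theta$ bounded), and that the forward--backward system $(\ref{eq:3.4})$--$(\ref{eq:3.5})$ is exactly the system appearing in Theorem \ref{thm:1.3}. The first two lines of the hypothesis (\ref{eq:3.7}) coincide verbatim with the first two conditions of (\ref{eq:1.11}), so the only condition of Theorem \ref{thm:1.3} left to verify is the second-order sign condition $\mathbb{E}_{t}\left[\frac{d^{2}h\left(Y_{T}^{*}-Y_{t}^{*}\right)}{dx^{2}}\right]\geq0$.

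To establish this I would use the third condition in (\ref{eq:3.7}). Since $\widehat{r_{s}}Y_{s}^{*}+\left(u_{s}^{*}\right)^{T}\theta_{s}=0$ for every $s\in\left[0,T\right)$, the drift of $Y^{*}$ vanishes and its dynamics reduce to $dY_{s}^{*}=\left(u_{s}^{*}\right)^{T}dW_{s}$. Because $u^{*}\in L_{\mathcal{F}}^{2}\left(0,T;\mathbb{R}^{d}\right)$, the stochastic integral $\int_{0}^{\cdot}\left(u_{s}^{*}\right)^{T}dW_{s}$ is a genuine square-integrable martingale, so $Y^{*}$ is a true martingale and $\mathbb{E}_{t}\left[Y_{T}^{*}\right]=Y_{t}^{*}$ for every $t$. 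As $h''\left(x\right)=-2x$, this yields
\[
\mathbb{E}_{t}\left[\frac{d^{2}h\left(Y_{T}^{*}-Y_{t}^{*}\right)}{dx^{2}}\right]=-2\,\mathbb{E}_{t}\left[Y_{T}^{*}-Y_{t}^{*}\right]=-2\left(\mathbb{E}_{t}\left[Y_{T}^{*}\right]-Y_{t}^{*}\right)=0\geq0,
\]
which is exactly the missing hypothesis. Theorem \ref{thm:1.3} then gives that $u^{*}$ is an equilibrium control for the family (\ref{eq:3.4}), equivalently for (\ref{eq:3.3}) since these two problems are identical.

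Finally, the martingale property of $Y^{*}$ together with the drift-vanishing condition $\widehat{r_{s}}Y_{s}^{*}+\left(u_{s}^{*}\right)^{T}\theta_{s}=0$ are precisely the hypotheses of Proposition \ref{thm:3.1}. Applying that proposition lifts the equilibrium property from the unconstrained family (\ref{eq:3.3}) to the constrained family (\ref{eq:3.2}), which completes the proof.

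I expect the delicate point to be conceptual rather than computational. The cubic utility has $h''\left(x\right)=-2x$, whose sign is not controlled a priori, so without additional structure the second-order term in the expansion (\ref{eq:1.6}) could be negative and destroy the equilibrium inequality. The entire argument hinges on the drift-vanishing condition forcing $Y^{*}$ to be a martingale, which makes $\mathbb{E}_{t}\left[Y_{T}^{*}-Y_{t}^{*}\right]$ vanish and renders the second-order term exactly zero. Care is needed to ensure $Y^{*}$ is a true martingale rather than merely a local one; this is where the integrability $u^{*}\in L_{\mathcal{F}}^{2}$ is genuinely used.
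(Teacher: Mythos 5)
Your proposal is correct and follows essentially the same route as the paper's own proof: use the drift-vanishing condition to show $Y^{*}$ is a martingale, hence $\mathbb{E}_{t}\left[\frac{d^{2}h\left(Y_{T}^{*}-Y_{t}^{*}\right)}{dx^{2}}\right]=-2\left(\mathbb{E}_{t}\left[Y_{T}^{*}\right]-Y_{t}^{*}\right)=0$, so the remaining hypothesis of Theorem \ref{thm:1.3} holds, and then lift the equilibrium from (\ref{eq:3.3}) to (\ref{eq:3.2}) via Proposition \ref{thm:3.1}. Your explicit justification that $Y^{*}$ is a true (not merely local) martingale via $u^{*}\in L_{\mathcal{F}}^{2}$ is a detail the paper glosses over, but the argument is the same.
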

\begin{proof}
Since the additional admissible condition $\widehat{r_{s}}Y_{s}^{*}+\left(u_{s}^{*}\right)^{T}\theta_{s}=0,\forall s\in\left[0,T\right)$
implies that $\mathbb{E}_{t}\left[Y_{T}^{*}\right]=Y_{t}^{*},\forall t\in\left[0,T\right)$,
so we have that $\mathbb{E}_{t}$$\left[\frac{d^{2}h\left(Y_{T}^{*}-Y_{t}^{*}\right)}{dx^{2}}\right]=$$-2\left(\mathbb{E}_{t}\left[Y_{T}^{*}\right]-Y_{t}^{*}\right)=0,\forall t\in\left[0,T\right)$.
Thus one of the sufficient condition for equilibrium, i.e. $\mathbb{E}_{t}$$\left[\frac{d^{2}h\left(Y_{T}^{*}-Y_{t}^{*}\right)}{dx^{2}}\right]\geq0$,
in (\ref{eq:1.11}) under theorem \ref{thm:1.3} is covered by the
condition $\widehat{r_{s}}Y_{s}^{*}+\left(u_{s}^{*}\right)^{T}\theta_{s}=0,\forall s\in\left[0,T\right)$,
and here we could make a replacement. Then by combining theorem \ref{thm:1.3}
and proposition \ref{thm:3.1}, we deduce that $u^{*}$ is an equilibrium
for the family of problems (\ref{eq:3.2}).
\end{proof}

\subsection{Details of finding a potential equilibrium}

\noindent By the assumptions we have made, $\widehat{r}$ and $\sigma$
are deterministic and bounded, $\mu^{x}$ is bounded. Here we also
assume that $\mu^{x}$ is deterministic, i.e. $\theta$ is deterministic
and bounded. Then for any $t\in\left[0,T\right)$, we make the following
Ansatz
\begin{equation}
p_{s}^{t}=-M_{s}\left(Y_{s}^{*}\right)^{2}+N_{s}Y_{t}^{*}Y_{s}^{*}-\Gamma_{s}\left(Y_{t}^{*}\right)^{2},s\in\left[t,T\right]\label{eq:3.8}
\end{equation}
where $M,N,\Gamma$ are deterministic functions which are differentiable
with $M_{T}=1,N_{T}=2,\Gamma_{T}=1$ 

\noindent by It$\hat{\mbox{o}}$ formula we have

\begin{eqnarray}
d\left(Y_{s}^{*}\right)^{2} & = & 2Y_{s}^{*}dY_{s}^{*}+d\left\langle Y_{s}^{*}\right\rangle \nonumber \\
 & = & \left[2Y_{s}^{*}\left(\widehat{r_{s}}Y_{s}^{*}+\left(u_{s}^{*}\right)^{T}\theta_{s}\right)+\left(u_{s}^{*}\right)^{T}u_{s}^{*}\right]ds+2Y_{s}^{*}\left(u_{s}^{*}\right)^{T}dW_{s}
\end{eqnarray}
\begin{eqnarray}
d\left[M_{s}\left(Y_{s}^{*}\right)^{2}\right] & = & M_{s}d\left(Y_{s}^{*}\right)^{2}+\left(Y_{s}^{*}\right)^{2}dM_{s}\nonumber \\
 & = & \left\{ M_{s}\left[2Y_{s}^{*}\left(\widehat{r_{s}}Y_{s}^{*}+\left(u_{s}^{*}\right)^{T}\theta_{s}\right)+\left(u_{s}^{*}\right)^{T}u_{s}^{*}\right]+\left(Y_{s}^{*}\right)^{2}M_{s}^{\prime}\right\} ds\nonumber \\
 &  & +2M_{s}Y_{s}^{*}\left(u_{s}^{*}\right)^{T}dW_{s}
\end{eqnarray}
\begin{eqnarray}
d\left(N_{s}Y_{t}^{*}Y_{s}^{*}\right) & = & Y_{t}^{*}\left[N_{s}dY_{s}^{*}+Y_{s}^{*}dN_{s}\right]\nonumber \\
 & = & Y_{t}^{*}\left[N_{s}\left(\widehat{r_{s}}Y_{s}^{*}+\left(u_{s}^{*}\right)^{T}\theta_{s}\right)+Y_{s}^{*}N_{s}^{\prime}\right]ds+Y_{t}^{*}N_{s}\left(u_{s}^{*}\right)^{T}dW_{s}
\end{eqnarray}

\noindent So by applying It$\hat{\mbox{o}}$ formula to (\ref{eq:3.8})
with respect to $s$ we could get
\begin{eqnarray}
dp_{s}^{t} & = & -d\left[M_{s}\left(Y_{s}^{*}\right)^{2}\right]+d\left(N_{s}Y_{t}^{*}Y_{s}^{*}\right)-d\left[\Gamma_{s}\left(Y_{t}^{*}\right)^{2}\right],\nonumber \\
 & = & \left[-\left(Y_{s}^{*}\right)^{2}M_{s}^{\mathbf{\prime}}-M_{s}\left(u_{s}^{*}\right)^{T}u_{s}^{*}-2M_{s}Y_{s}^{*}\left(\widehat{r_{s}}Y_{s}^{*}+\left(u_{s}^{*}\right)^{T}\theta_{s}\right)\right]ds\nonumber \\
 &  & +\left[Y_{t}^{*}Y_{s}^{*}N_{s}^{\prime}+Y_{t}^{*}N_{s}\left(\widehat{r_{s}}Y_{s}^{*}+\left(u_{s}^{*}\right)^{T}\theta_{s}\right)\right]ds-\Gamma_{s}^{\prime}\left(Y_{t}^{*}\right)^{2}ds\nonumber \\
 &  & +\left[-2M_{s}Y_{s}^{*}+Y_{t}^{*}N_{s}\right]\left(u_{s}^{*}\right)^{T}dW_{s}\label{eq:3.9}
\end{eqnarray}
by comparing the $dW$ terms of $dp^{t}$ in (\ref{eq:3.5}) and (\ref{eq:3.9}),
we get that
\begin{equation}
q_{s}^{t}=\left[-2M_{s}Y_{s}^{*}+Y_{t}^{*}N_{s}\right]u_{s}^{*},\mbox{ \ensuremath{}s\ensuremath{\in\left[t,T\right]}}
\end{equation}
we again hope to find a possible linear feedback $u^{*}$ and as before
try by setting 
\begin{equation}
0=\Lambda_{s}^{s}=p_{s}^{s}\theta_{s}+q_{s}^{s},\mbox{ \ensuremath{}s\ensuremath{\in\left[0,T\right]}}
\end{equation}
which leads to the equation
\begin{equation}
\left[\left(-M_{s}+N_{s}-\Gamma_{s}\right)Y_{s}^{*}\theta_{s}+\left(N_{s}-2M_{s}\right)u_{s}^{*}\right]Y_{s}^{*}=0
\end{equation}
from which we get
\begin{equation}
u_{s}^{*}=\alpha_{s}\theta_{s}Y_{s}^{*}\label{eq:5.13}
\end{equation}
where 
\begin{equation}
\alpha_{s}=\begin{cases}
\frac{-M_{s}+N_{s}-\Gamma_{s}}{2M_{s}-N_{s}}, & \forall s\in\left[0,T\right)\\
\underset{s\left\uparrow T\right.}{\lim}\frac{-M_{s}+N_{s}-\Gamma_{s}}{2M_{s}-N_{s}}, & s=T
\end{cases}\label{eq:3.13}
\end{equation}
based on the assumption that $2M_{s}-N_{s}\neq0,\forall s\in\left[0,T\right)$
and $\underset{s\left\uparrow T\right.}{\lim}\frac{-M_{s}+N_{s}-\Gamma_{s}}{2M_{s}-N_{s}}$
exists. \\

\noindent By comparing the $ds$ terms of $dp^{t}$ in (\ref{eq:3.5})
and (\ref{eq:3.9}), we get that
\begin{eqnarray*}
 &  & -\widehat{r_{s}}\left[-M_{s}\left(Y_{s}^{*}\right)^{2}+N_{s}Y_{t}^{*}Y_{s}^{*}-\Gamma_{s}\left(Y_{t}^{*}\right)^{2}\right]\\
 & = & \left[-\left(Y_{s}^{*}\right)^{2}M_{s}^{\mathbf{\prime}}-M_{s}\left(u_{s}^{*}\right)^{T}u_{s}^{*}-2M_{s}Y_{s}^{*}\left(\widehat{r_{s}}Y_{s}^{*}+\left(u_{s}^{*}\right)^{T}\theta_{s}\right)\right]\\
 &  & +\left[Y_{t}^{*}Y_{s}^{*}N_{s}^{\prime}+Y_{t}^{*}N_{s}\left(\widehat{r_{s}}Y_{s}^{*}+\left(u_{s}^{*}\right)^{T}\theta_{s}\right)\right]-\Gamma_{s}^{\prime}\left(Y_{t}^{*}\right)^{2}\\
\\
\Leftrightarrow &  & -\widehat{r_{s}}\left[-M_{s}\left(Y_{s}^{*}\right)^{2}+N_{s}Y_{t}^{*}Y_{s}^{*}-\Gamma_{s}\left(Y_{t}^{*}\right)^{2}\right]\\
 & = & \left[-\left(Y_{s}^{*}\right)^{2}M_{s}^{\mathbf{\prime}}-M_{s}\alpha_{s}^{2}\left|\theta_{s}\right|^{2}\left(Y_{s}^{*}\right)^{2}-2M_{s}Y_{s}^{*}\left(\widehat{r_{s}}Y_{s}^{*}+\alpha_{s}\left|\theta_{s}\right|^{2}Y_{s}^{*}\right)\right]\\
 &  & +\left[Y_{t}^{*}Y_{s}^{*}N_{s}^{\prime}+Y_{t}^{*}N_{s}\left(\widehat{r_{s}}Y_{s}^{*}+\alpha_{s}\left|\theta_{s}\right|^{2}Y_{s}^{*}\right)\right]-\Gamma_{s}^{\prime}\left(Y_{t}^{*}\right)^{2}\\
\\
\Leftrightarrow &  & \widehat{r_{s}}M_{s}\left(Y_{s}^{*}\right)^{2}-\widehat{r_{s}}N_{s}Y_{t}^{*}Y_{s}^{*}+\widehat{r_{s}}\Gamma_{s}\left(Y_{t}^{*}\right)^{2}\\
 & = & \left[-M_{s}^{\mathbf{\prime}}-M_{s}\alpha_{s}^{2}\left|\theta_{s}\right|^{2}-2M_{s}\left(\widehat{r_{s}}+\alpha_{s}\left|\theta_{s}\right|^{2}\right)\right]\left(Y_{s}^{*}\right)^{2}\\
 &  & +\left[N_{s}^{\prime}+N_{s}\left(\widehat{r_{s}}+\alpha_{s}\left|\theta_{s}\right|^{2}\right)\right]Y_{t}^{*}Y_{s}^{*}-\Gamma_{s}^{\prime}\left(Y_{t}^{*}\right)^{2}
\end{eqnarray*}
after rearrangement we get
\begin{eqnarray}
0 & = & \left(\Gamma_{s}^{\prime}+\widehat{r_{s}}\Gamma_{s}\right)\left(Y_{t}^{*}\right)^{2}-\left(N_{s}^{\prime}+2\widehat{r_{s}}N_{s}+\alpha_{s}\left|\theta_{s}\right|^{2}N_{s}\right)Y_{t}^{*}Y_{s}^{*}\nonumber \\
 &  & +\left[M_{s}^{\mathbf{\prime}}+3\widehat{r_{s}}M_{s}+2\alpha_{s}\left|\theta_{s}\right|^{2}M_{s}+\alpha_{s}^{2}\left|\theta_{s}\right|^{2}M_{s}\right]\left(Y_{s}^{*}\right)^{2}
\end{eqnarray}
which leads to the following system of ODEs 
\begin{align}
 & \begin{cases}
M_{s}^{\mathbf{\prime}}+\left(3\widehat{r_{s}}+2\alpha_{s}\left|\theta_{s}\right|^{2}+\alpha_{s}^{2}\left|\theta_{s}\right|^{2}\right)M_{s}=0, & s\in\left[0,T\right]\\
M_{T}=1
\end{cases}\label{eq:3.15}\\
 & \begin{cases}
N_{s}^{\prime}+\left(2\widehat{r_{s}}+\alpha_{s}\left|\theta_{s}\right|^{2}\right)N_{s}=0, & s\in\left[0,T\right]\\
N_{T}=2
\end{cases}\label{eq:3.16}\\
 & \begin{cases}
\Gamma_{s}^{\prime}+\widehat{r_{s}}\Gamma_{s}=0, & s\in\left[0,T\right]\\
\Gamma_{T}=1
\end{cases}\label{eq:3.17}
\end{align}
the solution to equation (\ref{eq:3.17}) is $\Gamma_{s}=e^{\int_{s}^{T}\widehat{r_{u}}du}$,
which makes the unsettled system contains only (\ref{eq:3.15}) and
(\ref{eq:3.16}) as follows 
\begin{align}
 & \begin{cases}
M_{s}^{\mathbf{\prime}}+\left(3\widehat{r_{s}}+2\alpha_{s}\left|\theta_{s}\right|^{2}+\alpha_{s}^{2}\left|\theta_{s}\right|^{2}\right)M_{s}=0, & s\in\left[0,T\right]\\
M_{T}=1\\
N_{s}^{\prime}+\left(2\widehat{r_{s}}+\alpha_{s}\left|\theta_{s}\right|^{2}\right)N_{s}=0, & s\in\left[0,T\right]\\
N_{T}=2
\end{cases}\label{eq:3.18}
\end{align}

\begin{rem}
\noindent If we need $\underset{s\left\uparrow T\right.}{\lim}\frac{-M_{s}+N_{s}-\Gamma_{s}}{2M_{s}-N_{s}}\neq0$,
we also need to assume that $\underset{s\left\uparrow T\right.}{\lim}\frac{\widehat{r_{s}}}{\left|\theta_{s}\right|^{2}}\leq\frac{9}{16}$
which is a necessary condition for the existence of none zero $\underset{s\left\uparrow T\right.}{\lim}\frac{-M_{s}+N_{s}-\Gamma_{s}}{2M_{s}-N_{s}}$
where $\alpha,M,N,\Gamma$ are those defined in (\ref{eq:3.13}),
(\ref{eq:3.15}), (\ref{eq:3.16}) and (\ref{eq:3.17}).\end{rem}
\begin{proof}
Suppose $\underset{s\left\uparrow T\right.}{\lim}\frac{-M_{s}+N_{s}-\Gamma_{s}}{2M_{s}-N_{s}}$
exists and does not equal to zero. Let $\bar{\theta}=\underset{s\left\uparrow T\right.}{\lim}\theta_{s}$,
$\bar{r}=\underset{s\left\uparrow T\right.}{\lim}$$\widehat{r_{s}}$
and $\bar{\alpha}=\underset{s\left\uparrow T\right.}{\lim}\frac{-M_{s}+N_{s}-\Gamma_{s}}{2M_{s}-N_{s}}$
which also implies $\bar{\alpha}=\underset{s\left\uparrow T\right.}{\lim}\alpha_{s}$
by the definition of $\alpha$ in (\ref{eq:3.13}). Since we have
\begin{eqnarray}
\underset{s\left\uparrow T\right.}{\lim}\frac{-M_{s}^{\prime}+N_{s}^{\prime}-\Gamma_{s}^{\prime}}{2M_{s}^{\prime}-N_{s}^{\prime}} & = & \underset{s\left\uparrow T\right.}{\lim}\frac{\left(3\widehat{r_{s}}+2\alpha_{s}\left|\theta_{s}\right|^{2}+\alpha_{s}^{2}\left|\theta_{s}\right|^{2}\right)M_{s}-\left(2\widehat{r_{s}}+\alpha_{s}\left|\theta_{s}\right|^{2}\right)N_{s}+\widehat{r_{s}}\Gamma_{s}}{-2\left(3\widehat{r_{s}}+2\alpha_{s}\left|\theta_{s}\right|^{2}+\alpha_{s}^{2}\left|\theta_{s}\right|^{2}\right)M_{s}+\left(2\widehat{r_{s}}+\alpha_{s}\left|\theta_{s}\right|^{2}\right)N_{s}}\nonumber \\
 & = & \underset{s\left\uparrow T\right.}{\lim}\frac{\left(3\widehat{r_{s}}+2\alpha_{s}\left|\theta_{s}\right|^{2}+\alpha_{s}^{2}\left|\theta_{s}\right|^{2}\right)-2\left(2\widehat{r_{s}}+\alpha_{s}\left|\theta_{s}\right|^{2}\right)+\widehat{r_{s}}}{-2\left(3\widehat{r_{s}}+2\alpha_{s}\left|\theta_{s}\right|^{2}+\alpha_{s}^{2}\left|\theta_{s}\right|^{2}\right)+2\left(2\widehat{r_{s}}+\alpha_{s}\left|\theta_{s}\right|^{2}\right)}\nonumber \\
 & = & \underset{s\left\uparrow T\right.}{\lim}\frac{\alpha_{s}^{2}\left|\theta_{s}\right|^{2}}{-2\alpha_{s}^{2}\left|\theta_{s}\right|^{2}-2\alpha_{s}\left|\theta_{s}\right|^{2}-2\widehat{r_{s}}}\nonumber \\
 & = & \frac{\bar{\alpha}^{2}\left|\bar{\theta}\right|^{2}}{-2\bar{\alpha}^{2}\left|\bar{\theta}\right|^{2}-2\bar{\alpha}\left|\bar{\theta}\right|^{2}-2\bar{r}}\nonumber \\
 & = & \frac{\bar{\alpha}^{2}}{-2\bar{\alpha}^{2}-2\bar{\alpha}-2\eta}
\end{eqnarray}
where $\eta=\frac{\bar{r}}{\left|\bar{\theta}\right|^{2}}$, thus
we must have
\[
\bar{\alpha}=\underset{s\left\uparrow T\right.}{\lim}\frac{-M_{s}+N_{s}-\Gamma_{s}}{2M_{s}-N_{s}}=\underset{s\left\uparrow T\right.}{\lim}\frac{-M_{s}^{\prime}+N_{s}^{\prime}-\Gamma_{s}^{\prime}}{2M_{s}^{\prime}-N_{s}^{\prime}}=\frac{\bar{\alpha}^{2}}{-2\bar{\alpha}^{2}-2\bar{\alpha}-2\eta}
\]
which after the rearrangement is 
\begin{equation}
\left(2\bar{\alpha}^{2}+3\bar{\alpha}+2\eta\right)\bar{\alpha}=0
\end{equation}
which implies that $\eta\leq\frac{9}{16}$ is a necessary condition
for the existence of none zero $\bar{\alpha}$
\end{proof}

\subsection{Conditions for obtaining an equilibrium for our problem}
\begin{thm}
\label{thm:5.4}If the system of ODEs (\ref{eq:3.18}) admits a solution
$\left(M,N\right)$ s.t. the corresponding \textup{$\alpha$} defined
in (\ref{eq:3.13}) satisfies that $\widehat{r_{t}}+\alpha_{t}\left|\theta_{t}\right|^{2}=0,\forall t\in\left[0,T\right)$,
then $u^{*}$ is an equilibrium control for the family of problems
in (\ref{eq:3.2})\end{thm}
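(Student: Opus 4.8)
The plan is to verify the three sufficient conditions collected in (\ref{eq:3.7}) of Proposition \ref{prop:3.2}, since that proposition then guarantees that the feedback control $u^*$ given by (\ref{eq:5.13}) is an equilibrium for the family of problems (\ref{eq:3.2}). The hypothesis $\widehat{r_t}+\alpha_t\left|\theta_t\right|^2=0$ is tailored precisely so that the third (and structurally most delicate) condition holds automatically. Indeed, since $u_t^*=\alpha_t\theta_t Y_t^*$ by (\ref{eq:5.13}), we have $(u_t^*)^T\theta_t=\alpha_t\left|\theta_t\right|^2 Y_t^*$, whence $\widehat{r_t}Y_t^*+(u_t^*)^T\theta_t=\left(\widehat{r_t}+\alpha_t\left|\theta_t\right|^2\right)Y_t^*=0$, which is exactly the last line of (\ref{eq:3.7}).

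Substituting this back into the dynamics of $Y^*$ in (\ref{eq:3.2}) annihilates the drift, leaving $dY_s^*=\alpha_s Y_s^*\theta_s^T dW_s$, so $Y^*$ is a local martingale of stochastic-exponential form $Y_s^*=y_0\,\mathcal{E}\!\left(\int_0^s\alpha_u\theta_u^T dW_u\right)$. Next I would establish integrability. Because $\theta$ is deterministic and bounded in this section and $\alpha$ is bounded (which I would argue from boundedness of $M,N,\Gamma$ with $2M_s-N_s$ bounded away from zero, equivalently from the hypothesis $\alpha_s=-\widehat{r_s}/\left|\theta_s\right|^2$ together with $\theta$ bounded below), the exponent $\int_0^T\alpha_u^2\left|\theta_u\right|^2 du$ is finite; a Novikov and $L^p$-maximal-inequality argument paralleling (\ref{eq:4.52}) then gives $Y^*\in L_{\mathcal{F}}^2\left(\Omega;C\left(0,T;\mathbb{R}\right)\right)$ with all moments finite, and hence $u^*=\alpha\theta Y^*\in L_{\mathcal{F}}^2\left(0,T;\mathbb{R}^d\right)$, which is the first condition in (\ref{eq:3.7}). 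In particular $Y^*$ is a true martingale, so $\mathbb{E}_t\left[Y_s^*\right]=Y_t^*$.

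For the middle condition I would compute $\Lambda_s^t=p_s^t\theta_s+q_s^t$ directly from the Ansatz (\ref{eq:3.8}) and the expression for $q_s^t$, obtaining a quadratic form in $Y_s^*,Y_t^*$ with deterministic coefficients, schematically $\theta_s\left\{-M_s(1+2\alpha_s)(Y_s^*)^2+N_s(1+\alpha_s)Y_t^*Y_s^*-\Gamma_s(Y_t^*)^2\right\}$. Taking $\mathbb{E}_t$ and using $\mathbb{E}_t\left[Y_s^*\right]=Y_t^*$ together with $\mathbb{E}_t\left[(Y_s^*)^2\right]=(Y_t^*)^2\exp\!\left(\int_t^s\alpha_u^2\left|\theta_u\right|^2 du\right)$ (from It\^o applied to $(Y^*)^2$ and Gronwall), the bound $\mathbb{E}_t\int_t^T\left|\Lambda_s^t\right|ds<\infty$ follows from boundedness of the deterministic coefficients and these moment estimates. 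Letting $s\downarrow t$, the exponential tends to $1$, so $\lim_{s\downarrow t}\mathbb{E}_t\left[\Lambda_s^t\right]$ collapses to $\Lambda_t^t$, which vanishes precisely because $\alpha_t$ was defined in (\ref{eq:3.13}) to solve $\Lambda_t^t=0$. With all three conditions in (\ref{eq:3.7}) verified, Proposition \ref{prop:3.2} delivers the conclusion.

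The main obstacle I anticipate is the integrability bookkeeping rather than any conceptual difficulty: one must genuinely justify that the $\alpha$ produced by solving (\ref{eq:3.18}) is bounded (so the stochastic exponential has the required moments and $u^*\in L^2$) and control $\mathbb{E}_t\left[(Y_s^*)^2\right]$ uniformly on $[t,T]$. Once $Y^*$ is known to be a square-integrable martingale, the remaining limit computation is essentially automatic, since the defining relation for $\alpha$ forces $\Lambda_t^t=0$.
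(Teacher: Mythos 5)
Your proposal is correct and follows essentially the same route as the paper's own proof: it verifies the three conditions of (\ref{eq:3.7}) in Proposition \ref{prop:3.2} by noting that the hypothesis $\widehat{r_t}+\alpha_t\left|\theta_t\right|^{2}=0$ yields the drift condition, that boundedness of the deterministic $\alpha,\theta$ and the explicit exponential form of $Y^{*}$ give $Y^{*}\in L_{\mathcal{F}}^{2}\left(\Omega;C\left(0,T;\mathbb{R}\right)\right)$ and $u^{*}\in L_{\mathcal{F}}^{2}\left(0,T;\mathbb{R}^{d}\right)$, and that the quadratic-form expression for $\Lambda_{s}^{t}$ together with the defining relation (\ref{eq:3.13}) for $\alpha$ forces $\lim_{s\downarrow t}\mathbb{E}_{t}\left[\Lambda_{s}^{t}\right]=0$. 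The only cosmetic difference is that you evaluate $\mathbb{E}_{t}\left[Y_{s}^{*}\right]$ and $\mathbb{E}_{t}\left[\left(Y_{s}^{*}\right)^{2}\right]$ explicitly from the martingale structure, whereas the paper passes the limit inside the conditional expectation by dominated convergence; both justifications are valid.
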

\begin{proof}
Suppose $\left(M,N\right)$ is a solution to (\ref{eq:3.18}) s.t.
$\widehat{r_{t}}+\alpha_{t}\left|\theta_{t}\right|^{2}=0,\forall t\in\left[0,T\right)$.
Since the deterministic $M,N,\Gamma$ are continuous and thus are
bounded on $\left[0,T\right]$, we have that $\alpha$ is also deterministic
and bounded on $\left[0,T\right]$ according to (\ref{eq:3.13}).
We have in (\ref{eq:5.13}) for $s\in\left[0,T\right]$ that 
\begin{equation}
u_{s}^{*}=\alpha_{s}\theta_{s}Y_{s}^{*}
\end{equation}
and thus we have
\begin{align}
dY_{s}^{*} & =\left(\widehat{r_{s}}Y_{s}^{*}+\left(u_{s}^{*}\right)^{T}\theta_{s}\right)ds+\left(u_{s}^{*}\right)^{T}dW_{s}\nonumber \\
 & =Y_{s}^{*}\left[\left(\widehat{r_{s}}+\alpha_{s}\left|\theta_{s}\right|^{2}\right)ds+\alpha_{s}\left(\theta_{s}\right)^{T}dW_{s}\right]
\end{align}
which leads to
\begin{equation}
Y_{t}^{*}=y_{0}e^{\int_{0}^{t}\widehat{r_{s}}+\alpha_{s}\left|\theta_{s}\right|^{2}-\frac{1}{2}\alpha_{s}^{2}\left|\theta_{s}\right|^{2}ds+\int_{0}^{t}\alpha_{s}\left(\theta_{s}\right)^{T}dW_{s}}
\end{equation}
thus

\begin{eqnarray}
\mathbb{E}\left[\underset{t\in\left[0,T\right]}{\sup}\left(Y_{t}^{*}\right)^{2}\right] & = & \mathbb{E}\left[\underset{t\in\left[0,T\right]}{\sup}\left[\left(y_{0}e^{\int_{0}^{t}\widehat{r_{s}}+\alpha_{s}\left|\theta_{s}\right|^{2}ds}\right)^{2}\left(e^{\int_{0}^{t}-\frac{1}{2}\alpha_{s}^{2}\left|\theta_{s}\right|^{2}ds+\int_{0}^{t}\alpha_{s}\left(\theta_{s}\right)^{T}dW_{s}}\right)^{2}\right]\right]\nonumber \\
 & \leq & \left[\underset{t\in\left[0,T\right]}{\sup}\left(y_{0}e^{\int_{0}^{t}\widehat{r_{s}}+\alpha_{s}\left|\theta_{s}\right|^{2}ds}\right)^{2}\right]\mathbb{E}\left[\underset{t\in\left[0,T\right]}{\sup}\left(e^{\int_{0}^{t}-\frac{1}{2}\alpha_{s}^{2}\left|\theta_{s}\right|^{2}ds+\int_{0}^{t}\alpha_{s}\left(\theta_{s}\right)^{T}dW_{s}}\right)^{2}\right]\nonumber \\
 & \leq & \left[\underset{t\in\left[0,T\right]}{\sup}\left(y_{0}e^{\int_{0}^{t}\widehat{r_{s}}+\alpha_{s}\left|\theta_{s}\right|^{2}ds}\right)^{2}\right]4\mathbb{E}\left[\left(e^{\int_{0}^{T}-\frac{1}{2}\alpha_{s}^{2}\left|\theta_{s}\right|^{2}ds+\int_{0}^{T}\alpha_{s}\left(\theta_{s}\right)^{T}dW_{s}}\right)^{2}\right]\nonumber \\
 &  & \mbox{by \ensuremath{L^{p}}Maximal Inequality}\nonumber \\
 & = & 4e^{\int_{0}^{T}\alpha_{s}^{2}\left|\theta_{s}\right|^{2}ds}\left[\underset{t\in\left[0,T\right]}{\sup}\left(y_{0}e^{\int_{0}^{t}\widehat{r_{s}}+\alpha_{s}\left|\theta_{s}\right|^{2}ds}\right)^{2}\right]\mathbb{E}\left[e^{\int_{0}^{T}-2\alpha_{s}^{2}\left|\theta_{s}\right|^{2}ds+\int_{0}^{T}2\alpha_{s}\left(\theta_{s}\right)^{T}dW_{s}}\right]\nonumber \\
 & = & 4e^{\int_{0}^{T}\alpha_{s}^{2}\left|\theta_{s}\right|^{2}ds}\underset{t\in\left[0,T\right]}{\sup}\left(y_{0}e^{\int_{0}^{t}\widehat{r_{s}}+\alpha_{s}\left|\theta_{s}\right|^{2}ds}\right)^{2}\nonumber \\
 & < & \infty
\end{eqnarray}
as $\alpha,\theta,\widehat{r}$ are bounded, which implies that 
\begin{equation}
Y^{*}\in L_{\mathcal{F}}^{2}\left(\Omega;C\left(0,T;\mathbb{R}\right)\right)\label{eq:3.23}
\end{equation}
and thus we have 
\begin{eqnarray}
\mathbb{E}\left[\int_{0}^{T}\left|u_{s}^{*}\right|^{2}ds\right] & = & \mathbb{E}\left[\int_{0}^{T}\alpha_{s}^{2}\left|\theta_{s}\right|^{2}\left(Y_{s}^{*}\right)^{2}ds\right]\nonumber \\
 & \leq & \underset{s\in\left[0,T\right]}{\sup}\left(\alpha_{s}^{2}\left|\theta_{s}\right|^{2}\right)\int_{0}^{T}\mathbb{E}\left[\underset{u\in\left[0,T\right]}{\sup}\left(Y_{u}^{*}\right)^{2}\right]ds\nonumber \\
 & = & \underset{s\in\left[0,T\right]}{\sup}\left(\alpha_{s}^{2}\left|\theta_{s}\right|^{2}\right)\mathbb{E}\left[\underset{u\in\left[0,T\right]}{\sup}\left(Y_{u}^{*}\right)^{2}\right]T\nonumber \\
 & < & \infty
\end{eqnarray}
which implies that
\begin{equation}
u^{*}\in L_{\mathcal{F}}^{2}\left(0,T;\mathbb{R}^{d}\right)\label{eq:3.25}
\end{equation}
also we have that for any $t\in\left[0,T\right)$ 
\begin{eqnarray}
\mbox{ }\Lambda_{s}^{t} & = & p_{s}^{t}\theta_{s}+q_{s}^{t}\nonumber \\
 & = & \left[-M_{s}\left(Y_{s}^{*}\right)^{2}+N_{s}Y_{t}^{*}Y_{s}^{*}-\Gamma_{s}\left(Y_{t}^{*}\right)^{2}\right]\theta_{s}+\left(Y_{t}^{*}N_{s}-2M_{s}Y_{s}^{*}\right)u_{s}^{*}\nonumber \\
 & = & \left[-M_{s}\left(Y_{s}^{*}\right)^{2}+N_{s}Y_{t}^{*}Y_{s}^{*}-\Gamma_{s}\left(Y_{t}^{*}\right)^{2}+\left(Y_{t}^{*}N_{s}-2M_{s}Y_{s}^{*}\right)\alpha_{s}Y_{s}^{*}\right]\theta_{s}\nonumber \\
 & = & \left[-\left(1+2\alpha_{s}\right)M_{s}\left(Y_{s}^{*}\right)^{2}+\left(1+\alpha_{s}\right)N_{s}Y_{t}^{*}Y_{s}^{*}-\Gamma_{s}\left(Y_{t}^{*}\right)^{2}\right]\theta_{s}
\end{eqnarray}
Since $\alpha,M,N,\Gamma,\theta$ are bounded, it is clearly by (\ref{eq:3.23})
that 
\begin{equation}
\mathbb{E}_{t}\int_{t}^{T}\left|\Lambda_{s}^{t}\right|ds<\infty\label{eq:3.27}
\end{equation}
and
\begin{equation}
\underset{s\left\downarrow t\right.}{\lim}\mathbb{E}_{t}\left[\Lambda_{s}^{t}\right]=\underset{s\left\downarrow t\right.}{\lim}\theta_{s}\mathbb{E}_{t}\left[-\left(1+2\alpha_{s}\right)M_{s}\left(Y_{s}^{*}\right)^{2}+\left(1+\alpha_{s}\right)N_{s}Y_{t}^{*}Y_{s}^{*}-\Gamma_{s}\left(Y_{t}^{*}\right)^{2}\right]
\end{equation}
since we have
\begin{align}
 & \underset{s\left\downarrow t\right.}{\lim}\mathbb{E}_{t}\left[-\left(1+2\alpha_{s}\right)M_{s}\left(Y_{s}^{*}\right)^{2}+\left(1+\alpha_{s}\right)N_{s}Y_{t}^{*}Y_{s}^{*}-\Gamma_{s}\left(Y_{t}^{*}\right)^{2}\right]\nonumber \\
= & \underset{s\left\downarrow t\right.}{\lim}\left\{ -\left(1+2\alpha_{s}\right)M_{s}\mathbb{E}_{t}\left[\left(Y_{s}^{*}\right)^{2}\right]+\left(1+\alpha_{s}\right)N_{s}Y_{t}^{*}\mathbb{E}_{t}\left[Y_{s}^{*}\right]-\Gamma_{s}\left(Y_{t}^{*}\right)^{2}\right\} \nonumber \\
\overset{(\ref{eq:3.23})}{=} & -\left(1+2\alpha_{t}\right)M_{t}\mathbb{E}_{t}\left[\underset{s\left\downarrow t\right.}{\lim}\left(Y_{s}^{*}\right)^{2}\right]+\left(1+\alpha_{t}\right)N_{t}Y_{t}^{*}\mathbb{E}_{t}\left[\underset{s\left\downarrow t\right.}{\lim}Y_{s}^{*}\right]-\Gamma_{t}\left(Y_{t}^{*}\right)^{2}\nonumber \\
 & \mbox{by Dominated Convergence}\nonumber \\
= & -\left(1+2\alpha_{t}\right)M_{t}\left(Y_{t}^{*}\right)^{2}+\left(1+\alpha_{t}\right)N_{t}\left(Y_{t}^{*}\right)^{2}-\Gamma_{t}\left(Y_{t}^{*}\right)^{2}\nonumber \\
= & \left[-M_{t}+N_{t}-\Gamma_{t}-\left(2M_{t}-N_{t}\right)\alpha_{t}\right]\left(Y_{t}^{*}\right)^{2}\nonumber \\
= & \left[\frac{-M_{t}+N_{t}-\Gamma_{t}}{2M_{t}-N_{t}}-\alpha_{t}\right]\left(2M_{t}-N_{t}\right)\left(Y_{t}^{*}\right)^{2}\nonumber \\
= & 0
\end{align}
and also $\theta$ is bounded, thus we deduce that 
\begin{equation}
\underset{s\left\downarrow t\right.}{\lim}\mathbb{E}_{t}\left[\Lambda_{s}^{t}\right]=0\label{eq:3.30}
\end{equation}
we also have $\widehat{r_{t}}+\alpha_{t}\left|\theta_{t}\right|^{2}=0,\forall t\in\left[0,T\right)$
which implies that
\begin{eqnarray}
 & \left(\widehat{r_{t}}+\alpha_{t}\left|\theta_{t}\right|^{2}\right)Y_{t}^{*} & =0,\forall t\in\left[0,T\right)\nonumber \\
\Rightarrow & \widehat{r_{t}}Y_{t}^{*}+\alpha_{t}\theta_{t}^{T}\theta_{t}Y_{t}^{*} & =0,\forall t\in\left[0,T\right)\nonumber \\
\Rightarrow & \widehat{r_{t}}Y_{t}^{*}+\left(u_{t}^{*}\right)^{T}\theta_{t} & =0,\forall t\in\left[0,T\right)
\end{eqnarray}
which means $\left(u^{*},Y^{*}\right)$ satisfies 
\begin{equation}
\widehat{r_{t}}Y_{t}^{*}+\left(u_{t}^{*}\right)^{T}\theta_{t}=0,\forall t\in\left[0,T\right)\label{eq:3.31}
\end{equation}
Then (\ref{eq:3.25}), (\ref{eq:3.27}), (\ref{eq:3.30}) and (\ref{eq:3.31})
are exactly the required conditions in (\ref{eq:3.7}) and we deduce
that $u^{*}$ is an equilibrium control for the family of problems
in (\ref{eq:3.2}) by proposition \ref{prop:3.2}.
\end{proof}

\subsection{A particular solution to our problem}

Although we had not managed to proved the general conditions for the
existence of the solutions for (\ref{eq:3.18}), we found two particular
solutions to (\ref{eq:3.18}) as follows.\\

\noindent We have got by (\ref{eq:3.17}) that $\Gamma_{s}=e^{\int_{s}^{T}\widehat{r_{u}}du}$.
Here we rearrange (\ref{eq:3.13}) to get
\begin{equation}
\left(2M_{s}-N_{s}\right)\alpha_{s}=-M_{s}+N_{s}-\Gamma_{s}
\end{equation}
which is equivalent to
\begin{equation}
-\left(2\alpha_{s}+1\right)M_{s}+\left(\alpha_{s}+1\right)N_{s}-\Gamma_{s}=0\label{eq:5.34}
\end{equation}
We could set two constant solutions for $\alpha$, i.e. $ $$\alpha_{s}=-\frac{1}{2}$
or $-1$ for $s\in\left[0,T\right]$, to solve the system of ODEs
(\ref{eq:3.18}) .\\

\noindent Firstly we set $\alpha_{s}=-\frac{1}{2}$ for $s\in\left[0,T\right]$,
then by (\ref{eq:5.34})we get 
\begin{eqnarray}
N_{s} & = & 2\Gamma_{s}\nonumber \\
 & = & 2e^{\int_{s}^{T}\widehat{r_{u}}du}\label{eq:5.35}
\end{eqnarray}
if we plug (\ref{eq:5.35}) and $\alpha_{s}=-\frac{1}{2}$ back into
(\ref{eq:3.16}), we get
\begin{equation}
\begin{cases}
\left(\widehat{r_{s}}-\frac{1}{2}\left|\theta_{s}\right|^{2}\right)N_{s}=0, & s\in\left[0,T\right]\\
N_{T}=2
\end{cases}
\end{equation}
then if we set $\widehat{r}=\frac{1}{2}\left|\theta\right|^{2}$,
i.e. we set our required return $ $$\mu=r-\frac{1}{2}\left|\theta\right|^{2}$,
we have that $N_{s}=2e^{\int_{s}^{T}\widehat{r_{u}}du}$ is a solution
to (\ref{eq:3.16}). Now we plug $\widehat{r}=\frac{1}{2}\left|\theta\right|^{2}$
and $\alpha_{s}=-\frac{1}{2}$ into (\ref{eq:3.15}) and get 
\begin{equation}
\begin{cases}
M_{s}^{\mathbf{\prime}}+\left(\widehat{r_{s}}+\frac{1}{4}\left|\theta_{s}\right|^{2}\right)M_{s}=0, & s\in\left[0,T\right]\\
M_{T}=1
\end{cases}
\end{equation}
by solving which we get $M_{s}=e^{\int_{s}^{T}\widehat{r_{u}}+\frac{1}{4}\left|\theta_{u}\right|^{2}du}$,
thus we get a solution for the system of ODEs (\ref{eq:3.18}) as
follows 
\begin{equation}
\begin{cases}
M_{s}=e^{\int_{s}^{T}\frac{3}{2}\widehat{r_{u}}du}, & s\in\left[0,T\right]\\
N_{s}=2e^{\int_{s}^{T}\widehat{r_{u}}du}, & s\in\left[0,T\right]
\end{cases}\label{eq:5.39}
\end{equation}
in this case, by (\ref{eq:5.13}) we get 
\begin{equation}
u_{s}^{*}=-\frac{1}{2}\theta_{s}Y_{s}^{*}\label{eq:5.41}
\end{equation}
and we verify that 
\begin{eqnarray}
 & \widehat{r_{t}}Y_{t}^{*}+\left(u_{t}^{*}\right)^{T}\theta_{t} & =\left(\widehat{r_{t}}-\frac{1}{2}\left|\theta_{t}\right|^{2}\right)Y_{t}^{*}\nonumber \\
 &  & =0,\forall t\in\left[0,T\right)\label{eq:5.40}
\end{eqnarray}
(\ref{eq:5.39}) and (\ref{eq:5.40}) deduce that $u^{*}$ in (\ref{eq:5.41})
is an equilibrium control for the family of problems in (\ref{eq:3.2})
by theorem \ref{thm:5.4}.\\

\noindent If we plug $u_{s}^{*}=e^{\int_{s}^{T}\mu_{u}du}\sigma_{s}^{T}\pi_{s}^{*}$
and $Y_{s}^{*}=X_{s}^{*}e^{\int_{s}^{T}\mu_{u}du}$$ $ into (\ref{eq:5.41}),
then we get
\begin{equation}
e^{\int_{s}^{T}\mu_{u}du}\sigma_{s}^{T}\pi_{s}^{*}=-\frac{1}{2}\theta_{s}X_{s}^{*}e^{\int_{s}^{T}\mu_{u}du}
\end{equation}
and thus get 
\begin{equation}
\pi_{s}^{*}=-\frac{1}{2}\left(\sigma_{s}^{-1}\right)^{T}\theta_{s}X_{s}^{*}
\end{equation}
This says that when we have our required return $\mu=r-\frac{1}{2}\left|\theta\right|^{2}$,
we could find an equilibrium control $\pi^{*}=-\frac{1}{2}\left(\sigma^{-1}\right)^{T}\theta X^{*}$
for the family of problems (\ref{eq:3.1}), although it sounds a bit
unusual as our required return $\mu$ is below the risk free rate
$r$. \\
\\
If we set $\alpha_{s}=-1$ then we could not deduce that the resulting
$u^{*}$ is equilibrium by our theorem, which is explained as follows.
We set $\alpha_{s}=-1$ for $s\in\left[0,T\right]$, then by (\ref{eq:5.34})we
get 
\begin{eqnarray}
M_{s} & = & \Gamma_{s}\nonumber \\
 & = & e^{\int_{s}^{T}\widehat{r_{u}}du}\label{eq:5.35-1}
\end{eqnarray}
if we plug (\ref{eq:5.35-1}) and $\alpha_{s}=-1$ back into (\ref{eq:3.15}),
we get
\begin{equation}
\begin{cases}
\left(2\widehat{r_{s}}-\left|\theta_{s}\right|^{2}\right)M_{s}=0, & s\in\left[0,T\right]\\
M_{T}=1
\end{cases}
\end{equation}
Again we set $\widehat{r}=\frac{1}{2}\left|\theta\right|^{2}$, i.e.
we set our required return $ $$\mu=r-\frac{1}{2}\left|\theta\right|^{2}$
which happens to be the same as that in the case $\alpha_{s}=-\frac{1}{2}$,
we have that $M_{s}=e^{\int_{s}^{T}\widehat{r_{u}}du}$ is a solution
to (\ref{eq:3.15}). Now we plug $\widehat{r}=\frac{1}{2}\left|\theta\right|^{2}$
and $\alpha_{s}=-1$ into (\ref{eq:3.16}) and get 
\begin{equation}
\begin{cases}
N_{s}^{\prime}=0, & s\in\left[0,T\right]\\
N_{T}=2
\end{cases}
\end{equation}
by solving which we get $N_{s}=2$, thus we get a solution for the
system of ODEs (\ref{eq:3.18}) as follows 
\begin{equation}
\begin{cases}
M=e^{\int_{s}^{T}\widehat{r_{u}}du}, & s\in\left[0,T\right]\\
N=2, & s\in\left[0,T\right]
\end{cases}\label{eq:5.39-1}
\end{equation}
in this case, by (\ref{eq:5.13}) we get 
\begin{equation}
u_{s}^{*}=-\theta_{s}Y_{s}^{*}\label{eq:5.41-1}
\end{equation}
and we verify that 
\begin{eqnarray}
 & \widehat{r_{t}}Y_{t}^{*}+\left(u_{t}^{*}\right)^{T}\theta_{t} & =\left(\widehat{r_{t}}-\left|\theta_{t}\right|^{2}\right)Y_{t}^{*}\nonumber \\
 &  & =-\frac{1}{2}\left|\theta_{t}\right|^{2}Y_{t}^{*}\nonumber \\
 &  & \neq0,\exists t\in\left[0,T\right)\label{eq:5.40-1}
\end{eqnarray}
 (\ref{eq:5.40-1}) implies that we cannot deduce $u^{*}$ in (\ref{eq:5.41-1})
is an equilibrium control for the family of problems in (\ref{eq:3.2})
by our theorem \ref{thm:5.4}. Since for our required return $\mu=r-\frac{1}{2}\left|\theta\right|^{2}$
we have already found an equilibrium control $\pi^{*}=-\frac{1}{2}\left(\sigma^{-1}\right)^{T}\theta X^{*}$
for the family of problems (\ref{eq:3.1}), thus the result here dose
not matter.

\newpage{}

\section{Utility function\textmd{ }$h\left(x\right)=\frac{x^{4}}{4}$ for
strong risk aversion }

Some investors have strong risk aversion and they would like to use
a kind of utility function that we use here. In this section, we want
to solve the portfolio selection problem when we choose to use $h(x)=\frac{x^{4}}{4}$
as the utility function. That means we want to solve the family of
following problems for any $t\in\left[0,T\right)$
\begin{align}
\underset{\pi}{\min} & \mbox{ \ensuremath{}\ensuremath{}\ensuremath{}\ensuremath{}\mbox{ \ensuremath{}\ensuremath{}\ensuremath{}\ensuremath{}}\mbox{ \ensuremath{}\ensuremath{}\ensuremath{}\ensuremath{}}}\mathbb{E}_{t}\left[\frac{\left(X_{T}-X_{t}e^{\int_{t}^{T}\mu_{s}ds}\right)^{4}}{4}\right]\nonumber \\
s.t. & \mbox{ \ensuremath{}\ensuremath{}\ensuremath{}\ensuremath{}\mbox{ \ensuremath{}\ensuremath{}\ensuremath{}\ensuremath{}}}\begin{cases}
dX_{s}=\left(r_{s}X_{s}+\pi_{s}^{T}\sigma_{s}\theta_{s}\right)ds+\pi_{s}^{T}\sigma_{s}dW_{s}\\
X_{t}=x_{t}\\
\pi\in\overline{U_{ad}^{\pi}}=\left\{ \pi\left|\right.\pi\in L_{\mathcal{F}}^{2}\left(0,T;\mathbb{R}^{d}\right)\mbox{and }\mathbb{E}_{t}\left[X_{T}\right]=X_{t}e^{\int_{t}^{T}\mu_{s}ds},\forall t\in\left[0,T\right)\right\} 
\end{cases}\label{eq:3.1-1}
\end{align}
where $\mu$ as usual is our required return process which is bounded
and deterministic.

\subsection{Transformation of our problem}

Here we again use same approach which is used above for $h(x)=-\frac{x^{3}}{3}$
to solve our problem. Again by letting for any $s\in\left[0,T\right]$
\begin{equation}
Y_{s}=X_{s}e^{\int_{s}^{T}\mu_{u}du}
\end{equation}
we have the family of following problems for any $t\in\left[0,T\right)$,
which is equivalent to the above family of problems (\ref{eq:3.1-1})
\begin{align}
\underset{u}{\min} & \mbox{ \ensuremath{}\ensuremath{}\ensuremath{}\ensuremath{}\mbox{ \ensuremath{}\ensuremath{}\ensuremath{}\ensuremath{}}\mbox{ \ensuremath{}\ensuremath{}\ensuremath{}\ensuremath{}}}\mathbb{E}_{t}\left[\frac{\left(Y_{T}-Y_{t}\right)^{4}}{4}\right]\nonumber \\
s.t. & \mbox{ \ensuremath{}\ensuremath{}\ensuremath{}\ensuremath{}\mbox{ \ensuremath{}\ensuremath{}\ensuremath{}\ensuremath{}}}\begin{cases}
dY_{s}=\left(\widehat{r_{s}}Y_{s}+u_{s}^{T}\theta_{s}\right)ds+u_{s}^{T}dW_{s}\\
Y_{t}=y_{t}\\
u\in\overline{U_{ad}^{u}}=\left\{ u\left|\right.u\in L_{\mathcal{F}}^{2}\left(0,T;\mathbb{R}^{d}\right)\mbox{and }\mathbb{E}_{t}\left[Y_{T}\right]=Y_{t},\forall t\in\left[0,T\right)\right\} 
\end{cases}\label{eq:3.2-1}
\end{align}
where $\widehat{r_{s}}=r_{s}-\mu_{s}$, $u_{s}=e^{\int_{s}^{T}\mu_{u}du}\sigma_{s}^{T}\pi_{s}$,
$y_{t}=x_{t}e^{\int_{t}^{T}\mu_{u}du}$. 

$\mathbb{E}_{t}\left[Y_{T}\right]=Y_{t},\forall t\in\left[0,T\right)$
again implies that $Y$must be a martingale which is an admissible
constraint on $u$ and thus as usual we could firstly consider the
family of following problems

\noindent 
\begin{align}
\underset{u}{\min} & \mbox{ \ensuremath{}\ensuremath{}\ensuremath{}\ensuremath{}\mbox{ \ensuremath{}\ensuremath{}\ensuremath{}\ensuremath{}}\mbox{ \ensuremath{}\ensuremath{}\ensuremath{}\ensuremath{}}}\mathbb{E}_{t}\left[\frac{\left(Y_{T}-Y_{t}\right)^{4}}{4}\right]\nonumber \\
s.t. & \mbox{ \ensuremath{}\ensuremath{}\ensuremath{}\ensuremath{}\mbox{ \ensuremath{}\ensuremath{}\ensuremath{}\ensuremath{}}}\begin{cases}
dY_{s}=\left(\widehat{r_{s}}Y_{s}+u_{s}^{T}\theta_{s}\right)ds+u_{s}^{T}dW_{s}\\
Y_{t}=y_{t}\\
u\in U_{ad}^{u}=\left\{ u\left|\right.u\in L_{\mathcal{F}}^{2}\left(0,T;\mathbb{R}^{d}\right)\right\} 
\end{cases}\label{eq:3.3-1}
\end{align}

\begin{prop}
\label{thm:3.1-1}If an equilibrium solution $\left(u^{*},Y^{*}\right)$
to the above family of problems (\ref{eq:3.3-1}) satisfies\textup{
$\widehat{r_{s}}Y_{s}^{*}+\left(u_{s}^{*}\right)^{T}\theta_{s}=0$
}for any $s\in\left[0,T\right)$, then $\left(u^{*},Y^{*}\right)$
is also equilibrium for the family of problems in (\ref{eq:3.2-1})\end{prop}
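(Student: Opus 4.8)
The plan is to mirror the argument used for Proposition \ref{thm:3.1} in the cubic case, since the objective functionals in (\ref{eq:3.3-1}) and (\ref{eq:3.2-1}) are identical and the only difference between the two families of problems is the admissible set: the unconstrained $U_{ad}^u$ versus the constrained $\overline{U_{ad}^u}$ carrying the martingale (moving-target) requirement.

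First I would use the hypothesis $\widehat{r_s}Y_s^*+(u_s^*)^T\theta_s=0$ to kill the drift in the dynamics of $Y^*$. Plugging this into the state equation in (\ref{eq:3.3-1}) gives $dY_s^*=(u_s^*)^T dW_s$ for all $s\in[0,T)$. Since $u^*\in L_{\mathcal{F}}^2(0,T;\mathbb{R}^d)$, the stochastic integral $\int_0^{\cdot}(u_s^*)^T dW_s$ is a genuine square-integrable martingale, hence $Y^*$ is a martingale. Consequently $\mathbb{E}_t[Y_T^*]=Y_t^*$ for every $t\in[0,T)$, which is precisely the constraint built into $\overline{U_{ad}^u}$; therefore $u^*\in\overline{U_{ad}^u}$.

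Next I would observe that $\overline{U_{ad}^u}\subseteq U_{ad}^u$, so that any spike-variation perturbation $u^{t,\varepsilon,v}$ keeping the control inside $\overline{U_{ad}^u}$ automatically keeps it inside $U_{ad}^u$. Because $(u^*,Y^*)$ is assumed to be an equilibrium for the family (\ref{eq:3.3-1}), the limit in Definition \ref{2.1} is nonnegative for every admissible $v$; restricting attention to the perturbations admissible for the constrained problem, the same inequality continues to hold. Combined with $u^*\in\overline{U_{ad}^u}$ established above, this verifies Definition \ref{2.1} for the family (\ref{eq:3.2-1}) and shows $(u^*,Y^*)$ is an equilibrium there as well.

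The argument is essentially routine and inherits its structure directly from the cubic case. The only point requiring genuine care is confirming that $Y^*$ is a \emph{true} martingale rather than merely a local one, so that $\mathbb{E}_t[Y_T^*]=Y_t^*$ actually holds; this is guaranteed here by the standing assumption $u^*\in L_{\mathcal{F}}^2(0,T;\mathbb{R}^d)$. I do not anticipate any substantive obstacle beyond this integrability check.
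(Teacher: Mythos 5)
Your proof is correct and takes essentially the same route as the paper's: the hypothesis kills the drift so that $dY_s^*=(u_s^*)^T dW_s$, hence $Y^*$ is a martingale, giving $\mathbb{E}_t\left[Y_T^*\right]=Y_t^*$ and thus $u^*\in\overline{U_{ad}^{u}}$, after which Definition \ref{2.1} transfers the equilibrium property because the constrained admissible set is contained in the unconstrained one and the objective functionals coincide. You actually spell out two details the paper leaves implicit (the $L_{\mathcal{F}}^{2}$ integrability ensuring $Y^*$ is a true, not merely local, martingale, and the nesting of admissible perturbations), which only strengthens the argument.
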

\begin{proof}
Suppose $\left(u^{*},Y^{*}\right)$ is an equilibrium solution to
(\ref{eq:3.3-1}) which satisfies $\widehat{r_{s}}Y_{s}^{*}+\left(u_{s}^{*}\right)^{T}\theta_{s}=0$
for any $s\in\left[0,T\right)$, then we have that $dY_{s}^{*}=u_{s}^{*T}dW_{s},\forall s\in\left[0,T\right)$$ $
which implies that $Y^{*}$ is a martingale. So we have $\mathbb{E}_{t}\left[Y_{T}^{*}\right]=Y_{t}^{*},\forall t\in\left[0,T\right)$,
which as before is used together with definition \ref{2.1} to deduce
that $\left(u^{*},Y^{*}\right)$ is also equilibrium for the family
of problems (\ref{eq:3.2-1})
\end{proof}
\noindent By the above proposition, we try to solve the family of
problems (\ref{eq:3.3-1}) instead, that is
\begin{align}
\underset{u}{\min} & \mbox{ \ensuremath{}\ensuremath{}\ensuremath{}\ensuremath{}\mbox{ \ensuremath{}\ensuremath{}\ensuremath{}\ensuremath{}}\mbox{ \ensuremath{}\ensuremath{}\ensuremath{}\ensuremath{}}}\mathbb{E}_{t}\left[h\left(Y_{T}-Y_{t}\right)\right]\nonumber \\
s.t. & \mbox{ \ensuremath{}\ensuremath{}\ensuremath{}\ensuremath{}\mbox{ \ensuremath{}\ensuremath{}\ensuremath{}\ensuremath{}}}\begin{cases}
dY_{s}=\left(\widehat{r_{s}}Y_{s}+u_{s}^{T}\theta_{s}\right)ds+u_{s}^{T}dW_{s}\\
Y_{t}=y_{t}\\
u\in U_{ad}^{u}=\left\{ u\left|\right.u\in L_{\mathcal{F}}^{2}\left(0,T;\mathbb{R}^{d}\right)\right\} 
\end{cases}\label{eq:3.4-1}
\end{align}

\noindent where $h\left(x\right)=\frac{x^{4}}{4}$, and this is once
again a family of problems of the form in (\ref{eq:1.3}) , so we
have the following system of BSDEs by (\ref{eq:1.4}) and (\ref{eq:1.5})
\begin{equation}
\begin{cases}
dp_{s}^{t} & =-\widehat{r_{s}}p_{s}^{t}ds+\left(q_{s}^{t}\right)^{T}dW_{s},\mbox{ \ensuremath{}s\ensuremath{\in\left[t,T\right]}}\\
p_{T}^{t} & =\frac{dh\left(Y_{T}^{*}-Y_{t}^{*}\right)}{dx}=\left(Y_{T}^{*}-Y_{t}^{*}\right)^{3}
\end{cases}\label{eq:3.5-1}
\end{equation}
\begin{equation}
\begin{cases}
dP_{s}^{t} & =-2\widehat{r_{s}}P_{s}^{t}ds+\left(Q_{s}^{t}\right)^{T}dW_{s},\mbox{ \ensuremath{}s\ensuremath{\in\left[t,T\right]}}\\
P_{T}^{t} & =\frac{d^{2}h\left(Y_{T}^{*}-Y_{t}^{*}\right)}{dx^{2}}=3\left(Y_{T}^{*}-Y_{t}^{*}\right)^{2}
\end{cases}\label{eq:3.5-2-1}
\end{equation}

\begin{prop}
\label{prop:3.2-1}If (\textup{\ref{eq:3.4-1}}) and (\ref{eq:3.5-1})
admit a solution $\left(u^{*},Y^{*},p^{t},q^{t}\right)$ for any $t\in\left[0,T\right)$
s.t.
\begin{equation}
\begin{cases}
u^{*}\in L_{\mathcal{F}}^{2}\left(0,T;\mathbb{R}^{d}\right)\\
\mathbb{E}_{t}\int_{t}^{T}\left|\Lambda_{s}^{t}\right|ds<\infty\mbox{ and \ensuremath{\underset{s\left\downarrow t\right.}{\lim}\mathbb{E}_{t}\left[\Lambda_{s}^{t}\right]=0}} & where\mbox{ }\Lambda_{s}^{t}=p_{s}^{t}\theta_{s}+q_{s}^{t}.\\
\widehat{r_{t}}Y_{t}^{*}+\left(u_{t}^{*}\right)^{T}\theta_{t}=0
\end{cases}\label{eq:3.7-1}
\end{equation}
then $u^{*}$ is an equilibrium control for the family of problems
(\ref{eq:3.2-1}) \end{prop}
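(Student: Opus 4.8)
The plan is to replicate, almost verbatim, the argument used for the cubic case in Proposition~\ref{prop:3.2}, since the present problem~(\ref{eq:3.4-1}) is again of the canonical form~(\ref{eq:1.3}) and the whole machinery of Theorem~\ref{thm:1.3} together with Proposition~\ref{thm:3.1-1} applies directly. First I would recall that Theorem~\ref{thm:1.3} guarantees that $u^{*}$ is an equilibrium for the unconstrained family~(\ref{eq:3.4-1}) (equivalently the family~(\ref{eq:3.3-1})) provided the three conditions in~(\ref{eq:1.11}) hold: namely $u^{*}\in L_{\mathcal{F}}^{2}\left(0,T;\mathbb{R}^{d}\right)$, the integrability together with the vanishing-limit condition on $\Lambda^{t}$, and the second-order condition $\mathbb{E}_{t}\left[\frac{d^{2}h\left(Y_{T}^{*}-Y_{t}^{*}\right)}{dx^{2}}\right]\geq0$. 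The first two of these are assumed outright in the hypothesis~(\ref{eq:3.7-1}), so the only thing left to check before invoking the theorem is the second-order condition.

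The key observation, and the main structural simplification relative to the cubic case, is that here the second-order condition holds for free. Since $h\left(x\right)=\frac{x^{4}}{4}$ we have $h''\left(x\right)=3x^{2}$, so by~(\ref{eq:3.5-2-1}) the terminal datum is $P_{T}^{t}=\frac{d^{2}h\left(Y_{T}^{*}-Y_{t}^{*}\right)}{dx^{2}}=3\left(Y_{T}^{*}-Y_{t}^{*}\right)^{2}\geq0$ pointwise, whence $\mathbb{E}_{t}\left[\frac{d^{2}h\left(Y_{T}^{*}-Y_{t}^{*}\right)}{dx^{2}}\right]=3\,\mathbb{E}_{t}\left[\left(Y_{T}^{*}-Y_{t}^{*}\right)^{2}\right]\geq0$ automatically. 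Note the contrast with Proposition~\ref{prop:3.2}: there $h''\left(x\right)=-2x$ is not sign-definite, so the martingale property $\mathbb{E}_{t}\left[Y_{T}^{*}\right]=Y_{t}^{*}$ had to be invoked to force the second-order term to vanish; here the convexity of $h$ makes that term nonnegative regardless, so the martingale condition is no longer needed for this purpose.

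With all three conditions of~(\ref{eq:1.11}) in hand, Theorem~\ref{thm:1.3} yields that $u^{*}$ is an equilibrium for the unconstrained family~(\ref{eq:3.3-1}). The final step is to upgrade this to an equilibrium for the constrained family~(\ref{eq:3.2-1}): this is precisely where the remaining hypothesis $\widehat{r_{t}}Y_{t}^{*}+\left(u_{t}^{*}\right)^{T}\theta_{t}=0$ enters, as it is exactly the condition required by Proposition~\ref{thm:3.1-1}, which forces $Y^{*}$ to be a martingale (so that $u^{*}\in\overline{U_{ad}^{u}}$) and thereby transfers the equilibrium property from~(\ref{eq:3.3-1}) to~(\ref{eq:3.2-1}). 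Combining Theorem~\ref{thm:1.3} with Proposition~\ref{thm:3.1-1} closes the argument. I do not anticipate a genuine obstacle here, since the proof is essentially bookkeeping, but the one point that requires care is keeping the two distinct roles of the martingale condition straight: in this quartic setting it serves only to secure admissibility in $\overline{U_{ad}^{u}}$ and contributes nothing to the second-order inequality, which is automatic.
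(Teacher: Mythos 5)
Your proposal is correct and follows essentially the same route as the paper's own proof: the second-order condition in (\ref{eq:1.11}) is automatic since $h''\left(x\right)=3x^{2}\geq0$, the remaining two conditions are hypotheses in (\ref{eq:3.7-1}), and Theorem \ref{thm:1.3} combined with Proposition \ref{thm:3.1-1} (via the martingale/admissibility role of $\widehat{r_{t}}Y_{t}^{*}+\left(u_{t}^{*}\right)^{T}\theta_{t}=0$) yields the conclusion. Your explicit remark contrasting this with the cubic case, where the martingale property was needed to handle the sign-indefinite $h''$, is a useful clarification but not a different argument.
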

\begin{proof}
Since one of the sufficient condition for equilibrium $\mathbb{E}_{t}$$\left[\frac{d^{2}h\left(Y_{T}^{*}-Y_{t}^{*}\right)}{dx^{2}}\right]=3\mathbb{E}_{t}\left[\left(Y_{T}^{*}-Y_{t}^{*}\right)^{2}\right]\geq0$
in (\ref{eq:1.11}) under theorem \ref{thm:1.3} has already been
satisfied. Then by combining theorem \ref{thm:1.3} and proposition
\ref{thm:3.1-1}, we deduce that $u^{*}$ is an equilibrium for the
family of problems (\ref{eq:3.2-1}).
\end{proof}

\subsection{Details of finding a potential equilibrium}

\noindent By the assumptions we have made, $\widehat{r}$ and $\sigma$
are deterministic and bounded, $\mu^{x}$ is bounded. Here we also
assume that $\mu^{x}$ is deterministic, i.e. $\theta$ is deterministic
and bounded. Then for any $t\in\left[0,T\right)$, we make the following
Ansatz
\begin{equation}
p_{s}^{t}=M_{s}\left(Y_{s}^{*}\right)^{3}-N_{s}\left(Y_{s}^{*}\right)^{2}Y_{t}^{*}+\Gamma_{s}Y_{s}^{*}\left(Y_{t}^{*}\right)^{2}-\Phi_{s}\left(Y_{t}^{*}\right)^{3},s\in\left[t,T\right]\label{eq:3.8-1}
\end{equation}
where $M,N,\Gamma,\Phi$ are deterministic functions which are differentiable
with $M_{T}=1,N_{T}=3,\Gamma_{T}=3,\Phi_{T}=1$ 

\noindent by applying It$\hat{\mbox{o}}$ formula to (\ref{eq:3.8-1})
with respect to $s$ we could get

\noindent 
\begin{eqnarray}
dp_{s}^{t} & = & d\left[M_{s}\left(Y_{s}^{*}\right)^{3}\right]-d\left[N_{s}\left(Y_{s}^{*}\right)^{2}Y_{t}^{*}\right]+d\left[\Gamma_{s}Y_{s}^{*}\left(Y_{t}^{*}\right)^{2}\right]-d\left[\Phi_{s}\left(Y_{t}^{*}\right)^{3}\right]\nonumber \\
 & = & \left[3M_{s}\left(Y_{s}^{*}\right)^{2}-2N_{s}Y_{t}^{*}Y_{s}^{*}+\Gamma_{s}\left(Y_{t}^{*}\right)^{2}\right]\left(\widehat{r_{s}}Y_{s}^{*}+\left(u_{s}^{*}\right)^{T}\theta_{s}\right)ds\nonumber \\
 &  & +\left[M_{s}^{\mathbf{\prime}}\left(Y_{s}^{*}\right)^{3}-N_{s}^{\prime}Y_{t}^{*}\left(Y_{s}^{*}\right)^{2}+Y_{s}^{*}\left(Y_{t}^{*}\right)^{2}\Gamma_{s}^{\prime}-\Phi_{s}^{\prime}\left(Y_{t}^{*}\right)^{3}\right]ds\nonumber \\
 &  & +\left[3M_{s}Y_{s}^{*}\left(u_{s}^{*}\right)^{T}u_{s}^{*}-N_{s}Y_{t}^{*}\left(u_{s}^{*}\right)^{T}u_{s}^{*}\right]ds\label{eq:3.9-1}\\
 &  & +\left[3M_{s}\left(Y_{s}^{*}\right)^{2}-2N_{s}Y_{t}^{*}Y_{s}^{*}+\left(Y_{t}^{*}\right)^{2}\Gamma_{s}\right]\left(u_{s}^{*}\right)^{T}dW_{s}
\end{eqnarray}
by comparing the $dW$ terms of $dp^{t}$ in (\ref{eq:3.5-1}) and
(\ref{eq:3.9-1}), we get that
\begin{equation}
q_{s}^{t}=\left[3M_{s}\left(Y_{s}^{*}\right)^{2}-2N_{s}Y_{t}^{*}Y_{s}^{*}+\left(Y_{t}^{*}\right)^{2}\Gamma_{s}\right]u_{s}^{*},\mbox{ \ensuremath{}s\ensuremath{\in\left[t,T\right]}}
\end{equation}
we again hope to find a possible linear feedback $u^{*}$ and as before
try by setting 
\begin{equation}
0=\Lambda_{s}^{s}=p_{s}^{s}\theta_{s}+q_{s}^{s},\mbox{ \ensuremath{}s\ensuremath{\in\left[0,T\right]}}
\end{equation}
which leads to the equation
\begin{equation}
\left[\left(M_{s}-N_{s}+\Gamma_{s}-\Phi_{s}\right)Y_{s}^{*}\theta_{s}+\left(3M_{s}-2N_{s}+\Gamma_{s}\right)u_{s}^{*}\right]\left(Y_{s}^{*}\right)^{2}=0
\end{equation}
from which we get
\begin{equation}
u_{s}^{*}=\alpha_{s}\theta_{s}Y_{s}^{*}\label{eq:5.13-1}
\end{equation}
where 
\begin{equation}
\alpha_{s}=\begin{cases}
\frac{M_{s}-N_{s}+\Gamma_{s}-\Phi_{s}}{-3M_{s}+2N_{s}-\Gamma_{s}}, & \forall s\in\left[0,T\right)\\
\underset{s\left\uparrow T\right.}{\lim}\frac{M_{s}-N_{s}+\Gamma_{s}-\Phi_{s}}{-3M_{s}+2N_{s}-\Gamma_{s}}, & s=T
\end{cases}\label{eq:3.13-1}
\end{equation}
based on the assumption that $-3M_{s}+2N_{s}-\Gamma_{s}\neq0,\forall s\in\left[0,T\right)$
and $\underset{s\left\uparrow T\right.}{\lim}\frac{M_{s}-N_{s}+\Gamma_{s}-\Phi_{s}}{-3M_{s}+2N_{s}-\Gamma_{s}}$
exists.\\

\noindent By comparing the $ds$ terms of $dp^{t}$ in (\ref{eq:3.5-1})
and (\ref{eq:3.9-1}), we get that
\begin{eqnarray*}
 &  & -\widehat{r_{s}}\left[M_{s}\left(Y_{s}^{*}\right)^{3}-N_{s}\left(Y_{s}^{*}\right)^{2}Y_{t}^{*}+\Gamma_{s}Y_{s}^{*}\left(Y_{t}^{*}\right)^{2}-\Phi_{s}\left(Y_{t}^{*}\right)^{3}\right]\\
 & = & \left[3M_{s}\left(Y_{s}^{*}\right)^{2}-2N_{s}Y_{t}^{*}Y_{s}^{*}+\Gamma_{s}\left(Y_{t}^{*}\right)^{2}\right]\left(\widehat{r_{s}}Y_{s}^{*}+\left(u_{s}^{*}\right)^{T}\theta_{s}\right)\\
 &  & +\left[M_{s}^{\mathbf{\prime}}\left(Y_{s}^{*}\right)^{3}-N_{s}^{\prime}Y_{t}^{*}\left(Y_{s}^{*}\right)^{2}+Y_{s}^{*}\left(Y_{t}^{*}\right)^{2}\Gamma_{s}^{\prime}-\Phi_{s}^{\prime}\left(Y_{t}^{*}\right)^{3}\right]\\
 &  & +\left[3M_{s}Y_{s}^{*}\left(u_{s}^{*}\right)^{T}u_{s}^{*}-N_{s}Y_{t}^{*}\left(u_{s}^{*}\right)^{T}u_{s}^{*}\right]\\
\\
\Leftrightarrow &  & -\widehat{r_{s}}\left[M_{s}\left(Y_{s}^{*}\right)^{3}-N_{s}\left(Y_{s}^{*}\right)^{2}Y_{t}^{*}+\Gamma_{s}Y_{s}^{*}\left(Y_{t}^{*}\right)^{2}-\Phi_{s}\left(Y_{t}^{*}\right)^{3}\right]\\
 & = & \left[3M_{s}\left(Y_{s}^{*}\right)^{2}-2N_{s}Y_{t}^{*}Y_{s}^{*}+\Gamma_{s}\left(Y_{t}^{*}\right)^{2}\right]\left(\widehat{r_{s}}Y_{s}^{*}+\alpha_{s}\left|\theta_{s}\right|^{2}Y_{s}^{*}\right)\\
 &  & +\left[M_{s}^{\mathbf{\prime}}\left(Y_{s}^{*}\right)^{3}-N_{s}^{\prime}Y_{t}^{*}\left(Y_{s}^{*}\right)^{2}+Y_{s}^{*}\left(Y_{t}^{*}\right)^{2}\Gamma_{s}^{\prime}-\Phi_{s}^{\prime}\left(Y_{t}^{*}\right)^{3}\right]\\
 &  & +\left[3M_{s}Y_{s}^{*}-N_{s}Y_{t}^{*}\right]\alpha_{s}^{2}\left|\theta_{s}\right|^{2}\left(Y_{s}^{*}\right)^{2}\\
\\
\Leftrightarrow &  & -\widehat{r_{s}}\left[M_{s}\left(Y_{s}^{*}\right)^{3}-N_{s}\left(Y_{s}^{*}\right)^{2}Y_{t}^{*}+\Gamma_{s}Y_{s}^{*}\left(Y_{t}^{*}\right)^{2}-\Phi_{s}\left(Y_{t}^{*}\right)^{3}\right]\\
 & = & \left[3M_{s}\widehat{r_{s}}\left(Y_{s}^{*}\right)^{3}-2N_{s}\widehat{r_{s}}Y_{t}^{*}\left(Y_{s}^{*}\right)^{2}+\Gamma_{s}\widehat{r_{s}}\left(Y_{t}^{*}\right)^{2}Y_{s}^{*}\right]\\
 &  & +\left[3\alpha_{s}\left|\theta_{s}\right|^{2}M_{s}\left(Y_{s}^{*}\right)^{3}-2\alpha_{s}\left|\theta_{s}\right|^{2}N_{s}Y_{t}^{*}\left(Y_{s}^{*}\right)^{2}+\alpha_{s}\left|\theta_{s}\right|^{2}\Gamma_{s}\left(Y_{t}^{*}\right)^{2}Y_{s}^{*}\right]\\
 &  & +\left[M_{s}^{\mathbf{\prime}}\left(Y_{s}^{*}\right)^{3}-N_{s}^{\prime}Y_{t}^{*}\left(Y_{s}^{*}\right)^{2}+Y_{s}^{*}\left(Y_{t}^{*}\right)^{2}\Gamma_{s}^{\prime}-\Phi_{s}^{\prime}\left(Y_{t}^{*}\right)^{3}\right]\\
 &  & +\left[3M_{s}Y_{s}^{*}-N_{s}Y_{t}^{*}\right]\alpha_{s}^{2}\left|\theta_{s}\right|^{2}\left(Y_{s}^{*}\right)^{2}\\
\\
\Leftrightarrow &  & -\widehat{r_{s}}M_{s}\left(Y_{s}^{*}\right)^{3}+\widehat{r_{s}}N_{s}\left(Y_{s}^{*}\right)^{2}Y_{t}^{*}-\widehat{r_{s}}\Gamma_{s}Y_{s}^{*}\left(Y_{t}^{*}\right)^{2}+\widehat{r_{s}}\Phi_{s}\left(Y_{t}^{*}\right)^{3}\\
 & = & \left[M_{s}^{\mathbf{\prime}}+3M_{s}\widehat{r_{s}}+3\alpha_{s}\left|\theta_{s}\right|^{2}M_{s}+3\alpha_{s}^{2}\left|\theta_{s}\right|^{2}M_{s}\right]\left(Y_{s}^{*}\right)^{3}\\
 &  & -\left[N_{s}^{\prime}+2N_{s}\widehat{r_{s}}+2\alpha_{s}\left|\theta_{s}\right|^{2}N_{s}+\alpha_{s}^{2}\left|\theta_{s}\right|^{2}N_{s}\right]Y_{t}^{*}\left(Y_{s}^{*}\right)^{2}\\
 &  & +\left[\Gamma_{s}^{\prime}+\Gamma_{s}\widehat{r_{s}}+\alpha_{s}\left|\theta_{s}\right|^{2}\Gamma_{s}\right]Y_{s}^{*}\left(Y_{t}^{*}\right)^{2}-\Phi_{s}^{\prime}\left(Y_{t}^{*}\right)^{3}
\end{eqnarray*}
after rearrangement we get
\begin{eqnarray}
0 & = & \left(\Phi_{s}^{\prime}+\widehat{r_{s}}\Phi_{s}\right)\left(Y_{t}^{*}\right)^{3}-\left[\Gamma_{s}^{\prime}+2\widehat{r_{s}}\Gamma_{s}+\alpha_{s}\left|\theta_{s}\right|^{2}\Gamma_{s}\right]Y_{s}^{*}\left(Y_{t}^{*}\right)^{2}\nonumber \\
 &  & +\left[N_{s}^{\prime}+3\widehat{r_{s}}N_{s}+2\alpha_{s}\left|\theta_{s}\right|^{2}N_{s}+\alpha_{s}^{2}\left|\theta_{s}\right|^{2}N_{s}\right]\left(Y_{s}^{*}\right)^{2}Y_{t}^{*}\nonumber \\
 &  & -\left[M_{s}^{\mathbf{\prime}}+4\widehat{r_{s}}M_{s}+3\alpha_{s}\left|\theta_{s}\right|^{2}M_{s}+3\alpha_{s}^{2}\left|\theta_{s}\right|^{2}M_{s}\right]\left(Y_{s}^{*}\right)^{3}
\end{eqnarray}
which leads to the following system of ODEs 
\begin{align}
 & \begin{cases}
M_{s}^{\mathbf{\prime}}+\left(4\widehat{r_{s}}+3\alpha_{s}\left|\theta_{s}\right|^{2}+3\alpha_{s}^{2}\left|\theta_{s}\right|^{2}\right)M_{s}=0, & s\in\left[0,T\right]\\
M_{T}=1
\end{cases}\label{eq:3.15-1}\\
 & \begin{cases}
N_{s}^{\prime}+\left(3\widehat{r_{s}}+2\alpha_{s}\left|\theta_{s}\right|^{2}+\alpha_{s}^{2}\left|\theta_{s}\right|^{2}\right)N_{s}=0, & s\in\left[0,T\right]\\
N_{T}=3
\end{cases}\label{eq:3.16-1}\\
 & \begin{cases}
\Gamma_{s}^{\prime}+\left(2\widehat{r_{s}}+\alpha_{s}\left|\theta_{s}\right|^{2}\right)\Gamma_{s}=0, & s\in\left[0,T\right]\\
\Gamma_{T}=3
\end{cases}\label{eq:3.16-2}\\
 & \begin{cases}
\Phi_{s}^{\prime}+\widehat{r_{s}}\Phi_{s}=0, & s\in\left[0,T\right]\\
\Phi_{T}=1
\end{cases}\label{eq:3.17-1}
\end{align}
the solution to equation (\ref{eq:3.17-1}) is $\Phi_{s}=e^{\int_{s}^{T}\widehat{r_{u}}du}$,
which makes the unsettled system contains (\ref{eq:3.15-1}), (\ref{eq:3.16-1})
and (\ref{eq:3.16-2}) as follows 
\begin{align}
 & \begin{cases}
M_{s}^{\mathbf{\prime}}+\left(4\widehat{r_{s}}+3\alpha_{s}\left|\theta_{s}\right|^{2}+3\alpha_{s}^{2}\left|\theta_{s}\right|^{2}\right)M_{s}=0, & s\in\left[0,T\right]\\
M_{T}=1\\
N_{s}^{\prime}+\left(3\widehat{r_{s}}+2\alpha_{s}\left|\theta_{s}\right|^{2}+\alpha_{s}^{2}\left|\theta_{s}\right|^{2}\right)N_{s}=0, & s\in\left[0,T\right]\\
N_{T}=3\\
\Gamma_{s}^{\prime}+\left(2\widehat{r_{s}}+\alpha_{s}\left|\theta_{s}\right|^{2}\right)\Gamma_{s}=0, & s\in\left[0,T\right]\\
\Gamma_{T}=3
\end{cases}\label{eq:3.18-1}
\end{align}

\subsection{Conditions for obtaining an equilibrium for our problem}
\begin{thm}
\label{thm:5.4-1}If the system of ODEs (\ref{eq:3.18-1}) admits
a solution $\left(M,N\right)$ s.t. the corresponding \textup{$\alpha$}
defined in (\ref{eq:3.13-1}) satisfies that $\widehat{r_{t}}+\alpha_{t}\left|\theta_{t}\right|^{2}=0,\forall t\in\left[0,T\right)$,
then $u^{*}$ is an equilibrium control for the family of problems
in (\ref{eq:3.2-1})\end{thm}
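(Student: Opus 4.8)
The plan is to follow verbatim the route of the proof of Theorem \ref{thm:5.4} (the cubic case), verifying one by one the three requirements of Proposition \ref{prop:3.2-1} listed in (\ref{eq:3.7-1}). First I would record that $\alpha$ is deterministic and bounded: since $M,N,\Gamma,\Phi$ solve the linear ODEs (\ref{eq:3.18-1}) and (\ref{eq:3.17-1}) with bounded deterministic coefficients on the compact interval $[0,T]$, they are continuous, hence bounded, and the standing assumption that the denominator $-3M_s+2N_s-\Gamma_s$ never vanishes (with the limit at $T$ existing) makes $\alpha_s$ bounded through (\ref{eq:3.13-1}).

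Inserting the feedback $u_s^{*}=\alpha_s\theta_s Y_s^{*}$ from (\ref{eq:5.13-1}) into the state equation yields the closed-loop linear SDE
\[
dY_s^{*}=Y_s^{*}\left[\left(\widehat{r_s}+\alpha_s\left|\theta_s\right|^{2}\right)ds+\alpha_s\theta_s^{T}dW_s\right],
\]
whose solution is an explicit stochastic exponential of the deterministic bounded integrand $\alpha\theta$. Exactly as in the cubic case, Doob's $L^{p}$ maximal inequality together with the finite-moment property of such exponentials gives $\mathbb{E}\left[\sup_{t\in[0,T]}\left|Y_t^{*}\right|^{p}\right]<\infty$ for every $p\geq1$; in particular $Y^{*}\in L_{\mathcal{F}}^{2}\left(\Omega;C\left(0,T;\mathbb{R}\right)\right)$ and, since $\alpha\theta$ is bounded, $u^{*}\in L_{\mathcal{F}}^{2}\left(0,T;\mathbb{R}^{d}\right)$, which is the first requirement of (\ref{eq:3.7-1}). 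I note in passing that the hypothesis $\widehat{r_t}+\alpha_t\left|\theta_t\right|^{2}=0$ in fact annihilates the drift, so $Y^{*}$ is a genuine martingale, but only the moment bounds are needed here.

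Next I would compute $\Lambda_s^{t}=p_s^{t}\theta_s+q_s^{t}$ by substituting $u_s^{*}=\alpha_s\theta_s Y_s^{*}$ into the Ansatz (\ref{eq:3.8-1}) and into $q^{t}$; the result is $\Lambda_s^{t}=\theta_s\left[(1+3\alpha_s)M_s(Y_s^{*})^{3}-(1+2\alpha_s)N_sY_t^{*}(Y_s^{*})^{2}+(1+\alpha_s)\Gamma_s(Y_t^{*})^{2}Y_s^{*}-\Phi_s(Y_t^{*})^{3}\right]$, a cubic polynomial in $Y_s^{*},Y_t^{*}$ with bounded deterministic coefficients. Because $M,N,\Gamma,\Phi,\alpha,\theta$ are all bounded and $Y^{*}$ has finite third (indeed all) moments, $\mathbb{E}_t\int_t^{T}\left|\Lambda_s^{t}\right|ds<\infty$. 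For the limit condition I would, as in Theorem \ref{thm:5.4}, expand $\mathbb{E}_t\left[\Lambda_s^{t}\right]$ term by term and pass the limit $s\downarrow t$ inside the conditional expectations by (conditional) dominated convergence, with dominating function $\sup_s\left|Y_s^{*}\right|^{3}$, using continuity of $Y^{*}$ so that $\mathbb{E}_t\left[(Y_s^{*})^{k}\right]\to(Y_t^{*})^{k}$. The factor multiplying the bounded $\theta_s$ then tends to $\left[(M_t-N_t+\Gamma_t-\Phi_t)+(3M_t-2N_t+\Gamma_t)\alpha_t\right](Y_t^{*})^{3}$, which vanishes identically because $\alpha_t$ is defined by (\ref{eq:3.13-1}) precisely to make it so; hence $\lim_{s\downarrow t}\mathbb{E}_t\left[\Lambda_s^{t}\right]=0$. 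Finally the remaining condition $\widehat{r_t}Y_t^{*}+(u_t^{*})^{T}\theta_t=0$ is immediate from the hypothesis, since $(u_t^{*})^{T}\theta_t=\alpha_t\left|\theta_t\right|^{2}Y_t^{*}$ gives $\left(\widehat{r_t}+\alpha_t\left|\theta_t\right|^{2}\right)Y_t^{*}=0$. With all three conditions of (\ref{eq:3.7-1}) verified, Proposition \ref{prop:3.2-1} concludes that $u^{*}$ is an equilibrium for (\ref{eq:3.2-1}).

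The main obstacle will be the limit condition $\lim_{s\downarrow t}\mathbb{E}_t\left[\Lambda_s^{t}\right]=0$: one must check carefully that the algebraic collapse of the cubic coefficient to zero genuinely uses the exact definition (\ref{eq:3.13-1}) of $\alpha$, and that the interchange of limit and conditional expectation is legitimate. This is why establishing the higher-moment bounds on $Y^{*}$ (not merely $L^{2}$) is the technical heart of the argument, the presence of the extra coefficient function $\Gamma$ compared with the cubic case being the only new bookkeeping. Everything else is a routine transcription of the proof of Theorem \ref{thm:5.4}.
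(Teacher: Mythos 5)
Your proposal is correct and follows essentially the same route as the paper's own proof: you verify the conditions of Proposition \ref{prop:3.2-1} by establishing boundedness of $\alpha$, the moment bounds on $Y^{*}$ via the explicit stochastic exponential and the $L^{p}$ maximal inequality (the paper works with the third moment and Hölder's inequality where you invoke all moments, which is the same technique), the integrability and limit conditions on $\Lambda^{t}$ via dominated convergence and the algebraic collapse coming from the definition (\ref{eq:3.13-1}) of $\alpha$, and the drift-annihilation condition directly from the hypothesis. No gaps; the only cosmetic difference is your aside on $Y^{*}$ being a martingale, which the paper does not state.
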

\begin{proof}
Suppose $\left(M,N\right)$ is a solution to (\ref{eq:3.18-1}) s.t.
$\widehat{r_{t}}+\alpha_{t}\left|\theta_{t}\right|^{2}=0,\forall t\in\left[0,T\right)$.
Since the deterministic $M,N,\Gamma$ are continuous and thus are
bounded on $\left[0,T\right]$, we have that $\alpha$ is also deterministic
and bounded on $\left[0,T\right]$ according to (\ref{eq:3.13-1}).
We have in (\ref{eq:5.13-1}) for $s\in\left[0,T\right]$ that
\begin{equation}
u_{s}^{*}=\alpha_{s}\theta_{s}Y_{s}^{*}
\end{equation}
and thus we have 
\begin{align}
dY_{s}^{*} & =\left(\widehat{r_{s}}Y_{s}^{*}+\left(u_{s}^{*}\right)^{T}\theta_{s}\right)ds+\left(u_{s}^{*}\right)^{T}dW_{s}\nonumber \\
 & =Y_{s}^{*}\left[\left(\widehat{r_{s}}+\alpha_{s}\left|\theta_{s}\right|^{2}\right)ds+\alpha_{s}\left(\theta_{s}\right)^{T}dW_{s}\right]
\end{align}
which leads to
\begin{equation}
Y_{t}^{*}=y_{0}e^{\int_{0}^{t}\widehat{r_{s}}+\alpha_{s}\left|\theta_{s}\right|^{2}-\frac{1}{2}\alpha_{s}^{2}\left|\theta_{s}\right|^{2}ds+\int_{0}^{t}\alpha_{s}\left(\theta_{s}\right)^{T}dW_{s}}
\end{equation}
thus
\begin{eqnarray}
\mathbb{E}\left[\underset{t\in\left[0,T\right]}{\sup}\left|Y_{t}^{*}\right|^{3}\right] & = & \mathbb{E}\left[\underset{t\in\left[0,T\right]}{\sup}\left|y_{0}e^{\int_{0}^{t}\widehat{r_{s}}+\alpha_{s}\left|\theta_{s}\right|^{2}ds}\right|^{3}\left|e^{\int_{0}^{t}-\frac{1}{2}\alpha_{s}^{2}\left|\theta_{s}\right|^{2}ds+\int_{0}^{t}\alpha_{s}\left(\theta_{s}\right)^{T}dW_{s}}\right|^{3}\right]\nonumber \\
 & \leq & \left[\underset{t\in\left[0,T\right]}{\sup}\left|y_{0}e^{\int_{0}^{t}\widehat{r_{s}}+\alpha_{s}\left|\theta_{s}\right|^{2}ds}\right|^{3}\right]\mathbb{E}\left[\underset{t\in\left[0,T\right]}{\sup}\left(e^{\int_{0}^{t}-\frac{1}{2}\alpha_{s}^{2}\left|\theta_{s}\right|^{2}ds+\int_{0}^{t}\alpha_{s}\left(\theta_{s}\right)^{T}dW_{s}}\right)^{3}\right]\nonumber \\
 & \leq & \left[\underset{t\in\left[0,T\right]}{\sup}\left|y_{0}e^{\int_{0}^{t}\widehat{r_{s}}+\alpha_{s}\left|\theta_{s}\right|^{2}ds}\right|^{3}\right]\frac{27}{8}\mathbb{E}\left[\left(e^{\int_{0}^{T}-\frac{1}{2}\alpha_{s}^{2}\left|\theta_{s}\right|^{2}ds+\int_{0}^{T}\alpha_{s}\left(\theta_{s}\right)^{T}dW_{s}}\right)^{3}\right]\nonumber \\
 &  & \mbox{by \ensuremath{L^{p}}Maximal Inequality}\nonumber \\
 & = & \frac{27}{8}e^{\int_{0}^{T}3\alpha_{s}^{2}\left|\theta_{s}\right|^{2}ds}\left[\underset{t\in\left[0,T\right]}{\sup}\left|y_{0}e^{\int_{0}^{t}\widehat{r_{s}}+\alpha_{s}\left|\theta_{s}\right|^{2}ds}\right|^{3}\right]\mathbb{E}\left[e^{\int_{0}^{T}-\frac{9}{2}\alpha_{s}^{2}\left|\theta_{s}\right|^{2}ds+\int_{0}^{T}3\alpha_{s}\left(\theta_{s}\right)^{T}dW_{s}}\right]\nonumber \\
 & = & \frac{27}{8}e^{\int_{0}^{T}3\alpha_{s}^{2}\left|\theta_{s}\right|^{2}ds}\underset{t\in\left[0,T\right]}{\sup}\left|y_{0}e^{\int_{0}^{t}\widehat{r_{s}}+\alpha_{s}\left|\theta_{s}\right|^{2}ds}\right|^{3}\nonumber \\
 & < & \infty
\end{eqnarray}
as $\alpha,\theta,\widehat{r}$ are bounded, which implies that 
\begin{equation}
Y^{*}\in L_{\mathcal{F}}^{3}\left(\Omega;C\left(0,T;\mathbb{R}\right)\right)\label{eq:3.23-1}
\end{equation}
and thus we have 
\begin{eqnarray}
\mathbb{E}\left[\int_{0}^{T}\left|u_{s}^{*}\right|^{2}ds\right] & = & \mathbb{E}\left[\int_{0}^{T}\alpha_{s}^{2}\left|\theta_{s}\right|^{2}\left(Y_{s}^{*}\right)^{2}ds\right]\nonumber \\
 & \leq & \underset{s\in\left[0,T\right]}{\sup}\left(\alpha_{s}^{2}\left|\theta_{s}\right|^{2}\right)\int_{0}^{T}\mathbb{E}\left[\left(Y_{s}^{*}\right)^{2}\right]ds\\
 & \leq & \underset{s\in\left[0,T\right]}{\sup}\left(\alpha_{s}^{2}\left|\theta_{s}\right|^{2}\right)\int_{0}^{T}\left(\mathbb{E}\left[\left|Y_{s}^{*}\right|^{3}\right]\right)^{\frac{2}{3}}ds,\mbox{by Holder's }\\
 & \leq & \underset{s\in\left[0,T\right]}{\sup}\left(\alpha_{s}^{2}\left|\theta_{s}\right|^{2}\right)\int_{0}^{T}\left(\mathbb{E}\left[\underset{u\in\left[0,T\right]}{\sup}\left|Y_{u}^{*}\right|^{3}\right]\right)^{\frac{2}{3}}ds\nonumber \\
 & = & \underset{s\in\left[0,T\right]}{\sup}\left(\alpha_{s}^{2}\left|\theta_{s}\right|^{2}\right)\left(\mathbb{E}\left[\underset{u\in\left[0,T\right]}{\sup}\left|Y_{u}^{*}\right|^{3}\right]\right)^{\frac{2}{3}}T\\
 & < & \infty
\end{eqnarray}
which implies that
\begin{equation}
u^{*}\in L_{\mathcal{F}}^{2}\left(0,T;\mathbb{R}^{d}\right)\label{eq:3.25-1}
\end{equation}
also we have that for any $t\in\left[0,T\right)$ 
\begin{eqnarray*}
\mbox{ }\Lambda_{s}^{t} & = & p_{s}^{t}\theta_{s}+q_{s}^{t}\\
 & = & \left[M_{s}\left(Y_{s}^{*}\right)^{3}-N_{s}\left(Y_{s}^{*}\right)^{2}Y_{t}^{*}+\Gamma_{s}Y_{s}^{*}\left(Y_{t}^{*}\right)^{2}-\Phi_{s}\left(Y_{t}^{*}\right)^{3}\right]\theta_{s}\\
 &  & +\left[3M_{s}\left(Y_{s}^{*}\right)^{2}-2N_{s}Y_{t}^{*}Y_{s}^{*}+\left(Y_{t}^{*}\right)^{2}\Gamma_{s}\right]u_{s}^{*}\\
 & = & \left[M_{s}\left(Y_{s}^{*}\right)^{3}-N_{s}\left(Y_{s}^{*}\right)^{2}Y_{t}^{*}+\Gamma_{s}Y_{s}^{*}\left(Y_{t}^{*}\right)^{2}-\Phi_{s}\left(Y_{t}^{*}\right)^{3}\right]\theta_{s}\\
 &  & +\left[3\alpha_{s}M_{s}\left(Y_{s}^{*}\right)^{3}-2\alpha_{s}N_{s}Y_{t}^{*}\left(Y_{s}^{*}\right)^{2}+\alpha_{s}\left(Y_{t}^{*}\right)^{2}\Gamma_{s}Y_{s}^{*}\right]\theta_{s}\\
 & = & \left[\left(1+3\alpha_{s}\right)M_{s}\left(Y_{s}^{*}\right)^{3}-\left(1+2\alpha_{s}\right)N_{s}Y_{t}^{*}\left(Y_{s}^{*}\right)^{2}+\left(1+\alpha_{s}\right)\Gamma_{s}\left(Y_{t}^{*}\right)^{2}Y_{s}^{*}-\Phi_{s}\left(Y_{t}^{*}\right)^{3}\right]\theta_{s}
\end{eqnarray*}
Since $\alpha,M,N,\Gamma,\theta$ are bounded, it is clearly by (\ref{eq:3.23-1})
that 
\begin{equation}
\mathbb{E}_{t}\int_{t}^{T}\left|\Lambda_{s}^{t}\right|ds<\infty\label{eq:3.27-1}
\end{equation}
and
\begin{equation}
\underset{s\left\downarrow t\right.}{\lim}\mathbb{E}_{t}\left[\Lambda_{s}^{t}\right]=\underset{s\left\downarrow t\right.}{\lim}\theta_{s}\mathbb{E}_{t}\left[\left(1+3\alpha_{s}\right)M_{s}\left(Y_{s}^{*}\right)^{3}-\left(1+2\alpha_{s}\right)N_{s}Y_{t}^{*}\left(Y_{s}^{*}\right)^{2}+\left(1+\alpha_{s}\right)\Gamma_{s}\left(Y_{t}^{*}\right)^{2}Y_{s}^{*}-\Phi_{s}\left(Y_{t}^{*}\right)^{3}\right]
\end{equation}
since we have
\begin{align}
 & \underset{s\left\downarrow t\right.}{\lim}\mathbb{E}_{t}\left[\left(1+3\alpha_{s}\right)M_{s}\left(Y_{s}^{*}\right)^{3}-\left(1+2\alpha_{s}\right)N_{s}Y_{t}^{*}\left(Y_{s}^{*}\right)^{2}+\left(1+\alpha_{s}\right)\Gamma_{s}\left(Y_{t}^{*}\right)^{2}Y_{s}^{*}-\Phi_{s}\left(Y_{t}^{*}\right)^{3}\right]\nonumber \\
= & \underset{s\left\downarrow t\right.}{\lim}\left\{ \left(1+3\alpha_{s}\right)M_{s}\mathbb{E}_{t}\left[\left(Y_{s}^{*}\right)^{3}\right]-\left(1+2\alpha_{s}\right)N_{s}Y_{t}^{*}\mathbb{E}_{t}\left[\left(Y_{s}^{*}\right)^{2}\right]+\left(1+\alpha_{s}\right)\Gamma_{s}\left(Y_{t}^{*}\right)^{2}\mathbb{E}_{t}\left[Y_{s}^{*}\right]-\Phi_{s}\left(Y_{t}^{*}\right)^{3}\right\} \nonumber \\
\overset{(\ref{eq:3.23-1})}{=} & \left(1+3\alpha_{t}\right)M_{t}\mathbb{E}_{t}\left[\underset{s\left\downarrow t\right.}{\lim}\left(Y_{s}^{*}\right)^{3}\right]-\left(1+2\alpha_{t}\right)N_{t}Y_{t}^{*}\mathbb{E}_{t}\left[\underset{s\left\downarrow t\right.}{\lim}\left(Y_{s}^{*}\right)^{2}\right]+\left(1+\alpha_{t}\right)\Gamma_{t}\left(Y_{t}^{*}\right)^{2}\mathbb{E}_{t}\left[\underset{s\left\downarrow t\right.}{\lim}Y_{s}^{*}\right]-\Phi_{t}\left(Y_{t}^{*}\right)^{3}\nonumber \\
 & \mbox{by Dominated Convergence}\nonumber \\
= & \left[\left(1+3\alpha_{t}\right)M_{t}-\left(1+2\alpha_{t}\right)N_{t}+\left(1+\alpha_{t}\right)\Gamma_{t}-\Phi_{t}\right]\left(Y_{t}^{*}\right)^{3}\nonumber \\
= & \left[\frac{M_{t}-N_{t}+\Gamma_{t}-\Phi_{t}}{-3M_{t}+2N_{t}-\Gamma_{t}}-\alpha_{t}\right]\left(-3M_{t}+2N_{t}-\Gamma_{t}\right)\left(Y_{t}^{*}\right)^{3}\nonumber \\
= & 0
\end{align}
and also $\theta$ is bounded, thus we deduce that 
\begin{equation}
\underset{s\left\downarrow t\right.}{\lim}\mathbb{E}_{t}\left[\Lambda_{s}^{t}\right]=0\label{eq:3.30-1}
\end{equation}
we also have $\widehat{r_{t}}+\alpha_{t}\left|\theta_{t}\right|^{2}=0,\forall t\in\left[0,T\right)$
which implies that
\begin{eqnarray}
 & \left(\widehat{r_{t}}+\alpha_{t}\left|\theta_{t}\right|^{2}\right)Y_{t}^{*} & =0,\forall t\in\left[0,T\right)\nonumber \\
\Rightarrow & \widehat{r_{t}}Y_{t}^{*}+\alpha_{t}\theta_{t}^{T}\theta_{t}Y_{t}^{*} & =0,\forall t\in\left[0,T\right)\nonumber \\
\Rightarrow & \widehat{r_{t}}Y_{t}^{*}+\left(u_{t}^{*}\right)^{T}\theta_{t} & =0,\forall t\in\left[0,T\right)
\end{eqnarray}
which means $\left(u^{*},Y^{*}\right)$ satisfies 
\begin{equation}
\widehat{r_{t}}Y_{t}^{*}+\left(u_{t}^{*}\right)^{T}\theta_{t}=0,\forall t\in\left[0,T\right)\label{eq:3.31-1}
\end{equation}
Then (\ref{eq:3.25-1}), (\ref{eq:3.27-1}), (\ref{eq:3.30-1}) and
(\ref{eq:3.31-1}) are exactly the required conditions in (\ref{eq:3.7-1})
and we deduce that $u^{*}$ is an equilibrium control for the family
of problems in (\ref{eq:3.2-1}) by proposition \ref{prop:3.2-1}.
\\

\noindent From the studies on $h(x)=-\frac{x^{3}}{3}$ and $\frac{x^{4}}{4}$,
we could see that in solving the problem for these two power utility
functions there is a sort of regular pattern in the systems of ODEs
(\ref{eq:3.18}) and (\ref{eq:3.18-1}) obtained above which can be
extended to the same problem for higher order power functions.
\end{proof}
\newpage{}

\section{Utility function $h\left(x\right)=x^{-}$ for stronger risk aversion }

Some investors are risk averse to the extent that they hope to make
the under-performs down to the lowest level. Thus they would like
to use the utility function we use here. In this section, we change
our view from the previous sections to a different one which makes
this section look like that it is not related to the previous ones.

\subsection{Notations and definition of equilibrium for this section}

\noindent Firstly we define $\mathbb{P}$ as our physical measure
and $\mathbb{Q}$ as the risk neutral measure with 
\begin{equation}
\left.\frac{d\mathbb{Q}}{d\mathbb{P}}\right|_{\mathcal{F}_{t}}=\mathcal{E}\left(-\int_{0}^{t}\theta_{s}dW_{s}\right)\label{eq:7.1}
\end{equation}
In this section, when we define sets $L_{\mathcal{F}}^{2}\left(t,T;\mathbb{R}^{l}\right)$
and $L_{\mathcal{G}}^{2}\left(\Omega;\mathbb{R}^{l}\right)$ which
contain the elements that satisfy the corresponding conditions under
both of the probability measure $\mathbb{P}$ and $\mathbb{Q}$. That
is

\begin{align*}
\begin{cases}
L_{\mathcal{F}}^{2}\left(t,T;\mathbb{R}^{l}\right): & \mbox{the set of \ensuremath{\left\{  \mathcal{F}_{s}\right\} } }_{s\in\left[t,T\right]}\mbox{-adapted processes }f=\left\{ f_{s}:t\leq s\leq T\right\} \\
 & \mbox{ with }\mathbb{E}^{\mathbb{P}}\left[\int_{t}^{T}\left|f_{s}\right|^{2}ds\right]<\infty\mbox{ and }\mathbb{E^{\mathbb{Q}}}\left[\int_{t}^{T}\left|f_{s}\right|^{2}ds\right]<\infty\\
L_{\mathcal{G}}^{2}\left(\Omega;\mathbb{R}^{l}\right): & \mbox{the set of random variables \ensuremath{\xi:\left(\Omega,\mathcal{G}\right)\rightarrow\left(\mathbb{R}^{l},\mathcal{B}\left(\mathbb{R}^{l}\right)\right)}}\\
 & \mbox{with }\mathbb{E}^{\mathbb{P}}\left[\left|\xi\right|^{2}\right]<\infty\mbox{ and }\mathbb{E}^{\mathbb{Q}}\left[\left|\xi\right|^{2}\right]<\infty
\end{cases}
\end{align*}
For the reason of simplicity, we just write $\mathbb{E}\left[\cdot\right]$
to represent $\mathbb{E}^{\mathbb{P}}\left[\cdot\right]$ in the remaining
part of the section. \\

\noindent In this section we want to solve the family of following
problems
\begin{align}
\underset{\pi}{\min} & \mbox{ \ensuremath{}\ensuremath{}\ensuremath{}\ensuremath{}\mbox{ \ensuremath{}\ensuremath{}\ensuremath{}\ensuremath{}}\mbox{ \ensuremath{}\ensuremath{}\ensuremath{}\ensuremath{}}}\mathbb{E}_{t}\left[\left(X_{T}-X_{t}e^{\int_{t}^{T}\mu_{s}ds}\right)^{-}\right]\nonumber \\
s.t. & \mbox{ \ensuremath{}\ensuremath{}\ensuremath{}\ensuremath{}\mbox{ \ensuremath{}\ensuremath{}\ensuremath{}\ensuremath{}}}\begin{cases}
dX_{s}=\left(r_{s}X_{s}+\pi_{s}^{T}\sigma_{s}\theta_{s}\right)ds+\pi_{s}^{T}\sigma_{s}dW_{s}\\
X_{t}=x_{t}\\
\pi\in\overline{U_{ad}^{\pi}}=\left\{ \pi\left|\right.\pi\in L_{\mathcal{F}}^{2}\left(0,T;\mathbb{R}^{d}\right)\mbox{and }\mathbb{E}_{t}\left[X_{T}\right]=X_{t}e^{\int_{t}^{T}\mu_{s}ds},\forall t\in\left[0,T\right)\right\} 
\end{cases}\label{eq:2.102-1}
\end{align}
where we assume our drift rate of risky assets $\mu^{x}$ is deterministic
and thus based on our assumptions made above we have that $r$, $\sigma$
and $\theta$ are bounded and deterministic. Here $\mu$ as usual
is our required return process which is assumed to be bounded and
deterministic. \\

\noindent Then by letting $u=\sigma^{T}\pi$ the above family of problems
is equivalent to
\begin{align}
\underset{u}{\min} & \mbox{ \ensuremath{}\ensuremath{}\ensuremath{}\ensuremath{}\mbox{ \ensuremath{}\ensuremath{}\ensuremath{}\ensuremath{}}\mbox{ \ensuremath{}\ensuremath{}\ensuremath{}\ensuremath{}}}\mathbb{E}_{t}\left[\left(X_{t}e^{\int_{t}^{T}\mu_{s}ds}-X_{T}\right)^{+}\right]\nonumber \\
s.t. & \mbox{ \ensuremath{}\ensuremath{}\ensuremath{}\ensuremath{}\mbox{ \ensuremath{}\ensuremath{}\ensuremath{}\ensuremath{}}}\begin{cases}
dX_{s}=\left(r_{s}X_{s}+u_{s}^{T}\theta_{s}\right)ds+u_{s}^{T}dW_{s}\\
X_{t}=x_{t}\\
u\in\overline{U_{ad}^{u}}=\left\{ u\left|\right.u\in L_{\mathcal{F}}^{2}\left(0,T;\mathbb{R}^{d}\right)\mbox{and }\mathbb{E}_{t}\left[X_{T}\right]=X_{t}e^{\int_{t}^{T}\mu_{s}ds},\forall t\in\left[0,T\right)\right\} 
\end{cases}\label{eq:4.1-2}
\end{align}
Also in this section we use the following definition of equilibrium
which is different from the one used by previous sections.

\paragraph{\textmd{Given a control $u^{*}$, for any $t\in\left[0,T\right)$,
$\varepsilon>0$ and $v\in L_{\mathcal{F}}^{2}\left(t,t+\varepsilon;\mathbb{R}^{d}\right)$,
we define }}

\begin{equation}
u_{s}^{t,\varepsilon,v}=u_{s}^{*}+v_{s}\mathbf{1}_{s\in\left[t,t+\varepsilon\right)},\mbox{ }s\in\left[t,T\right]\label{eq:4.3}
\end{equation}

\begin{defn}
\label{4.1}Let $u^{*}\in U_{ad}$ be a given control with $U_{ad}$
being the set of admissible controls. Let $X^{*}$ be the state process
corresponding to $u^{*}$. The control $u^{*}$ is called an equilibrium
if for any $t\in\left[0,T\right)$, $\exists$ $\delta>0$, s.t. for
any $\varepsilon\in\left(0,\delta\right)$ and $v\in L_{\mathcal{F}}^{2}\left(t,t+\varepsilon;\mathbb{R}^{d}\right)$
s.t. $u^{t,\varepsilon,v}\in U_{ad}$, we have that 

\begin{equation}
J\left(t,X_{t}^{*};u^{t,\varepsilon,v}\right)-J\left(t,X_{t}^{*};u^{*}\right)\geq0
\end{equation}
where $u^{t,\varepsilon,v}$ is defined by (\ref{eq:4.3}).
\end{defn}

\subsection{Details of finding an equilibrium}

Firstly we consider the family of following problems 
\begin{align}
\underset{u}{\min} & \mbox{ \ensuremath{}\ensuremath{}\ensuremath{}\ensuremath{}\mbox{ \ensuremath{}\ensuremath{}\ensuremath{}\ensuremath{}}\mbox{ \ensuremath{}\ensuremath{}\ensuremath{}\ensuremath{}}}\mathbb{E}_{t}\left[\left(X_{t}e^{\int_{t}^{T}\mu_{s}ds}-X_{T}\right)^{+}\right]\nonumber \\
s.t. & \mbox{ \ensuremath{}\ensuremath{}\ensuremath{}\ensuremath{}\mbox{ \ensuremath{}\ensuremath{}\ensuremath{}\ensuremath{}}}\begin{cases}
dX_{s}=\left(r_{s}X_{s}+u_{s}^{T}\theta_{s}\right)ds+u_{s}^{T}dW_{s}\\
X_{t}=x_{t}\\
u\in U_{ad}^{u}=\left\{ u\left|\right.u\in L_{\mathcal{F}}^{2}\left(0,T;\mathbb{R}^{d}\right)\right\} 
\end{cases}\label{eq:4.2}
\end{align}
Let $X^{t,\varepsilon,v}$ be the state process corresponding to $u^{t,\varepsilon,v}$.
We set
\begin{equation}
\Delta_{s}^{t,\varepsilon,v}=X_{s}^{t,\varepsilon,v}-X_{s}^{*}
\end{equation}
then by letting
\begin{eqnarray*}
\bar{\Delta}_{s}^{t,\varepsilon,v} & = & \Delta_{s}^{t,\varepsilon,v}e^{-\int_{t}^{s}r_{u}du}\\
\bar{X}_{s}^{t,\varepsilon,v} & = & X_{s}^{t,\varepsilon,v}e^{-\int_{t}^{s}r_{u}du}\\
\bar{X}_{s}^{*} & = & X_{s}^{*}e^{-\int_{t}^{s}r_{u}du}\\
\bar{u}_{s}^{*} & = & u_{s}^{*}e^{-\int_{t}^{s}r_{u}du}\\
\bar{u}_{s}^{t,\varepsilon,v} & = & u_{s}^{t,\varepsilon,v}e^{-\int_{t}^{s}r_{u}du}\\
\bar{v}_{s} & = & v_{s}e^{-\int_{t}^{s}r_{u}du}
\end{eqnarray*}
we have
\begin{equation}
d\bar{\Delta}_{s}^{t,\varepsilon,v}=\left(\bar{u}_{s}^{t,\varepsilon,v}-\bar{u}_{s}^{*}\right)^{T}dW_{s}^{\mathbb{Q}}
\end{equation}
and
\begin{eqnarray}
\bar{\Delta}_{T}^{t,\varepsilon,v} & = & \int_{t}^{t+\varepsilon}\bar{v}_{s}^{T}dW_{s}^{\mathbb{Q}}\nonumber \\
 & = & \bar{\Delta}_{t+\varepsilon}^{t,\varepsilon,v}\label{eq:4.5}
\end{eqnarray}
and we have that 
\[
\mathbb{E}_{t}^{\mathbb{Q}}\left[\bar{\Delta}_{t+\epsilon}^{t,\varepsilon,v}\right]=0
\]
and thus
\begin{equation}
\mathbb{E}_{t}\left[\bar{\Delta}_{t+\varepsilon}^{t,\varepsilon,v}\mathcal{E}\left(-\int_{t}^{t+\varepsilon}\theta_{s}dW_{s}\right)\right]=0\label{eq:4.55}
\end{equation}
Then for any $t\in\left[0,T\right)$, $v\in L_{\mathcal{F}}^{2}\left(t,t+\varepsilon;\mathbb{R}^{d}\right)$
and $\varepsilon\in\left(0,\delta\right)$ for some $\delta>0$ we
have 
\begin{eqnarray}
 &  & J\left(t,X_{t}^{*};u^{t,\varepsilon,v}\right)-J\left(t,X_{t}^{*};u^{*}\right)\nonumber \\
 & = & \mathbb{E}_{t}\left[\left(X_{t}^{*}e^{\int_{t}^{T}\mu_{s}ds}-X_{T}^{t,\varepsilon,v}\right)^{+}-\left(X_{t}^{*}e^{\int_{t}^{T}\mu_{s}ds}-X_{T}^{*}\right)^{+}\right]\nonumber \\
 & = & \mathbb{E}_{t}\left[\left(X_{t}^{*}e^{\int_{t}^{T}\mu_{s}ds}-X_{T}^{*}-\Delta_{T}^{t,\varepsilon,v}\right)^{+}-\left(X_{t}^{*}e^{\int_{t}^{T}\mu_{s}ds}-X_{T}^{*}\right)^{+}\right]\label{eq:4.4}
\end{eqnarray}
By definition \ref{4.1} and (\ref{eq:4.4}) , we could deduce that
$u$$^{*}$ is an equilibrium for our problem (\ref{eq:4.2}) if and
only if for any $t\in\left[0,T\right)$, $\exists$ $\delta>0$, s.t.
for any $\varepsilon\in\left(0,\delta\right)$ we have $v=0$ is optimal
to the following problem

\begin{align}
\underset{v}{\min} & \mbox{ \ensuremath{}\ensuremath{}\ensuremath{}\ensuremath{}\mbox{ \ensuremath{}\ensuremath{}\ensuremath{}\ensuremath{}}\mbox{ \ensuremath{}\ensuremath{}\ensuremath{}\ensuremath{}}}\mathbb{E}_{t}\left[\left(X_{t}^{*}e^{\int_{t}^{T}\mu_{s}ds}-X_{T}^{*}-\Delta_{T}^{t,\varepsilon,v}\right)^{+}-\left(X_{t}^{*}e^{\int_{t}^{T}\mu_{s}ds}-X_{T}^{*}\right)^{+}\right]\nonumber \\
s.t. & \mbox{ \ensuremath{}\ensuremath{}\ensuremath{}\ensuremath{}\mbox{ \ensuremath{}\ensuremath{}\ensuremath{}\ensuremath{}}}\begin{cases}
d\bar{\Delta}_{s}^{t,\varepsilon,v}=\left(\bar{u}_{s}^{t,\varepsilon,v}-\bar{u}_{s}^{*}\right)^{T}dW_{s}^{\mathbb{Q}}\\
v\in L_{\mathcal{F}}^{2}\left(t,t+\varepsilon;\mathbb{R}^{d}\right)
\end{cases}\label{eq:4.6}
\end{align}
which is equivalent to say that $v=0$ is optimal to the following
problem for the given $t$ and $\varepsilon$ 
\begin{align}
\underset{v}{\min} & \mbox{ \ensuremath{}\ensuremath{}\ensuremath{}\ensuremath{}\mbox{ \ensuremath{}\ensuremath{}\ensuremath{}\ensuremath{}}\mbox{ \ensuremath{}\ensuremath{}\ensuremath{}\ensuremath{}}}e^{-\int_{t}^{T}r_{s}ds}\mathbb{E}_{t}\left[\left(X_{t}^{*}e^{\int_{t}^{T}\mu_{s}ds}-X_{T}^{*}-\Delta_{T}^{t,\varepsilon,v}\right)^{+}\right]\nonumber \\
s.t. & \mbox{ \ensuremath{}\ensuremath{}\ensuremath{}\ensuremath{}\mbox{ \ensuremath{}\ensuremath{}\ensuremath{}\ensuremath{}}}\begin{cases}
d\bar{\Delta}_{s}^{t,\varepsilon,v}=\bar{v}_{s}^{T}\mathbf{1}_{s\in\left[t,t+\varepsilon\right)}dW_{s}^{\mathbb{Q}}\\
v\in L_{\mathcal{F}}^{2}\left(t,t+\varepsilon;\mathbb{R}^{d}\right)
\end{cases}\label{eq:4.6-2}
\end{align}
which can be written as 
\begin{align}
\underset{v}{\min} & \mbox{ \ensuremath{}\ensuremath{}\ensuremath{}\ensuremath{}\mbox{ \ensuremath{}\ensuremath{}\ensuremath{}\ensuremath{}}\mbox{ \ensuremath{}\ensuremath{}\ensuremath{}\ensuremath{}}}\mathbb{E}_{t}\left[\left(X_{t}^{*}e^{\int_{t}^{T}\mu_{s}-r_{s}ds}-\bar{X}_{T}^{*}-\bar{\Delta}_{T}^{t,\varepsilon,v}\right)^{+}\right]\nonumber \\
s.t. & \mbox{ \ensuremath{}\ensuremath{}\ensuremath{}\ensuremath{}\mbox{ \ensuremath{}\ensuremath{}\ensuremath{}\ensuremath{}}}\begin{cases}
d\bar{\Delta}_{s}^{t,\varepsilon,v}=\bar{v}_{s}^{T}\mathbf{1}_{s\in\left[t,t+\varepsilon\right)}dW_{s}^{\mathbb{Q}}\\
v\in L_{\mathcal{F}}^{2}\left(t,t+\varepsilon;\mathbb{R}^{d}\right)
\end{cases}\label{eq:4.6-2-1}
\end{align}
by (\ref{eq:4.5}) which can also be written as 
\begin{align}
\underset{v}{\min} & \mbox{ \ensuremath{}\ensuremath{}\ensuremath{}\ensuremath{}\mbox{ \ensuremath{}\ensuremath{}\ensuremath{}\ensuremath{}}\mbox{ \ensuremath{}\ensuremath{}\ensuremath{}\ensuremath{}}}\mathbb{E}_{t}\left[\left(X_{t}^{*}e^{\int_{t}^{T}\mu_{s}-r_{s}ds}-\bar{X}_{T}^{*}-\bar{\Delta}_{t+\varepsilon}^{t,\varepsilon,v}\right)^{+}\right]\nonumber \\
s.t. & \mbox{ \ensuremath{}\ensuremath{}\ensuremath{}\ensuremath{}\mbox{ \ensuremath{}\ensuremath{}\ensuremath{}\ensuremath{}}}\begin{cases}
\bar{\Delta}_{t+\varepsilon}^{t,\varepsilon,v}=\int_{t}^{t+\varepsilon}\bar{v}_{s}^{T}dW_{s}^{\mathbb{Q}}\\
v\in L_{\mathcal{F}}^{2}\left(t,t+\varepsilon;\mathbb{R}^{d}\right)
\end{cases}\label{eq:4.11}
\end{align}
then the statement $v=0$ is optimal to the problem (\ref{eq:4.11})
is equivalent to the statement that $\tilde{\triangle}_{t+\varepsilon}=0$
is optimal to the following problem 
\begin{align}
\underset{\tilde{\triangle}_{t+\varepsilon}}{\min} & \mbox{ \ensuremath{}\ensuremath{}\ensuremath{}\ensuremath{}\mbox{ \ensuremath{}\ensuremath{}\ensuremath{}\ensuremath{}}\mbox{ \ensuremath{}\ensuremath{}\ensuremath{}\ensuremath{}}}\mathbb{E}_{t}\left[\left(X_{t}^{*}e^{\int_{t}^{T}\mu_{s}-r_{s}ds}-\bar{X}_{T}^{*}-\tilde{\triangle}_{t+\varepsilon}\right)^{+}\right]\nonumber \\
s.t. & \mbox{ \ensuremath{}\ensuremath{}\ensuremath{}\ensuremath{}\mbox{ \ensuremath{}\ensuremath{}\ensuremath{}\ensuremath{}}}\begin{cases}
\mathbb{E}_{t}^{\mathbb{Q}}\left[\tilde{\triangle}_{t+\varepsilon}\right]=0\\
\tilde{\triangle}_{t+\varepsilon}\in L_{\mathcal{F}_{t+\varepsilon}}^{2}\left(\Omega;\mathbb{R}\right)
\end{cases}\label{eq:4.12}
\end{align}
which means that given any $v\in L_{\mathcal{F}}^{2}\left(t,t+\varepsilon;\mathbb{R}^{d}\right)$
the corresponding $\bar{\Delta}_{t+\varepsilon}^{t,\varepsilon,v}$
is an admissible $\tilde{\triangle}_{t+\varepsilon}$ in problem (\ref{eq:4.12}),
and given any admissible $\tilde{\triangle}_{t+\varepsilon}$ in problem
(\ref{eq:4.12}) we could find a $v\in L_{\mathcal{F}}^{2}\left(t,t+\varepsilon;\mathbb{R}^{d}\right)$
s.t. $\tilde{\triangle}_{t+\varepsilon}=\bar{\Delta}_{t+\varepsilon}^{t,\varepsilon,v}=\int_{t}^{t+\varepsilon}\bar{v}_{s}^{T}dW_{s}^{\mathbb{Q}}$.
This is shown as follows:
\begin{proof}
On the one hand, we have as $\theta$ is bounded that $ $ 

\begin{eqnarray}
\forall v\in L_{\mathcal{F}}^{2}\left(t,t+\varepsilon;\mathbb{R}^{d}\right)\Rightarrow\begin{cases}
\mathbb{E}_{t}^{\mathbb{Q}}\left[\bar{\Delta}_{t+\varepsilon}^{t,\varepsilon,v}\right] & =\mathbb{E}_{t}^{\mathbb{Q}}\left[\int_{t}^{t+\varepsilon}\bar{v}_{s}^{T}dW_{s}^{\mathbb{Q}}\right]\\
 & =0\\
\mathbb{E}\left[\left|\bar{\Delta}_{t+\varepsilon}^{t,\varepsilon,v}\right|^{2}\right] & =\mathbb{E}\left[\left|\int_{t}^{t+\varepsilon}\bar{v}_{s}^{T}\left(dW_{s}+\theta_{s}ds\right)\right|^{2}\right]\\
 & \leq\mathbb{E}\left[\left(\left|\int_{t}^{t+\varepsilon}\bar{v}_{s}^{T}dW_{s}\right|+\left|\int_{t}^{t+\varepsilon}\bar{v}_{s}^{T}\theta_{s}ds\right|\right)^{2}\right]\\
 & \leq2\mathbb{E}\left[\left|\int_{t}^{t+\varepsilon}\bar{v}_{s}^{T}dW_{s}\right|^{2}+\left|\int_{t}^{t+\varepsilon}\bar{v}_{s}^{T}\theta_{s}ds\right|^{2}\right]\\
 & \leq2\left(\mathbb{E}\left[\left|\int_{t}^{t+\varepsilon}\bar{v}_{s}^{T}dW_{s}\right|^{2}\right]+\mathbb{E}\left[\varepsilon\int_{t}^{t+\varepsilon}\left|\bar{v}_{s}^{T}\theta_{s}\right|^{2}ds\right]\right)\\
 & \leq2\left(\mathbb{E}\int_{t}^{t+\varepsilon}\left|\bar{v}_{s}\right|^{2}d{}_{s}+\varepsilon\mathbb{E}\left[\int_{t}^{t+\varepsilon}\left|\bar{v}_{s}\right|^{2}\left|\theta_{s}\right|^{2}ds\right]\right)\\
 & <\infty\\
\mathbb{E}_{t}^{\mathbb{Q}}\left[\left|\bar{\Delta}_{t+\varepsilon}^{t,\varepsilon,v}\right|^{2}\right] & =\mathbb{E}_{t}^{\mathbb{Q}}\left[\left|\int_{t}^{t+\varepsilon}\bar{v}_{s}^{T}dW_{s}^{\mathbb{Q}}\right|^{2}\right]\\
 & =\mathbb{E}_{t}^{\mathbb{Q}}\left[\int_{t}^{t+\varepsilon}\left|\bar{v}_{s}\right|^{2}d{}_{s}\right]\\
 & <\infty
\end{cases}
\end{eqnarray}
which means the corresponding $\bar{\Delta}_{t+\varepsilon}^{t,\varepsilon,v}\in L_{\mathcal{F}_{t+\varepsilon}}^{2}\left(\Omega;\mathbb{R}\right)$
and $\mathbb{E}_{t}^{\mathbb{Q}}\left[\bar{\Delta}_{t+\varepsilon}^{t,\varepsilon,v}\right]=0$,
thus $\bar{\Delta}_{t+\varepsilon}^{t,\varepsilon,v}$ is an admissible
$\tilde{\triangle}_{t+\varepsilon}$ in problem (\ref{eq:4.12}).
\\

\noindent On the other hand, $\forall\tilde{\triangle}_{t+\varepsilon}\in L_{\mathcal{F}_{t+\varepsilon}}^{2}\left(\Omega;\mathbb{R}\right)$
$ $with $\mathbb{E}_{t}^{\mathbb{Q}}\left[\tilde{\triangle}_{t+\varepsilon}\right]=0$
we have $\tilde{\triangle}_{t+\varepsilon}$ is $\mathcal{F}_{t+\varepsilon}$
measurable and that 

\noindent Firstly, $\mathbb{E}^{\mathbb{}}\left[\left|\tilde{\triangle}_{t+\varepsilon}\right|^{2}\right]<\infty$
implies the following Lipschitz BSDE 
\[
\begin{cases}
d\bar{\Delta}_{s}^{t,\varepsilon,v} & =\bar{v}_{s}^{T}\mathbf{1}_{s\in\left[t,t+\varepsilon\right)}\left(dW_{s}+\theta_{s}ds\right)\\
\bar{\Delta}_{t+\varepsilon}^{t,\varepsilon,v} & =\tilde{\triangle}_{t+\varepsilon}
\end{cases}
\]

\noindent $ $admits a unique solution $\left(\bar{\Delta}^{t,\varepsilon,v_{1}},\bar{v}_{1}\right)$
with $\mathbb{E}\left[\int_{t}^{t+\varepsilon}\left|\left(\bar{v}_{1}\right)_{s}\right|^{2}ds\right]<\infty$
under $\mathbb{P}$ s.t. 
\begin{eqnarray*}
\tilde{\triangle}_{t+\varepsilon} & =\bar{\Delta}_{t+\varepsilon}^{t,\varepsilon,v_{1}} & =\int_{t}^{t+\varepsilon}\left(\bar{v}_{1}\right)_{s}^{T}\left(dW_{s}+\theta_{s}ds\right)\\
 &  & =\int_{t}^{t+\varepsilon}\left(\bar{v}_{1}\right)_{s}^{T}dW_{s}^{\mathbb{Q}}
\end{eqnarray*}

\noindent Secondly, $\mathbb{E}^{\mathbb{Q}}\left[\left|\tilde{\triangle}_{t+\varepsilon}\right|^{2}\right]<\infty$
implies by using martingale representation theorem under $\mathbb{Q}$
that there exists a unique $\bar{v}_{2}$ with $\mathbb{E^{\mathbb{Q}}}\left[\int_{t}^{t+\varepsilon}\left|\left(\bar{v}_{2}\right)_{s}\right|^{2}ds\right]<\infty$
s.t. 
\begin{eqnarray*}
\tilde{\triangle}_{t+\varepsilon} & = & \mathbb{E}_{t}^{\mathbb{Q}}\left[\tilde{\triangle}_{t+\varepsilon}\right]+\int_{t}^{t+\varepsilon}\left(\bar{v}_{2}\right)_{s}^{T}dW_{s}^{\mathbb{Q}}\\
 & = & \int_{t}^{t+\varepsilon}\left(\bar{v}_{2}\right)_{s}^{T}dW_{s}^{\mathbb{Q}}
\end{eqnarray*}

\noindent Then the above two equations for $\tilde{\triangle}_{t+\varepsilon}$
implies that $\bar{v}_{1}=\bar{v}_{2}$, which means there exists
a $v\in L_{\mathcal{F}}^{2}\left(t,t+\varepsilon;\mathbb{R}^{d}\right)$
with $\bar{v}=\bar{v}_{1}=\bar{v}_{2}$ s.t. $\tilde{\triangle}_{t+\varepsilon}=\int_{t}^{t+\varepsilon}\bar{v}_{s}^{T}dW_{s}^{\mathbb{Q}}$
\end{proof}
\noindent Then by (\ref{eq:7.1}), the problem (\ref{eq:4.12}) for
the given $t$ and $\varepsilon$ can be written as

\begin{align}
\underset{\tilde{\triangle}_{t+\varepsilon}}{\min} & \mbox{ \ensuremath{}\ensuremath{}\ensuremath{}\ensuremath{}\mbox{ \ensuremath{}\ensuremath{}\ensuremath{}\ensuremath{}}\mbox{ \ensuremath{}\ensuremath{}\ensuremath{}\ensuremath{}}}\mathbb{E}_{t}\left[\left(X_{t}^{*}e^{\int_{t}^{T}\mu_{s}-r_{s}ds}-\bar{X}_{T}^{*}-\tilde{\triangle}_{t+\varepsilon}\right)^{+}\right]\nonumber \\
s.t. & \mbox{ \ensuremath{}\ensuremath{}\ensuremath{}\ensuremath{}\mbox{ \ensuremath{}\ensuremath{}\ensuremath{}\ensuremath{}}}\begin{cases}
\mathbb{E}_{t}\left[\tilde{\triangle}_{t+\varepsilon}\mathcal{E}\left(-\int_{t}^{t+\varepsilon}\theta_{s}dW_{s}\right)\right]=0\\
\tilde{\triangle}_{t+\varepsilon}\in L_{\mathcal{F}_{t+\varepsilon}}^{2}\left(\Omega;\mathbb{R}\right)
\end{cases}\label{eq:4.13}
\end{align}
which could be transformed to the following problem using Lagrangian
multiplier method
\begin{align}
\underset{\tilde{\triangle}_{t+\varepsilon}}{\min} & \mbox{ \ensuremath{}\ensuremath{}\ensuremath{}\ensuremath{}\mbox{ \ensuremath{}\ensuremath{}\ensuremath{}\ensuremath{}}\mbox{ \ensuremath{}\ensuremath{}\ensuremath{}\ensuremath{}}}\mathbb{E}_{t}\left[\left(X_{t}^{*}e^{\int_{t}^{T}\mu_{s}-r_{s}ds}-\bar{X}_{T}^{*}-\tilde{\triangle}_{t+\varepsilon}\right)^{+}-\lambda\tilde{\triangle}_{t+\varepsilon}\mathcal{E}\left(-\int_{t}^{t+\varepsilon}\theta_{s}dW_{s}\right)\right]\nonumber \\
s.t. & \mbox{ \ensuremath{}\ensuremath{}\ensuremath{}\ensuremath{}\mbox{ \ensuremath{}\ensuremath{}\ensuremath{}\ensuremath{}}}\begin{cases}
\tilde{\triangle}_{t+\varepsilon}\in L_{\mathcal{F}_{t+\varepsilon}}^{2}\left(\Omega;\mathbb{R}\right)\end{cases}\label{eq:4.14}
\end{align}
$ $so we have deduced that $u$$^{*}$ is an equilibrium for our
problem (\ref{eq:4.2}) if and only if for any $t\in\left[0,T\right)$,
$\exists$ $\delta>0$, s.t. for any $\varepsilon\in\left(0,\delta\right)$
we have that $\tilde{\triangle}_{t+\varepsilon}=0$ is optimal to
the above problem (\ref{eq:4.14}).\\

\noindent Then problem (\ref{eq:4.14}) can be written as
\begin{align}
\underset{\tilde{\triangle}_{t+\varepsilon}}{\min} & \mbox{ \ensuremath{}\ensuremath{}\ensuremath{}\ensuremath{}\mbox{ \ensuremath{}\ensuremath{}\ensuremath{}\ensuremath{}}\mbox{ \ensuremath{}\ensuremath{}\ensuremath{}\ensuremath{}}}\mathbb{E}_{t}\left[f(\tilde{\triangle}_{t+\varepsilon})-\lambda\tilde{\triangle}_{t+\varepsilon}\mathcal{E}\left(-\int_{t}^{t+\varepsilon}\theta_{s}dW_{s}\right)\right]\nonumber \\
s.t. & \mbox{ \ensuremath{}\ensuremath{}\ensuremath{}\ensuremath{}\mbox{ \ensuremath{}\ensuremath{}\ensuremath{}\ensuremath{}}}\begin{cases}
\tilde{\triangle}_{t+\varepsilon}\in L_{\mathcal{F}_{t+\varepsilon}}^{2}\left(\Omega;\mathbb{R}\right)\end{cases}\label{eq:4.13-1-1-1}
\end{align}
where $f(y)=\left(X_{t}^{*}e^{\int_{t}^{T}\mu_{s}-r_{s}ds}-\bar{X}_{T}^{*}-y\right)^{+}$
and let $f(y,w)=\left[\left(X_{t}^{*}e^{\int_{t}^{T}\mu_{s}-r_{s}ds}-\bar{X}_{T}^{*}-y\right)^{+}\right]\left(w\right),\forall w\in\Omega$\\

\noindent since $f(\tilde{\triangle}_{t+\varepsilon}\left(w\right),w)-\lambda\tilde{\triangle}_{t+\varepsilon}\left(w\right)\mathcal{E}\left(-\int_{t}^{t+\varepsilon}\theta_{s}dW_{s}\right)\left(w\right)$
is $\mbox{}$a convex and differentiable function with respect to
$\tilde{\triangle}_{t+\varepsilon}$$\left(w\right)$ for any $w\in\Omega$,
then we deduce that $\tilde{\triangle}_{t+\varepsilon}=0$ is optimal
if and only if
\begin{equation}
f^{\prime}(0,w)-\lambda\mathcal{E}\left(-\int_{t}^{t+\varepsilon}\theta_{s}dW_{s}\right)\left(w\right)=0,\forall w\in\Omega\label{eq:4.17}
\end{equation}
and we have
\begin{equation}
f^{\prime}(0,w)=\left[-\mathbf{1}{}_{\left(0<X_{t}^{*}e^{\int_{t}^{T}\mu_{s}-r_{s}ds}-\bar{X}_{T}^{*}\right)}\right]\left(w\right)
\end{equation}
since $f^{\prime}(0,w)\in\left[-1,0\right]$, while $\mathcal{E}\left(-\int_{t}^{t+\varepsilon}\theta_{s}dW_{s}\right)\left(w\right)$
could blow up towards $\infty$, we must have $\lambda=0$ to make
(\ref{eq:4.17}) achievable and thus we have
\[
f^{\prime}(0,w)=0,\forall w\in\Omega
\]
which is again achievable if and only if 
\begin{eqnarray*}
X_{t}^{*}e^{\int_{t}^{T}\mu_{s}-r_{s}ds} & \leq & \bar{X}_{T}^{*}\\
 & = & X_{t}^{*}+\int_{t}^{T}\left(\bar{u}_{s}^{*}\right)^{T}dW_{s}^{\mathbb{Q}}
\end{eqnarray*}
which means
\begin{equation}
\int_{t}^{T}\left(\bar{u}_{s}^{*}\right)^{T}dW_{s}^{\mathbb{Q}}\geq X_{t}^{*}\left(e^{\int_{t}^{T}\mu_{s}-r_{s}ds}-1\right)\label{eq:4.19}
\end{equation}
thus we have showed that $\tilde{\triangle}_{t+\varepsilon}=0$ is
optimal to problem (\ref{eq:4.14}) if and only if $ $(\ref{eq:4.19})
is satisfied. Since we have also showed above that $u$$^{*}$ is
an equilibrium for our problem (\ref{eq:4.2}) if and only if for
any $t\in\left[0,T\right)$, $\exists$ $\delta>0$, s.t. for any
$\varepsilon\in\left(0,\delta\right)$ we have that $\tilde{\triangle}_{t+\varepsilon}=0$
is optimal to problem (\ref{eq:4.14}), so we conclude by definition
\ref{4.1} that 
\begin{itemize}
\item $u$$^{*}$ is an equilibrium for the family of problems (\ref{eq:4.2})
if and only if for any $t\in\left[0,T\right)$, $\exists$ $\delta>0$,
s.t. for any $\varepsilon\in\left(0,\delta\right)$ 
\[
\int_{t}^{T}\left(\bar{u}_{s}^{*}\right)^{T}dW_{s}^{\mathbb{Q}}\geq X_{t}^{*}\left(e^{\int_{t}^{T}\mu_{s}-r_{s}ds}-1\right)
\]

\item since $\mathbb{E}_{t}\left[X_{T}^{*}\right]=X_{t}^{*}e^{\int_{t}^{T}\mu_{s}ds}$
is equivalent to $X_{t}^{*}\left(e^{\int_{t}^{T}\mu_{s}-r_{s}ds}-1\right)=\int_{t}^{T}\mathbb{E}_{t}\left[\left(\bar{u}_{s}^{*}\right)^{T}\theta_{s}\right]ds$,
then $u$$^{*}\in\overline{U_{ad}^{u}}$ and thus is an equilibrium
for the family of problems (\ref{eq:4.1-2}) if for any $t\in\left[0,T\right)$,
$\exists$ $\delta>0$, s.t. for any $\varepsilon\in\left(0,\delta\right)$
we have 
\[
\begin{cases}
\int_{t}^{T}\left(\bar{u}_{s}^{*}\right)^{T}dW_{s}^{\mathbb{Q}}\geq X_{t}^{*}\left(e^{\int_{t}^{T}\mu_{s}-r_{s}ds}-1\right)\mbox{\ensuremath{\mbox{ }}and } & \mbox{}\\
\int_{t}^{T}\mathbb{E}_{t}\left[\left(\bar{u}_{s}^{*}\right)^{T}\theta_{s}\right]ds=X_{t}^{*}\left(e^{\int_{t}^{T}\mu_{s}-r_{s}ds}-1\right)
\end{cases}
\]

\end{itemize}
It is clear that $u$$^{*}=0$ is an equilibrium for (\ref{eq:4.2})
when $\mu\leq r$ and $x_{0}\geq0$ with the corresponding state process
$X_{t}^{*}=x_{0}e^{\int_{0}^{t}r_{s}ds}$. And this $u$$^{*}=0$
is also an equilibrium for (\ref{eq:4.1-2}) when $r=\mu$.

\section{Conclusion}

In this paper, we have studied the time inconsistent stochastic control
problems of portfolio selection by using different utility functions
with a moving target that need to be met. And we solve our problem
by finding equilibrium controls under our definition as the optimal
controls. This paper has also posed some open questions during the
procedure of solving our problems such as how to prove the existence
of solutions for our derived system of ODEs when we solve our family
of problems using utility function $h(x)=-\frac{x^{3}}{3}$ and $h(x)=\frac{x^{4}}{4}$,
which could be good further research topics in this area.

\noindent 
\[
\]

\end{document}